 \newtheorem{theorem}{Theorem}[section]
 \newtheorem{lem}{Lemma}[section]
 \newtheorem{rem}{Remark}[section]
 \newtheorem{prop}{Proposition}[section]
\newcounter{hypA}
 \newenvironment{hypA}{\refstepcounter{hypA}\begin{itemize}
   \item[({\bf A\arabic{hypA}})]}{\end{itemize}}
\begin{document}

\begin{center}

{\Large \textbf{Static Parameter Estimation for ABC Approximations of Hidden Markov Models}}

\vspace{0.5cm}

BY ELENA EHRLICH$^{1}$, AJAY JASRA$^{2}$ \& NIKOLAS KANTAS$^{3}$ 

{\footnotesize $^{1}$Department of Mathematics,
Imperial College London, London, SW7 2AZ, UK.}\\
{\footnotesize E-Mail:\,}\texttt{\emph{\footnotesize elena.ehrlich05@ic.ac.uk}}\\
{\footnotesize $^{2}$Department of Statistics \& Applied Probability,
National University of Singapore, Singapore, 117546, SG.}\\
{\footnotesize E-Mail:\,}\texttt{\emph{\footnotesize staja@nus.edu.sg}}\\
{\footnotesize $^{3}$Department of Statistical Science,
University College London, London, WC1E 6BT, UK.}\\
{\footnotesize E-Mail:\,}\texttt{\emph{\footnotesize n.kantas@ucl.ac.uk}}
\end{center}

\begin{abstract}
In this article we focus on Maximum Likelihood estimation (MLE) for the
static parameters of hidden Markov models (HMMs).
We will consider
the case where one cannot or does not want to compute the conditional
likelihood density of the observation given the hidden state because of increased computational complexity or analytical intractability. Instead we
will assume that one may obtain samples from this conditional likelihood
and hence use approximate Bayesian computation (ABC) approximations
of the original HMM. ABC approximations are biased, but the bias
  can be controlled to arbitrary precision via a parameter
  $\epsilon>0$; the bias typically goes to zero as $\epsilon \searrow
  0$.  We first establish that the bias in the log-likelihood and
  gradient of the log-likelihood of the ABC approximation, for a fixed
  batch of data, is no worse than $\mathcal{O}(n\epsilon)$, $n$ being the number of data; hence, for computational reasons, one might expect reasonable parameter estimates using such an ABC approximation. Turning to the computational problem of estimating $\theta$, we propose, using the ABC-sequential Monte Carlo (SMC) algorithm in \cite{jasra}, an approach based upon simultaneous perturbation stochastic approximation (SPSA). Our method is investigated on two numerical examples.\\
  \textbf{Key-Words}: Approximate Bayesian Computation, Hidden Markov
  Models, Parameter Estimation, Sequential Monte Carlo
\end{abstract}

\section{Introduction}
Hidden Markov models provide a flexible description of a wide variety
of real-life phenomena; see \cite{cappe} for an overview. An HMM\ is a
pair of discrete-time stochastic processes, $\left\{
  X_{n}\right\}_{n\mathbb{\geq}0}$ and $\left\{ Y_{n}\right\}
_{n\geq1}$, where $X_{n}\in\mathsf{X}\subseteq\mathbb{R}^{d_{x}}$ is
an unobserved process and
$y_{n}\in\mathsf{Y}\subseteq\mathbb{R}^{d_{y}}$ is observed. The
hidden process $\left\{ X_{n}\right\} _{n\mathbb{\geq}0}$ is a\ Markov
chain with initial density $\mu_{\theta}(x_0)$ at time $0$ and
transition density $f_{\theta}\left(x_{n}|x_{n-1}\right) $, with
$\theta\in\Theta\subseteq\mathbb{R}^{d_{\theta}}$ i.e.
\begin{equation}
\mathbb{P}_{\theta}(X_{0}\in A)=\int_{A}\mu_{\theta}(x_0)dx_0\quad\text{ and }%
\quad\mathbb{P}_{\theta}(X_{n}\in A|X_{n-1}=x_{n-1})=\int_{A}f_{\theta}(x_{n}|x_{n-1}%
)dx_{n}\quad n\geq1 \label{eq:evol}%
\end{equation}
where $\mathbb{P}_{\theta}$ denotes probability,
$A\in\mathsf{B}(\mathsf{X})$ and $dx_{n}$ is Lebesgue measure.  In
addition, the observations $\left\{ Y_{n}\right\} _{n\geq1}$\
conditioned upon $\left\{ X_{n}\right\} _{n\mathbb{\geq}0}$ are
statistically independent and have marginal density
$g_{\theta}\left(y_{n}|x_{n}\right) $, i.e.%
\begin{equation}
\mathbb{P}_{\theta}(Y_{n}\in B|\{X_{k}\}_{k\geq 0}=\{x_{k}\}_{k\geq 0})=\int_{B}%
g_{\theta}(y_{n}|x_{n})dy_{n}\quad n\geq1 \label{eq:obs}%
\end{equation}
with $B\in\mathsf{B}(\mathsf{Y})$. The HMM is given by equations
(\ref{eq:evol})-(\ref{eq:obs}) and is often referred to in the
literature as a state-space model.  Here $\theta$ is a static
parameter, which is to be estimated in using MLE and online
as the data arrive; this problem has a large range of real
applications such as financial modelling or weather prediction.

Statistical inference from the class of HMMs described above is typically non-trivial. In most scenarios of practical interest one cannot calculate the likelihood:
$$
p_{\theta}(y_{1:n}) = \int g_{\theta}(y_n|x_n) \pi_{\theta}(x_n|y_{1:n-1}) dx_n
$$
where $y_{1:n}:=(y_1,\dots,y_n)$ and $\pi_{\theta}(x_n|y_{1:n-1})$ is the predictor; see e.g.~\cite{cappe} for the standard filtering recursions. Hence as the likelihood is not analytically tractable, one must resort to numerical
methods, not only to compute it, but to maximize $p_{\theta}(y_{1:n})$ w.r.t.~$\theta$. When $\theta$ is known, a popular collection of techniques for both estimating the likelihood as well as performing filtering
and smoothing are sequential Monte Carlo methods e.g.~\cite{Doucet_2001}. SMC techniques simulate a collection of $N$ samples in parallel, sequentially in time and combine importance sampling and resampling to approximate
a sequence of probability distributions of increasing state-space known pointwise up-to a multiplicative constant. These techniques provide a natural estimate of the likelihood.  The estimate
is quite well understood and is known to be unbiased \cite{delmoral} and, in addition, the relative variance is known to increase linearly with $n$ \cite{cerou1,whiteley}, $n$ being the number of data. When $\theta$ is unknown, as is the case here, estimation of $\theta$ is complicated by the path-degeneracy problem of SMC methods; e.g.~\cite{poy}. However, there are still many specialized SMC techniques which can successfully be used for online parameter estimation of HMMs in a wide variety of contexts, 
such as \cite{dds,poy}. Most of these techniques require the evaluation of $g_{\theta}(y|x)$ and potentially the gradient vectors as well. 

In this article, we consider the scenario where $g_{\theta}(y_n|x_n)$ is either intractable, in the sense that one cannot calculate it or an unbiased estimator of it, or one does not want to calculate the density,
potentially due to the high-dimensionality of $X_n$. It is assumed that one can sample from $g_{\theta}(y_n|x_n)dy_n$. In this case, one cannot use standard or the more advanced SMC methods that are mentioned above
(or indeed many other techniques) and hence exact online parameter estimation is difficult to achieve. One approach which is designed to deal with this problem are ABC techniques; see e.g.~\cite{marin}. Whilst there are a number of other
competitors \cite{gauchi}, we focus upon ABC ideas; see \cite{gauchi,jasra} for some discussion of the relative merits of ABC against competing methods. 
In the context of HMMs there has been some work on the construction of ABC approximations of HMMs \cite{jasra,mckinley}, computational techniques for filtering and smoothing \cite{jasra,martin,calvet} and their statistical consistency for parameter estimation \cite{dean,dean1}. 
ABC approximations of HMMs are biased, but the bias can be controlled to arbitrary precision via a parameter $\epsilon>0$; the bias
typically goes to zero as $\epsilon \searrow 0$.
At present there is not a methodology which can achieve our objective of online parameter estimation. In this article we do the following:
\begin{enumerate}
\item{Investigate the bias in the log-likelihood and the gradient of the log-likelihood that is induced by the ABC approximation for a fixed data set.}
\item{Develop an SMC approach with cost $\mathcal{O}(N)$ that allows one to estimate the static parameters in an online fashion.}
\end{enumerate}
In order to estimate the parameters one must obtain numerical estimates of the log-likelihood and gradient of this quantity. It is then
important to understand what happens to the bias of the ABC approximation of these latter quantities, as the time parameter (number of data, $n$) grows.
We establish, under some assumptions, that this ABC bias, for both quantites is no worse than $\mathcal{O}(n\epsilon)$; this result is
associated to the theoretical work in \cite{dean,dean1}.
These former results indicate that 
the ABC approximation is amenable to numerical implementation:
parameter estimation 
will not necessarily be dominated by the bias; we discuss why this is the case in Remarks \ref{rem:abc_approx_error} and \ref{rem:abc_approx_error1}. For 2.~we introduce an SMC approach based upon SPSA \cite{spall} 
to estimate the parameters in an online manner (see also \cite{poy1} in the context of HMMs). This methodology can be expected to `work well' when:
\begin{itemize}
\item{$d_x$ is large and $d_{\theta}$, $d_y$ are small to moderate.}
\end{itemize}
Whilst these statements are somewhat delicate (e.g.~what is large), in the scenario of high-dimensional states, it has been established in \cite{beskos1} that the \emph{simulation} error does not explode in the dimension.
As a result, the ideas here can be seen as principled competitors (and related to - see \cite{nott}) to ensemble kalman filter-based algorithms such as in \cite{frei}.

This paper is structured as follows. In Section \ref{sec:model} we discuss the model and ABC approximation. Our bias result is also given.
In Section \ref{sec:comp} our computational strategy is outlined. In Section \ref{sec:numer} the method is investigated from a computational perspective.
In Section \ref{sec:summ} the article is concluded with some discussion of future work. The proofs of our results can be found in the appendix.

\section{Model and Approximation}\label{sec:model}

\subsection{Model and Estimation}\label{sec:model1}

Consider first the joint filtering or smoothing density of the HMM
given by 
$$
\pi_{\theta}(x_{0:n}|y_{1:n})=\frac{\mu_{\theta}(x_{0})\prod_{k=1}^{n}g_{\theta}(y_{k}|x_{k})f_{\theta}(x_{k}|x_{k-1})}{\int_{\mathsf{X}^{n+1}}\mu_{\theta}(x_{0})\prod_{k=1}^{n}g_{\theta}(y_{k}|x_{k})f_{\theta}(x_{k}|x_{k-1})dx_{0:n}}
$$
where $\theta\in\Theta\subseteq\mathbb{R}^{d_{\theta}}$
is the static parameter, $x_{n}\in\mathsf{X}$ are the hidden states
and $y_{n}\in\mathsf{Y}$ the observations. This quantity can be computed
recursively using 
\begin{eqnarray}
\pi_{\theta}(x_{0:k}|y_{1:k-1}) & = & \int_{\mathsf{X}}\pi_{\theta}(x_{0:k-1}|y_{1:k-1})f_{\theta}(x_{k}|x_{k-1})dx_{k}\label{eq:filter1}\\
\pi_{\theta}(x_{0:k}|y_{1:k}) & = & \frac{g_{\theta}(y_{k}|x_{k})\pi_{\theta}(x_{0:k}|y_{1:k-1})}{p_{\theta}(y_{k}|y_{1:k-1})}\label{eq:filter2}
\end{eqnarray}
with the \emph{recursive likelihood} being 
\begin{equation}
p_{\theta}(y_{k}|y_{1:k-1})=\int_{\mathsf{X}}g_{\theta}(y_{k}|x_{k})\pi_{\theta}(x_{0:k}|y_{1:k-1})dx_{k}\label{eq:filter3}
\end{equation}

Furthermore we write the log- (marginal) likelihood at time $n$:
$$
\log(p_{\theta}(y_{1:n}))=\sum_{k=1}^{n}\log(p_{\theta}(y_{k}|y_{1:k-1})).
$$
In the context of MLE one is usually interested computing
\begin{equation*}
\hat{\theta}=\arg\max_{\theta\in\Theta}\log(p_{\theta}(y_{1:n}))\label{eq:MLE_offline}
\end{equation*}
Note that this is a batch or off-line method, which means that one
needs to wait first to collect the complete dataset and then compute
the ML estimate. For a long observation sequence the computation of
the gradient at each iteration of the algorithm can be prohibitive.
Therefore, one uses on-line
methods whereby the estimate of the parameter is updated sequentially
as the data arrives. A practical alternative would be to consider
the following update scheme at time $k$, for some sequence $\{a_k\}_{k\geq 1}$
\begin{equation*}
\theta_{k+1}=\theta_{k}+a_{k+1}\left.\nabla\log\left(p_{\theta}(y_{k}|y_{1:k-1})\right)\right\vert _{\theta=\theta_{k}}.
\end{equation*}
Upon receiving $y_{k}$, the parameter estimate is updated in the
direction of ascent of the conditional density of this new observation.
The algorithm in the present form is not suitable for on-line implementation
due to the need to evaluate the gradient of $\log p_{\theta}(y_{k}|y_{0:k-1})$
at the current parameter estimate which would require computing the
filter from time $0$ to time $k$ using the current parameter value
$\theta_{k}$.

A recursive ML (RML) algorithm bypassing this problem
has been proposed in the literature when $\mathsf{X}$ is finite in \cite{legland1997}. It relies on the following update
scheme 
\begin{equation*}
\theta_{k+1}=\theta_{k}+a_{k+1}\nabla\log\left(p_{\theta_{0:k}}(y_{k}|y_{1:k-1})\right)\label{eq:RML}
\end{equation*}
where the positive non-increasing step-size sequence $\left\{ a_{k}\right\} _{k\geq1}$
satisfies $\sum_{k}a_{k}=\infty$ and $\sum_{k}a_{k}^{2}<\infty$ 
\cite{legland1997}; e.g. $a_{k}=k^{-\alpha}$
for $0.5<\alpha\leq1$. The quantity $\nabla\log p_{\theta_{0:k}}(y_{k}|y_{1:k-1})$
is defined as 
\begin{equation*}
\nabla\log\left(p_{\theta_{0:k}}(y_{k}|y_{1:k-1})\right)=\nabla\log\left(p_{\theta_{0:k}}(y_{1:k})\right)-\nabla\log\left(p_{\theta_{0:k-1}}(y_{1:k-1})\right)\label{eq:timevaryingscore}
\end{equation*}
where the notation $\nabla\log\left(p_{\theta_{0:k}}(y_{1:k})\right)$
indicates that at each time $k$ the quantities in (\ref{eq:filter1})-(\ref{eq:filter3})
are computed using the parameter estimate $\theta_{k}$. The asymptotic
properties of this algorithm\ (i.e. the behavior of $\theta_{k}$\ in
the limit as $k$\ goes to infinity) have been studied in \cite{legland1997}
for a finite state-space HMM. It is shown that under regularity conditions
this algorithm converges towards a local maximum of the average log-likelihood;
this average log-likelihood being maximized at the `true' parameter
value. 

In this article, we would like to implement approximate versions of
these on-line and off-line ML schemes when both the following cases
hold:
\begin{itemize}
\item Case 1: {We can sample from the conditional distribution of $Y|x$,
for any fixed $\theta$ and $x$.} 
\item Case 2: {We cannot or do not want to evaluate the conditional density
of $Y|x$, $g_{\theta}(y|x)$ and do not have access to an unbiased
estimate of it.} 
\end{itemize}
Apart from using likelihoods which do not admit computable densities
such as some stable distributions, this context might appear relevant
to the context when one is interested to use SMC methods and evaluate
$g_{\theta}(y|x)$ when $d_{x}$ is large. SMC methods for \emph{filtering} do not always scale well with the dimension of the hidden state $d_x$, often
requiring a computational cost $\mathcal{O}(\kappa^{d_x})$, with $\kappa>1$; see e.g.~\cite{beskos1,bickel}.
 A more detailed discussion on the difficulties of using SMC methods
in high dimensions is far beyond the scope of this article, but we
remark the ideas in this paper can be relevant in this context.

\subsection{ABC Approximation and Noisy ABC}

To facilitate statistical inference, we consider an ABC approximation
of the joint smoothing density (e.g.~\cite{jasra,mckinley}):
$$
\pi_{\theta,\epsilon}(u_{1:n},x_{0:n}|y_{1:n}) =
\frac{\mu_{\theta}(x_0)\prod_{k=1}^n
  K_{\theta,\epsilon}(y_k|u_k)g_{\theta}(u_k|x_k)f_{\theta}(x_k|x_{k-1})}{\int_{\mathsf{X}^{n+1}\times\mathsf{Y}^{n}}
  \mu_{\theta}(x_0)\prod_{k=1}^n K_{\theta,\epsilon}(y_k|u_k)
  g_{\theta}(u_k|x_k)f_{\theta}(x_k|x_{k-1}) du_{1:n}x_{0:n}}
$$
where $u_n\in\mathsf{Y}$ are pseudo observations, $\epsilon>0$ 
$K_{\theta}:\mathsf{Y}\times\mathbb{R}_+\times\Theta\rightarrow\mathbb{R}_+\cup\{0\}$ is some kernel function that has bandwidth that depends upon a precision parameter $\epsilon>0$. Examples include:
\begin{eqnarray*}
  K_{\theta,\epsilon}(y_k|u_k) & = & \mathbb{I}_{\{u:|y_k-u|<\epsilon\}}(u_k)\\
  K_{\theta,\epsilon}(y_k|u_k) & = & \phi_{d_y}(y_k;u_k,\epsilon I_{d_x})
\end{eqnarray*}
where $\mathbb{I}$ is the indicator function, $|\cdot|$ is the $\mathbb{L}_1-$norm, $\phi_{d}(y;\xi,\Sigma)$ is normal density on $d-$dimensions with mean $\xi$ and covariance $\Sigma$ and $I_d$ is the $d-$dimensional identity matrix.

Consider the quantity, to be used below:
\begin{equation}
g_{\theta,\epsilon}(y_k|x_k) = \frac{\int_{\mathsf{Y}}  K_{\theta,\epsilon}(y_k|u_k)g_{\theta}(u_k|x_k) du_k}{\int_{\mathsf{Y}^{2}}  K_{\theta,\epsilon}(y_k|u_k)g_{\theta}(u_k|x_k) du_kdy_k}.
\label{eq:g_epsilon}
\end{equation}
Throughout the article we \emph{critically} choose
$K_{\theta,\epsilon}(y_k|u_k)$ such that the denominator of
\eqref{eq:g_epsilon} does not depend upon $x_k$ or $\theta$.
As noted in \cite{dean}, after integrating out the $u_{1:n}$, this representation leads to a new (or perturbed) HMM
with transitions $f_{\theta}$ and likelihoods $g_{\theta,\epsilon}$.
Parameter estimation associated to the smoother $\pi_{\theta,\epsilon}$ just considers the function:
$$
\log(p_{\theta,\epsilon}(y_{1:n})) = \sum_{k=1}^n \log(p_{\theta,\epsilon}(y_k|y_{1:k-1}))
$$
where
$$
p_{\theta,\epsilon}(y_k|y_{1:k-1}) = \int_{\mathsf{X}} 
g_{\theta,\epsilon}(y_k|x_{k})\pi_{\theta,\epsilon}(x_{k}|y_{1:k-1})dx_k.
$$
We term the maximizer of
$p_{\theta,\epsilon}(y_{1:n})$ as the ABC-MLE.
One can then define a RML procedure for the ABC-HMM as in Section \ref{sec:model1}:
$$
\theta_{k+1}=\theta_{k}+a_{k+1}\nabla \log\{p_{\theta,\epsilon}(y_k|y_{1:k-1})\}.
$$
In practice, one can consider an estimation of $p_{\theta,\epsilon}(y_{1:n})$ including factors independent of $\theta,\epsilon$; this is discussed in Section \ref{sec:comp}.

Results on associated to the asymptotics of the
ABC-MLE (i.e.~as $n$ grows) can be found in \cite{dean,dean1}; there is an asymptotic bias. 
In addition, in the case
of noisy ABC, where the data become corrupted, there is no asymptotic bias and one can recover the true parameter. We remark that the methodology that is considered in this article can \emph{easily} incorporate noisy ABC. However, there may be several reasons why one
may not want to use noisy ABC: (1) the consistency results (currently) depend upon the data originating from the original HMM; (2) the current simulation-based methodology may not be able to push $\epsilon$ towards zero.
For (1),  if the data do not originate from the HMM of interest, it has not been studied what happens with regards to the asymptotics of noisy ABC for HMMs. It may be that some investigators might be uncomfortable with assuming that the data originate
from the exactly the HMM being fitted. For (2) the asymptotic bias (which is under assumptions either $\mathcal{O}(\epsilon)$ or $\mathcal{O}(\epsilon^2)$ \cite{dean,dean1}) could be less than the asymptotic variance (under assumptions $\mathcal{O}(\epsilon^2)$ \cite{dean,dean1}) as
$\epsilon$ could be much bigger than 1 when using current simulation methodology. We do not use noisy ABC in this article, but acknowledge its fundamental importance
with regards to parameter estimation associated to ABC for HMMs; our approach is pragmatic, taking into account points (1)-(2). 


 \subsection{Result}\label{sec:result}

We now prove an upper-bound on the bias induced by the ABC approximation on the log-likelihood and gradient of the log-likelihood. The latter is more relevant for parameter estimation, but the mathematical
arguments are considerably more involved for this quantity, in comparison to the ABC bias of the log-likelihood. Hence the log-likelihood is considered as a simple preliminary result. These results are to be taken in the context of
ABC (not noisy ABC) and help to provide some guarantees associated to the numerics.

We consider the scenario
$$
K_{\theta,\epsilon}(y_k|u_k)  = \mathbb{I}_{A_{\epsilon,y_k}}(u_k)
$$
where the set $A_{\epsilon,y_k}$ is specified below.
Throughout $|\cdot|$ is understood to be an $\mathbb{L}_{1}-$norm. The hidden-state is assumed to lie on a \emph{compact} set, i.e.~$\mathsf{X}$ is compact.
We use the notation $\mathcal{P}(\mathsf{X})$ to denote the class of probability measures on $\mathsf{X}$ and $\mathcal{M}(\mathsf{X})$ the collection of finite and signed measures on
$\mathsf{X}$. $\|\cdot\|$ denotes the total variation distance. The initial distribution of the hidden Markov chain is written as $\mu_{\theta}\in\mathcal{P}(\mathsf{X})$.
In addition, we condition on the observed data and do not mention them in any mathematical statement of results (due to the assumptions below). We do not consider the instance of whether the data originate, or not, from a HMM. For the control of the bias of the gradient of the log-likelihood (Theorem \ref{theo:grad_ll_bias}), we assume that $d_{\theta}=1$. This is not restrictive as one can use the arguments
to prove analgous results when $d_{\theta}>1$, by considering componentwise arguments for the gradient. In addition, for the gradient result, the derivative of $\mu_{\theta}$ is
written $\widetilde{\mu_{\theta}}\in\mathcal{M}(\mathsf{X})$.
We make the following assumptions, which are extremely strong. They are made to keep the proofs as short as possible.

\begin{hypA}\label{hyp:like_cont}
\emph{Lipschitz Continuity of the Likelihood}. There exist
$L<+\infty$ such that for any $x\in\mathsf{X}$,
$y,y'\in\mathsf{Y}$, $\theta\in\Theta$
\[
|g_{\theta}(y|x) - g_{\theta}(y'|x)| \leq L |y-y'|.
\]
\end{hypA}

\begin{hypA}\label{hyp:stat}
\emph{Statistic and Metric}. The set $A_{\epsilon, y}$ is:
\[
A_{\epsilon, y} = \{u:|y-u|<\epsilon\}.
\]

\end{hypA}

\begin{hypA}\label{hyp:like_bound}
\emph{Boundedness of Likelihood and Transition}.
There exist $0<\underline{C}<\overline{C}<+\infty$ such that for all $x,x^{\prime}\in
\mathsf{X}$, $y\in\mathsf{Y}$, $\theta\in\Theta$
\begin{align*}
\underline{C}  &  \leq f_{\theta}(x^{\prime}|x
)\leq \overline{C},\\
\underline{C} &  \leq g_{\theta}(y|x)\leq \overline{C}.
\end{align*}
\end{hypA}

\begin{hypA}\label{hyp:like_grad_cont}
\emph{Lipschitz Continuity of the Gradient of the Likelihood}. 
$f_{\theta}(x'|x)$, $g_{\theta}(y|x')$
are differentiable in $\theta$ for each $x,x^{\prime}\in
\mathsf{X}$, $y\in\mathsf{Y}$. In addition,
there exist $L<+\infty$ such that for any $x\in\mathsf{X}$,
$y,y'\in\mathsf{Y}$, $\theta\in\Theta$
\[
|\nabla\{g_{\theta}(y|x)\} - \nabla\{g_{\theta}(y'|x)\}| \leq L |y-y'|.
\]
\end{hypA}

\begin{hypA}\label{hyp:like_grad_bound}
\emph{Boundedness of Gradients of the Likelihood and Transition}.  
There exist $0<\underline{C}<\overline{C}<+\infty$ such that for all $x,x^{\prime}\in
\mathsf{X}$, $y\in\mathsf{Y}$, $\theta\in\Theta$
\begin{align*}
\underline{C}  &  \leq \nabla\{f_{\theta}(x^{\prime}|x
)\}\leq \overline{C},\\
\underline{C} &  \leq \nabla\{ g_{\theta}(y|x)\}\leq \overline{C}.
\end{align*}
\end{hypA}

We first have the result on the ABC bias of the log-likelihood. The proof is in appendix \ref{app:log_like}.

\begin{prop}\label{prop:ll_bias}
Assume (A1-3). Then 
there exist a $C<+\infty$ such that for any
$n\geq 1$, $\mu_{\theta}\in\mathcal{P}(\mathsf{X})$, $\epsilon>0$, $\theta\in\Theta$ we have:
$$
|\log(p_{\theta}(y_{1:n}))-\log(p_{\theta,\epsilon}(y_{1:n}))| \leq Cn\epsilon.
$$
\end{prop}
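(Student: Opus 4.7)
My plan is to first reduce everything to a uniform pointwise bound on $|g_{\theta,\epsilon}(y|x)-g_\theta(y|x)|$ and then handle the filter difference via stability under the strong mixing hypothesis (A3). Under (A2), the denominator in \eqref{eq:g_epsilon} is simply the volume $(2\epsilon)^{d_y}$ of the $\mathbb{L}_1$-ball, so
\[
g_{\theta,\epsilon}(y|x) = \frac{1}{(2\epsilon)^{d_y}}\int_{|u-y|<\epsilon} g_\theta(u|x)\,du,
\]
which is a local average of $g_\theta(\cdot|x)$ on a ball of radius $\epsilon$ around $y$. By the Lipschitz assumption (A1) this yields $|g_{\theta,\epsilon}(y|x)-g_\theta(y|x)|\leq L\epsilon$ uniformly in $x,y,\theta$.

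Next, I would telescope: $\log p_\theta(y_{1:n})-\log p_{\theta,\epsilon}(y_{1:n})=\sum_{k=1}^n\big[\log p_\theta(y_k|y_{1:k-1})-\log p_{\theta,\epsilon}(y_k|y_{1:k-1})\big]$. Under (A3), both $p_\theta(y_k|y_{1:k-1})$ and $p_{\theta,\epsilon}(y_k|y_{1:k-1})$ lie in $[\underline C,\overline C]$, so $\log$ is $\underline C^{-1}$-Lipschitz on the relevant range and it suffices to bound each $|p_\theta(y_k|y_{1:k-1})-p_{\theta,\epsilon}(y_k|y_{1:k-1})|$ by $C\epsilon$, with a constant independent of $k$. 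I would write this difference as
\[
\int[g_\theta(y_k|x)-g_{\theta,\epsilon}(y_k|x)]\pi_\theta(x|y_{1:k-1})\,dx + \int g_{\theta,\epsilon}(y_k|x)[\pi_\theta-\pi_{\theta,\epsilon}](x|y_{1:k-1})\,dx.
\]
The first integral is bounded by $L\epsilon$ from the preliminary pointwise bound; the second is bounded by $\overline C\,\|\pi_\theta(\cdot|y_{1:k-1})-\pi_{\theta,\epsilon}(\cdot|y_{1:k-1})\|$.

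The crux, and the hardest step, is controlling the TV distance between the two predictors \emph{uniformly in} $k$. A naive propagation of the $O(\epsilon)$ perturbation through the filtering recursion would give $O(k\epsilon)$ growth, leading to an unusable $O(n^2\epsilon)$ bound. To obtain the desired $O(n\epsilon)$ bound one needs exponential forgetting: under the Doeblin-type mixing guaranteed by (A3), both filter updates (prediction by $f_\theta$, correction by $g_\theta$ or $g_{\theta,\epsilon}$) are contractions in total variation with a rate $\rho<1$ depending only on $\underline C/\overline C$. Writing a Lipschitz-type perturbation inequality of the form $\|\pi_\theta^{(k)}-\pi_{\theta,\epsilon}^{(k)}\|\leq \rho\|\pi_\theta^{(k-1)}-\pi_{\theta,\epsilon}^{(k-1)}\|+C'\epsilon$, iterating yields a geometric series and hence $\|\pi_\theta(\cdot|y_{1:k-1})-\pi_{\theta,\epsilon}(\cdot|y_{1:k-1})\|\leq C''\epsilon$ uniformly in $k$. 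Combining with the previous paragraph, each summand is $O(\epsilon)$ and the total bound becomes $Cn\epsilon$, as required.
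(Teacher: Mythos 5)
Your proof is correct and follows essentially the same route as the paper: the same telescoping over $k$, the same Lipschitz bound on $\log$ via the lower bound from (A3), the same pointwise estimate $\sup_x|g_\theta(y|x)-g_{\theta,\epsilon}(y|x)|\leq L\epsilon$ from (A1)--(A2), and the same two-term decomposition of $p_\theta(y_k|y_{1:k-1})-p_{\theta,\epsilon}(y_k|y_{1:k-1})$. The only difference is that for the crucial uniform-in-$k$ total-variation bound on the filter difference you sketch a contraction-plus-perturbation recursion, whereas the paper simply invokes Theorem 2 of \cite{jasra}, which states exactly that $\sup_{k\geq 1}\|F_{\theta}^{k-1}(\mu_\theta)-F_{\theta,\epsilon}^{k-1}(\mu_\theta)\|\leq C\epsilon$.
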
 

\begin{rem}\label{rem:abc_approx_error}
The above proposition gives some simple guarantees on the bias of the ABC log-likelihood. 
When using SMC algorithms to approximate $\log(p_{\theta}(y_{1:n}))$, the overall error will be decomposed into the deterministic bias that is present from the ABC approximation (that in Proposition \ref{prop:ll_bias}) and the numerical error of approximating the log-likelihood.
Under some assumptions, the $\mathbb{L}_2-$error of the SMC estimate of the log-likelihood should not deteriorate any faster than linearly in time; this is due to the results cited previously.
Thus, as the time parameter increases, the ABC bias of the log-likelihood will not necessarily dominate the simulation-based error that would be present even if $g_{\theta}$ is evaluated. 
\end{rem}

Proposition \ref{prop:ll_bias} is reasonably straight-forward to prove, but, is of less interest in the context of parameter estimation, as one is interested in the gradient of the log-likelihood.
We now have the result on the ABC bias of the gradient of the log-likelihood. The proof in appendix \ref{app:log_like_grad}.

\begin{theorem}\label{theo:grad_ll_bias}
Assume (A1-5). Then 
there exist a $C<+\infty$ such that for any
$n\geq 1$, $\mu_{\theta}\in\mathcal{P}(\mathsf{X})$, $\widetilde{\mu_{\theta}}\in\mathcal{M}(\mathsf{X})$, $\epsilon>0$, $\theta\in\Theta$ we have:
$$
|\nabla\{\log(p_{\theta}(y_{1:n}))\}-\nabla\{\log(p_{\theta,\epsilon}(y_{1:n}))\}| \leq Cn\epsilon (2+\|\widetilde{\mu_{\theta}}\|).
$$
\end{theorem}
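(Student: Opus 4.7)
The plan is to run the argument in parallel with Proposition \ref{prop:ll_bias} but with the extra complication of differentiating through the filter recursion. First I would telescope
$$\nabla\log p_{\theta}(y_{1:n}) - \nabla\log p_{\theta,\epsilon}(y_{1:n}) = \sum_{k=1}^{n}\bigl[\nabla\log p_{\theta}(y_{k}|y_{1:k-1}) - \nabla\log p_{\theta,\epsilon}(y_{k}|y_{1:k-1})\bigr],$$
and aim to bound each summand by $C\epsilon(2+\|\widetilde{\mu_{\theta}}\|)$ uniformly in $k$, so that summing in $k$ yields the claim. For each summand I would use the representation
$$\nabla\log p_{\theta}(y_{k}|y_{1:k-1}) = \frac{\eta_{k}(\nabla g_{\theta}(y_{k}|\cdot)) + \widetilde{\eta}_{k}(g_{\theta}(y_{k}|\cdot))}{\eta_{k}(g_{\theta}(y_{k}|\cdot))},$$
where $\eta_{k}\in\mathcal{P}(\mathsf{X})$ is the one-step predictor and $\widetilde{\eta}_{k}\in\mathcal{M}(\mathsf{X})$ is its $\theta$-derivative (the \emph{tangent filter}); subtracting the analogous identity for $(\eta_{k}^{\epsilon},\widetilde{\eta}_{k}^{\epsilon},g_{\theta,\epsilon})$ reduces the problem to three ingredients: (i) the pointwise closeness $|g_{\theta}-g_{\theta,\epsilon}| + |\nabla g_{\theta} - \nabla g_{\theta,\epsilon}| = O(\epsilon)$; (ii) uniform-in-$k$ filter stability $\|\eta_{k}-\eta_{k}^{\epsilon}\| = O(\epsilon)$; (iii) uniform-in-$k$ tangent-filter stability $\|\widetilde{\eta}_{k}-\widetilde{\eta}_{k}^{\epsilon}\| = O(\epsilon(1+\|\widetilde{\mu_{\theta}}\|))$.

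Ingredient (i) is essentially immediate from (\ref{eq:g_epsilon}): because the denominator is chosen to be independent of $(x,\theta)$, $g_{\theta,\epsilon}(y|x)$ is a simple average of $g_{\theta}(\cdot|x)$ over the $\epsilon$-ball $A_{\epsilon,y}$ divided by its volume, and the Lipschitz assumptions (A1) and (A4) then deliver the $O(\epsilon)$ estimates on $g$ and on $\nabla g$. Ingredient (ii) is the very mechanism already used in Proposition \ref{prop:ll_bias}: the two-sided bound (A3) gives a one-step Doeblin minorization, hence exponential forgetting of the Bayes recursion in total variation, and a standard comparison argument shows that the per-step $O(\epsilon)$ perturbation of the likelihood factor does not accumulate over time.

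The main obstacle is (iii), because $\widetilde{\eta}_{k}$ is a \emph{signed} measure and its recursion is affine rather than linear. Differentiating the normalized prediction--update for $\eta_{k}$ in $\theta$ produces a recursion of the form
$$\widetilde{\eta}_{k+1} = \mathcal{F}_{k}(\eta_{k})\widetilde{\eta}_{k} + \mathcal{G}_{k}(\eta_{k}),$$
with $\mathcal{F}_{k}$ a bounded linear operator on $\mathcal{M}(\mathsf{X})$ and $\mathcal{G}_{k}$ a forcing term built from $\nabla f_{\theta}$, $\nabla g_{\theta}$ and $\eta_{k}$. Under (A3) I would show that $\mathcal{F}_{k}$ is a total-variation contraction uniformly in $k$ with some rate $r<1$, yielding both a uniform bound $\|\widetilde{\eta}_{k}\|\leq C(1+\|\widetilde{\mu_{\theta}}\|)$ and, via (i), (ii) and (A5), that $\mathcal{F}_{k}$ and $\mathcal{G}_{k}$ are $O(\epsilon)$-close to their ABC analogues. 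Setting $\delta_{k}:=\widetilde{\eta}_{k}-\widetilde{\eta}_{k}^{\epsilon}$ one obtains a perturbed linear recursion $\delta_{k+1} = \mathcal{F}_{k}^{\epsilon}\delta_{k} + E_{k}$ with $\delta_{0}=0$ and $\|E_{k}\| \leq C\epsilon(1+\|\widetilde{\mu_{\theta}}\|)$, which iterates to the desired uniform bound on $\|\delta_{k}\|$ via the geometric series. Combining (i)--(iii) in the ratio representation of each score produces the per-step bound $C\epsilon(2+\|\widetilde{\mu_{\theta}}\|)$, and summing over $k=1,\ldots,n$ closes the argument. The technical heart is the uniform contraction of $\mathcal{F}_{k}$ on signed measures, which requires exploiting the Doeblin-type minorization from (A3) more carefully than one needs for probability measures.
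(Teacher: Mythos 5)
Your proposal follows essentially the same route as the paper: the score difference is telescoped over $k$, each summand is split by the product rule into a $\nabla g$ term, a $\nabla f$ term and a filter-derivative term, and the per-step bounds come from the $O(\epsilon)$ closeness of $g_{\theta}$ and $\nabla g_{\theta}$ to their ABC counterparts, the uniform-in-time filter bias bound \eqref{eq:abc_filter_stability}, and a uniform-in-time bound on the ABC bias of the filter derivative (the paper's Theorem \ref{theo:filt_deriv_bias}, your ingredient (iii)). The only organizational difference is in how that last ingredient is established: you iterate a perturbed affine recursion for the tangent filter, whereas the paper telescopes over the time at which the exact and ABC dynamics diverge and invokes the exponential-forgetting machinery of \cite{vlad}; the two arguments rest on the same contraction property and are essentially interchangeable.
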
 

\begin{rem}\label{rem:abc_approx_error1}
The above Theorem again provides some explicit guarantees when using an ABC approximation along with SMC-based numerical methods. For example, if one can consider approximating gradients in an ABC
context (see \cite{yidrlim}), then from the results of \cite{singh}, one expects that the variance of the SMC estimates to increase only linearly in time. Again, as time increases
the ABC bias does not necessarily dominate the variance that would be present even if $g_{\theta}$ is evaluated (i.e.~one uses SMC on the true model).
\end{rem}

\begin{rem}
  The result in Theorem \ref{theo:grad_ll_bias} can be found in
  eq.~(72) of \cite{dean} and direct limit (as $\epsilon\searrow 0$)
  in \cite{dean1}. However, we adopt a new (and fundamentally different) proof technique, with a substantially clearer proof and an additional result of
  independent interest is proved.  We derive the stability w.r.t.~time
  of the bias of the ABC approximation of the filter derivative; see
  Theorem \ref{theo:filt_deriv_bias} in appendix \ref{app:filt_deriv}.
\end{rem}

 \section{Computational Strategy}\label{sec:comp}

 \subsection{SMC}

In order to perform online parameter estimation, we will need to use a SMC algorithm to approximate $p_{\theta,\epsilon}(y_k|y_{1:k-1})$ for $\theta$ fixed; this is a critical quantity that we will use below.
An algorithm which can do this is the SMC approach in \cite{jasra} which is detailed in Figure \ref{fig:abc_smc}, with proposals $\{q_{k,\theta}\}_{1\leq k \leq n}$ with density w.r.t.~Lebesgue
measure.

On the basis of Figure \ref{fig:abc_smc}, one can approximate $p_{\theta,\epsilon}(y_{1:n})$, \emph{up-to a constant that is independent of} $\theta$, as follows. 
In an abuse of notation, we denote this SMC estimate (which does not include factors that do not depend on $\theta$) as $p_{\theta,\epsilon}^N(y_{1:n})$.
The SMC estimate is
$$
p_{\theta,\epsilon}^N(y_{1:n}) = \prod_{k=1}^n \frac{1}{N} \sum_{i=1}^N \widetilde{W}_k^{(i)}
$$
with 
$$
p_{\theta,\epsilon}^N(y_k|y_{1:k-1}) = \frac{1}{N} \sum_{i=1}^N \widetilde{W}_k^{(i)}.
$$
These estimates are unbiased for any $N\geq 1$ (see \cite{delmoral}). In practice, we are interested in the log-likelihoods; taking logarithms of the above estimates generally leads to a biased
approximation of $\log\{p_{\theta,\epsilon}(y_{1:n})\}$ and $\log\{p_{\theta,\epsilon}(y_k|y_{1:k-1})\}$. One can implement a form of bias correction, using the Taylor series expansion ideas in \cite{pitt}.
Throughout, we use the bias-corrected estimates:
\begin{eqnarray}
\widehat{\log\{p_{\theta,\epsilon}(y_{1:n})\}} & = & \log\{p_{\theta,\epsilon}^N(y_{1:n})\} + \frac{1}{2N} p_{\theta,\epsilon}^N(y_{1:n})^{-2} \nonumber\\
\widehat{\log\{p_{\theta,\epsilon}^N(y_k|y_{1:k-1})\}} & = & \log\{p_{\theta,\epsilon}^N(y_k|y_{1:k-1})\} + \frac{1}{2N} p_{\theta,\epsilon}^N(y_k|y_{1:k-1})^{-2}\label{eq:bias_correct}.
\end{eqnarray}


The parameter $\epsilon$ can be computed adaptively; see \cite{jasra}. It is remarked that a drawback of this algorithm is that when $d_y$ grows with $\epsilon,N$ fixed, one cannot expect the algorithm to
work well for every $\epsilon$; typically one must increase $\epsilon$ to yield reasonable algorithmic results and this is at the cost of increasing the bias. To maintain $\epsilon$ at a reasonable level, one
must consider more advanced strategies which are not investigated here.

One final point, which is often useful in practice. One can modify the ABC approximation to:
$$
\pi_{\theta,\epsilon}(u_{1:n}^1,\dots,u_{1:n}^M,x_{0:n}|y_{1:n}) \propto
\mu_{\theta}(x_0)\prod_{k=1}^n \bigg[
\Big(\frac{1}{M}\sum_{j=1}^M K_{\theta,\epsilon}(y_k|u_k^j) \Big)\prod_{j=1}^M g_{\theta}(u_k^j|x_k)\bigg]f_{\theta}(x_k|x_{k-1})
$$
which yields the same bias as the original ABC approximation (on integrating the $u$ variables) but can yield substantial computational
improvements. This is because as $M$ grows one approximates a marginal SMC that does not sample the auxiliary $u$ variables.

\begin{figure}[h]
\begin{itemize}
\item \textsf{Step} \textsf{0.}{\ }
\textsf{For} $i=1,\dots,N$ \textsf{sample} $X_0^{(i)}$ i.i.d.~from $\mu_{\theta}(x_0)dx_0$.
\textsf{Set } $W_0^{(i)}=1/N$ \textsf{for each} $i\in\{1,\dots,N\}$.
\textsf{Set} $k=0$\textsf{.}
\par
\item \textsf{Step} \textsf{1. }
\textsf{\ Resample }%
$N$\textsf{\ particles from }%
\[
\widehat{\pi}_{k}\left(  \cdot\right)  =\sum_{i=1}^{N}W_{k}^{\left(  i\right)
}\delta_{x_{k}^{(i)}}\left(  \cdot\right)  ,
\]
\textsf{which are also denoted }$\{ x_{k}^{(i)}\} $\textsf{, and set }%
$W_{k}^{\left(  i\right)  }=\frac{1}{N}$\textsf{. Set }$k=k+1$\textsf{ and if
}$k=n+1$\textsf{, stop.}
\par
\item \textsf{Step} \textsf{2. For }$i=1,\dots,N,$\textsf{\ sample }%
$X_{k}^{(i)}$ \textsf{from }$q_{k,\theta}(x_k|x_{k-1}^{(i)})dx_k$\textsf{ and
}$U_{k}^{(i)}$\textsf{\ from the likelihood }$g_{\theta}(u_k|x_{k}^{(i)})du_k$.
\textsf{ Compute}
\[
W_k^{(i)}\propto W_{k-1}^{(i)} \widetilde{W}_k^{(i)} \quad\quad\quad
\widetilde{W}_k^{(i)} = \frac{K_{\theta,\epsilon}(y_k|u_k^{(i)})f_{\theta}(x_k^{(i)}|x_{k-1}^{(i)})}{q_{k,\theta}(x_k|x_{k-1}^{(i)})}
,\label{eq:incrementalweight}%
\]
\textsf{renormalize the weights and return to Step 1.}
\end{itemize}
\caption{SMC Algorithm for ABC target.}%
\label{fig:abc_smc}
\end{figure}

\begin{rem}\label{rem:collapsed_abc}
We note that, suppressing $\theta$, if the HMM can be written in the form:
\begin{eqnarray*}
Y_n & = & \xi_n(X_n,W_n)\quad n\geq 1\\
X_n &= & \varphi_n(X_{n-1},V_n)\quad n\geq 1
\end{eqnarray*}
where $X_0=x_0\in\mathsf{X}$ is known, $Y_n\in\mathsf{Y}$, 
$V_n\in\mathsf{X}$ with $\{V_n\}_{n\geq 1}$ i.i.d.~$W_n\in\mathsf{Y}$ with $\{W_n\}_{n\geq 1}$ i.i.d.~and independent of $\{V_n\}_{n\geq 1}$ and
$\xi_n:\mathsf{X}\times\mathsf{Y}\rightarrow\mathsf{Y}$, $\varphi_n:\mathsf{X}\times\mathsf{X}\rightarrow\mathsf{X}$.
Suppose that:
\begin{itemize}
\item{One can evaluate the densities of $W_n$ and $V_n$ and sample from the associated distributions.}
\item{One can evaluate $\xi_n$ (resp.~$\varphi_n$) pointwise, for each $n\geq 1$ and $X_n,W_n$ (resp.~$X_{n-1},V_n$).}
\end{itemize}
One can construct a `collapsed' (see \cite{murray}) ABC approximation (assuming $K_{\theta,\epsilon}(y|u) = \mathbb{I}_{A_{y,\epsilon}}(u)$, $
A_{y,\epsilon} = \{u\in\mathsf{Y}:d(u,y)<\epsilon\}$, with $d$ a distance metric on $\mathsf{Y}$) 
$$
\pi_{\epsilon}(w_{1:n},v_{1:n}|y_{1:n}) \propto \prod_{k=1}^n \mathbb{I}_{A_{y_k,\epsilon}}(\xi_k(\varphi^{(k)}(x_0,v_{1:k}),w_k))p(w_k)p(v_k).
$$
Hence a version of the SMC algorithm in Figure \ref{fig:abc_smc} can be derived which does not need to sample from the dynamics of the data. In additon one does not need access to the transition density of the hidden Markov chain.
This representation, however, does not always apply.
\end{rem}

 \subsection{SPSA}\label{sec:spsa}

Recall the RML procedure in Section \ref{sec:model1}, where $g_{\theta}(y|x)$ is not intractable:
\begin{equation}
\theta_{k+1} = \theta_k + a_{k+1}\nabla \log(p_{\theta_{0:k}}(y_k|y_{1:k-1}))
\label{eq:sa_update}
\end{equation}
for $\{a_n\}$ a sequence of step-sizes. In practice, one does not know the gradient and must resort to (e.g.) SMC techniques to approximate it; see for example \cite{poy}.
In our ABC context one can run the algorithm in Figure \ref{fig:abc_smc} to approximate the ABC filter. To recursively update $\theta$, at least using the ideas in
\cite{poy}, one has to evaluate 
\begin{equation}
\log(g_{\theta}(y|x)) \quad \textrm{and} \quad \nabla \log(g_{\theta}(y|x))
\label{eq:quantitites}
\end{equation}
which we will not have access to.

We propose the following computational scheme; the idea is to use SPSA, which does not require the quantities in \eqref{eq:quantitites}.
Introduce a decreasing sequence of positive numbers $\{c_k\}$. Suppose, with $\{a_k\}$ as in the update, \eqref{eq:sa_update}, we have
$$
\forall k, a_k>0 \quad a_k,c_k\rightarrow 0 \quad \sum_k a_k = \infty \quad \sum_k \frac{a_k^2}{c_k^2} < \infty.
$$

Start with some initial guess $\theta_0$ and perform the standard SMC update (i.e.~as in Figure \ref{fig:abc_smc})  for two sets of particles. One with parameter:
$$
\theta_0  + c_0 \Delta_0
$$
and the other with parameter:
$$
\theta_0  - c_0 \Delta_0
$$ 
where $\Delta_0$ is a $d_{\theta}-$dimensional vector with each entry $\pm 1$ Bernoulli distributed (see \cite{spall}).
For both algorithms compute $\widehat{\log(p_{\theta_0  + c_0 \Delta_0,\epsilon}(y_1))}$ and
$
\widehat{\log(p_{\theta_0  - c_0 \Delta_0,\epsilon}(y_1))} 
$ 
respectively, where the estimates are the bias-corrected versions as in equation \eqref{eq:bias_correct}.
To obtain the next parameter estimate, in the $i^{th}-$dimension, take
$$
\theta_{1,i} = \theta_{0,i} + a_1 \frac{\widehat{\log(p_{\theta_0  + c_0 \Delta_0,\epsilon}(y_1))}-\widehat{\log(p_{\theta_0  - c_0 \Delta_0,\epsilon}(y_1))}}{2c_0 \Delta_{0,i}}.
$$

At any subsequent time-point, with $\theta_k$ and perform the standard SMC update for two sets of particles. One with parameter:
$$
\theta_k  + c_k \Delta_k
$$
and the other with parameter:
$$
\theta_k  - c_k \Delta_k
$$ 
For both algorithms compute
$
\widehat{\log(p_{\theta_k  + c_k \Delta_k,\epsilon}(y_k|y_{1:k-1}))}
$
and
$
\widehat{\log(p_{\theta_k  - c_k \Delta_k,\epsilon}(y_k|y_{1:k-1}))}
$
To obtain the next parameter estimate, in the $i^{th}-$dimension, take
$$
\theta_{k+1,i} = \theta_{k,i} + a_{k+1} \frac{\widehat{\log(p_{\theta_k  + c_k \Delta_k,\epsilon}(y_k|y_{1:k-1}))}-\widehat{\log(p_{\theta_k  - c_k \Delta_k,\epsilon}(y_k|y_{1:k-1}))}}{2c_k \Delta_{k,i}}.
$$
This algorithm does not require one to evaluate $g_\theta$ or its gradient. We refer the reader to \cite{spall} and \cite{poy1} for a theoretical justification of this procedure.

\section{Numerical Simulations}\label{sec:numer}

We consider two numerical examples that are designed to investigate the accuracy and behaviour of our numerical algorithms. In order to do this, we do not consider scenarios where $g_{\theta}$ is intractable.

\subsection{Linear Gaussian Model}\label{sec:lg}

We consider the following linear Gaussian HMM, with $\mathsf{Y}=\mathsf{X}=\mathbb{R}$:
\begin{align} 
  Y_n &= X_n + \sigma_w W_n\nonumber\\
  X_n &= \phi X_{n-1} + \sigma_v V_n,\nonumber
\end{align}
with $W_n,V_n$ independent and
$W_n\stackrel{\textrm{i.i.d.}}{\sim}\mathcal{N}(0,1)$, $V_n\stackrel{\textrm{i.i.d.}}{\sim}\mathcal{N}(0,1)$. In the subsequent examples, we will use a simulated dataset obtained with $\theta=(\sigma_v,\phi,\sigma_w)=(0.2, 0.9, 0.3)$. 

\subsubsection{Offline MLE}\label{sec:lg_off}
We begin by considering a small data set, of $n=1000$ data points. The offline scenario is the one for which we can expect the best possible performance of the ABC-SMC; if we cannot obtain reasonable parameter estimates in this scenario we would not expect ABC to be useful in practice.
We are concerned with obtaining offline ABC-SMC estimates
\begin{align}
  \theta_{j+1}(i) = \theta_{j}(i) + a_{j+1} \frac{\widehat{\log(p_{\theta_j  + c_j \Delta_j}(y_{1:1000}))}-\widehat{\log(\widehat{p}_{\theta_j  - c_j \Delta_j}(y_{1:n}))}}{2c_j \Delta_{j}(i)},\nonumber
\end{align}
where $j$ is the iteration, $\theta_{j}(i)$ is the parameter estimate in the $i^{th}$-dimension, and $\Delta_{j}(i)$ is the $i^{th}$-entry of the Bernoulli distributed vector. 
For the SPSA stepsizes, we chose $c_j=j^{-0.1}$, $a_j=1$ for $j<10000$, and $a_j=(j-10000)^{-0.8}$ for $j\geq 10000$.
The iteration consists of running the ABC-SMC algorithm for 1000 data-points, with the current value of $\theta$.

In Figure \ref{fig:lg_off}, we compare offline estimates of the following cases:
\begin{enumerate}[(a)]
  \item Kalman Filter (KF) with SPSA
  \item SMC on the true model using $N=1000$, with SPSA
  \item ABC-SMC using $N=200$, $M=10$, $\epsilon=0.1$, with SPSA
  \item Maximum Likelihood estimates (MLE) from an offline grid search optimization.
\end{enumerate}
In this particular test case, we can observe good relative performance of the ABC-SMC procedure, with regards to estimating parameters. This strong performance allows us to investigate a slightly more challenging scenario.

\begin{figure}[h]
  \centering
  \includegraphics[width=0.5\textwidth]{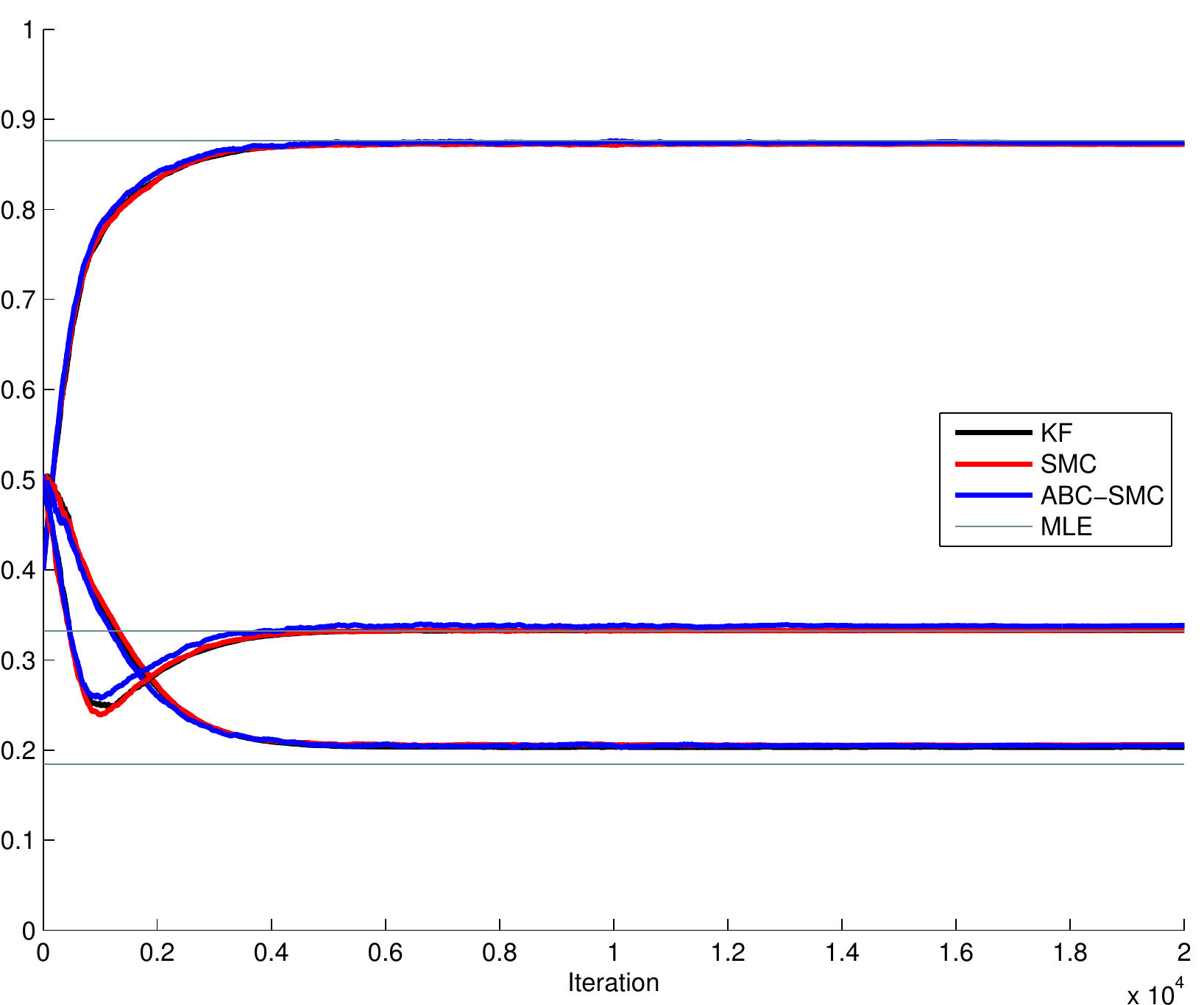}
  \caption{A typical run of the offline parameter estimates obtained by the KF, SMC, and ABC-SMC for the linear Gaussian HMM
, along with ther parameters' offline MLEs. }  
  \label{fig:lg_off}
\end{figure}

\subsubsection{Online MLE}\label{sec:lg_on}

We now consider a larger data set with $n=50,000$ data points, simulated with the previously indicated parameter values. We use the online SPSA method described in Section \ref{sec:spsa}. The SMC (i.e.~on the true model) and ABC-SMC algorithms were employed with the same $N$ (and $M$, $\epsilon$ for ABC-SMC) as in the offline case, and the SPSA sequences are similar to their offline forms, in Section \ref{sec:lg_off}.

We ran fifty independent runs of the each algorithm considered in the previous Section. In Figure \ref{fig:lg_on}, we plot the medians and credible intervals for the 25-75\% and 5-95\% percentiles of the parameter estimates (across the independent runs). The $\widehat{\theta}_k$ converge after $k=20000$ time steps, with the KF and SMC yielding similarly valued estimates. We observe increased variance from left to right in Figure \ref{fig:lg_on}, which we attribute to the randomness of SMC and ABC-SMC respectively. In particular, the expected reduced accuracy of ABC-SMC
against SMC is apparent, but, the bias does not appear to be substantial (for ABC-SMC) in this particular example.

\begin{figure}[h]
  \centering
  \subfigure[Kalman]{\includegraphics[width=0.32\textwidth]{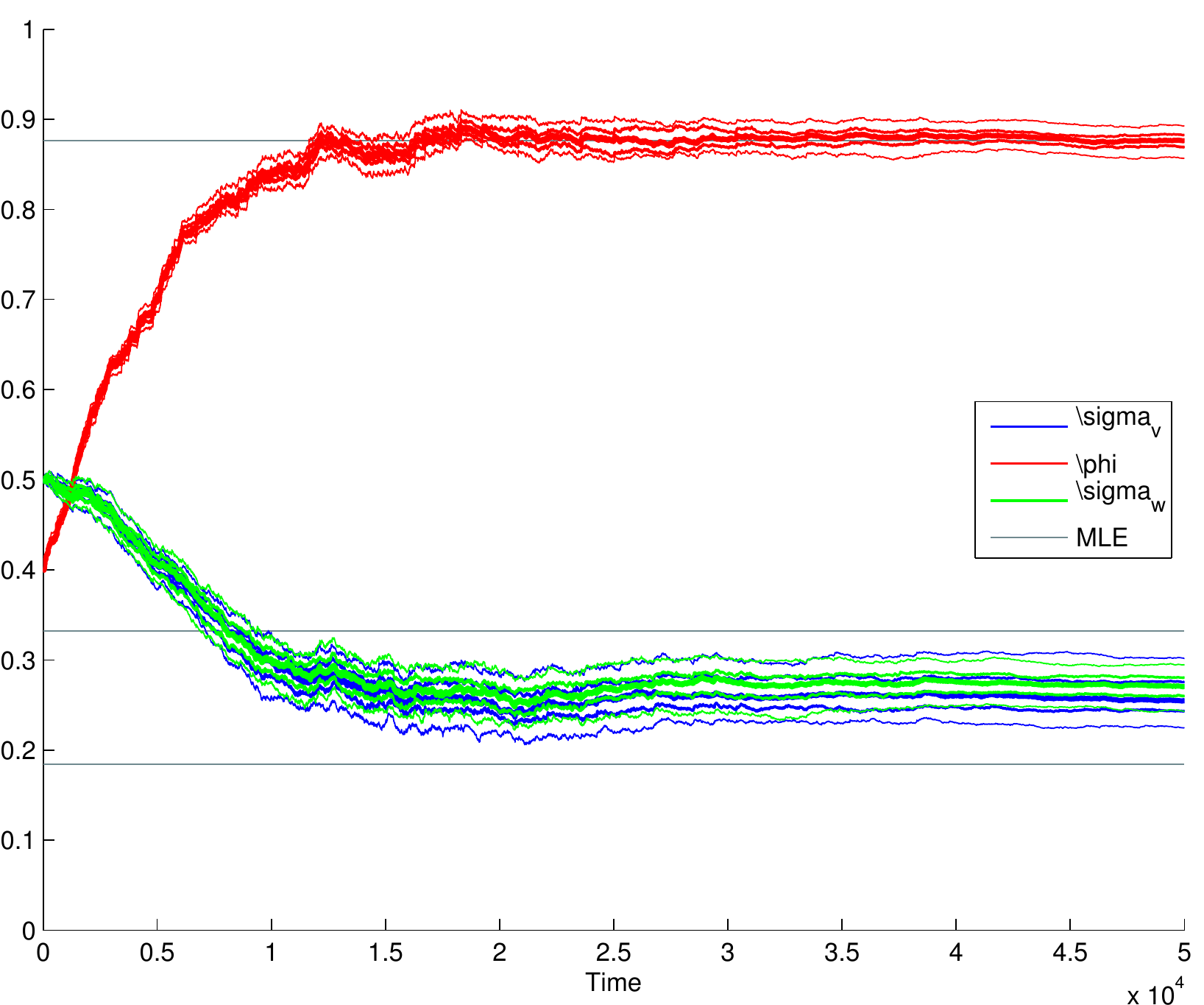}\label{fig:lg_on_kf}}
  \subfigure[Sequential Monte Carlo]{\includegraphics[width=0.32\textwidth]{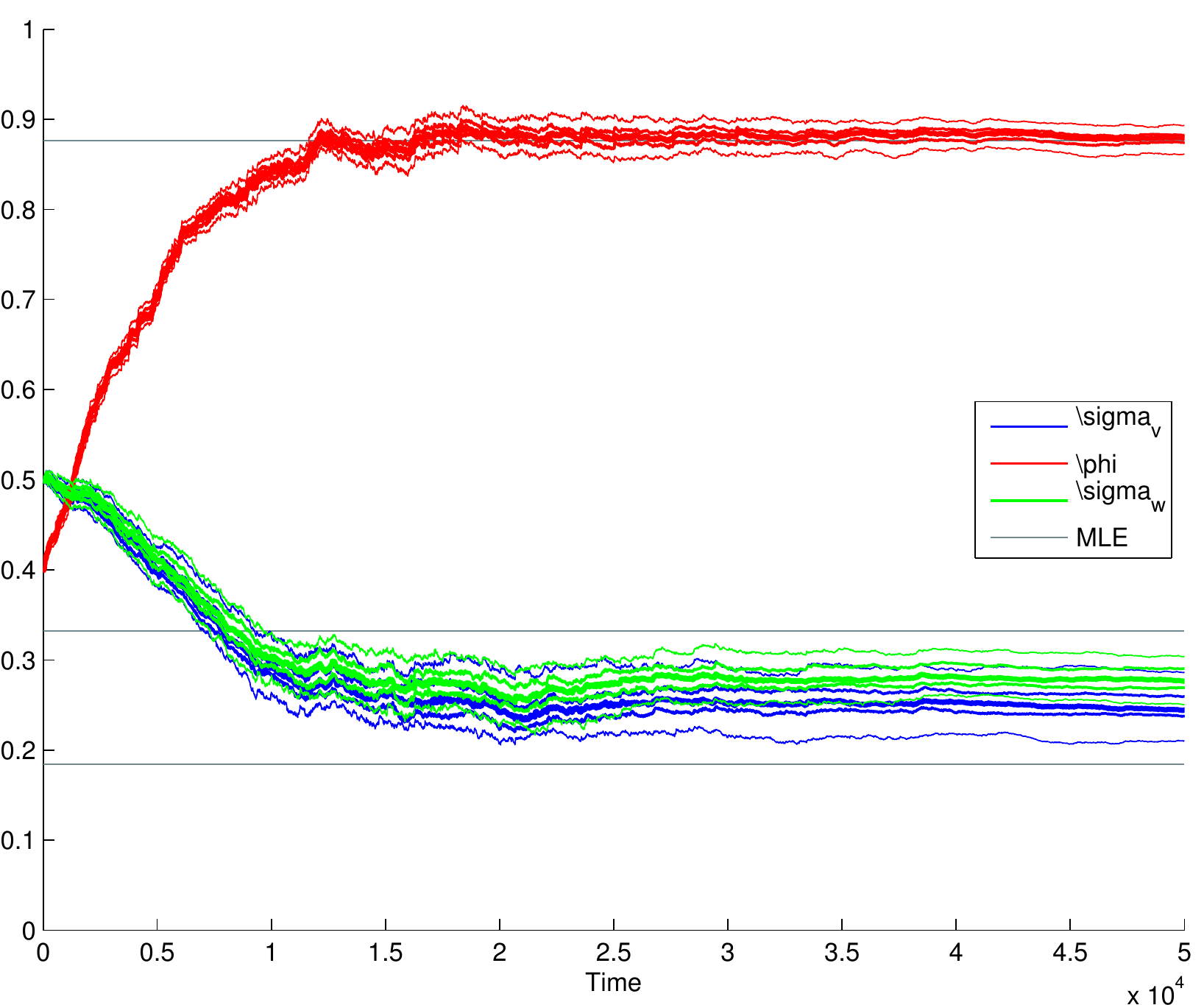}\label{fig:lg_on_pf}}
  \subfigure[SMC-ABC]{\includegraphics[width=0.32\textwidth]{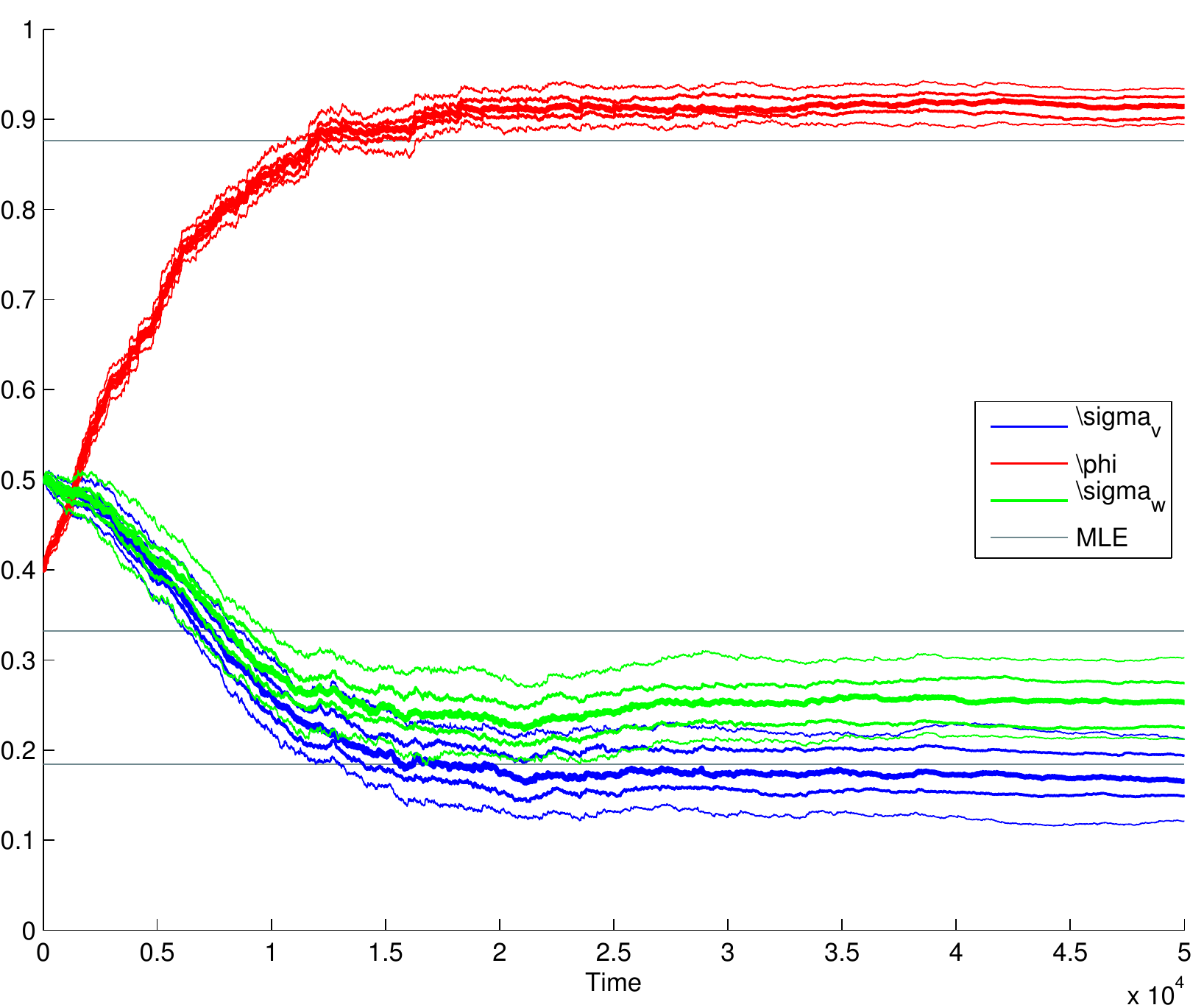}\label{fig:lg_on_abc}}
  \caption{Credible intervals for the 5-95\% and 25-75\% percentiles, and the medians for multiple runs of online parameter estimates streamed by the KF, SMC, and ABC-SMC for the linear Gaussian HMM
. }  
  \label{fig:lg_on}
\end{figure}

\subsection{Lorenz '63 Model} \label{sec:l63}

\subsubsection{Model and Data}

We now consider the following non-linear state-space model with $\mathsf{X}=\mathsf{Y}=\mathbb{R}^3$. The original model is such that hidden process evolves deterministically according to the
Lorenz '63 system of ordinary differential equations,
\begin{align}
    \frac{dX_t(1)}{dt} &= \sigma_{63}\big(X_t(2)-X_t(1)\big)\nonumber\\
    \frac{dX_t(2)}{dt} &= \rho X_t(1) - X_t(2) - X_t(1) X_t(3)\nonumber\\
    \frac{dX_t(3)}{dt} &= X_t(1) X_t(2) - \beta X_t(3).\nonumber
\end{align}
where we recall that the arguments $X_t(j)$ are the $j^{th}-$dimension at time $t$; where $t$ is continuous here. We modify the model to one such that the hidden process is a discrete-time Markov chain with stochastic dynamics:
$$
X_n=f_n(X_{n-1}) +V_n, \quad n\geq 1
$$
where $f_n$ is the $4^{th}$-order approximation Runge Kutta solution to the Lorenz '63 system, $V_{n}\stackrel{\textrm{i.i.d.}}{\sim}\mathcal{N}(0,\tau I_{d_x})$ and $X_0$ is taken as known. Here $\tau$ is used to represent the 
time-discretization.


For the observations:
\begin{align}
  Y_n &= H X_n + Q W_n,\nonumber\quad n\geq 1
\end{align}
where $W_n\stackrel{\textrm{i.i.d.}}{\sim}\mathcal{N}(0,I_{d_y})$, $W_n$ is independent of $V_n$ and $Q$ is the Cholesky root of a Toeplitz matrix defined by the parameters $\kappa$ and $\sigma$ as follows:
\begin{align}
  Q_{ij} &= \sigma S\left( \kappa^{-1} \min(|i-j|,d_y - |i-j|) \right), \quad i,j\in\{1,\ldots,d_y\}\nonumber\\
  S(z) &= \left\{
    \begin{array}{l l}
      1 - \frac{3}{2}z + \frac{1}{2}z^3, & 0\leq z\leq 1\\
      0, & z>1
    \end{array}\right. ,\nonumber
\end{align}
and
\begin{align}
  H_{ij} &= \left\{
    \begin{array}{l l}
      \frac{1}{2}, & i=j\\
      \frac{1}{2}, & i=j-1\\
      0, & i\neq j
    \end{array}\right. .\nonumber
\end{align}
When  $\theta=(\kappa,\sigma,\sigma_{63},\rho,\beta)=(2.5,2,10,28,\frac{8}{3})$, $n=5000$ and $\tau=0.05$, 
a visualisation of the Lorenz '63 (hidden) dynamics is shown in Figure \ref{fig:state} and the associated simulated dataset in \ref{fig:data}.
\begin{figure}[ht]
  \centering
  \subfigure[Hidden Markov $x_{1:5000}$]{\includegraphics[width=0.33\textwidth]{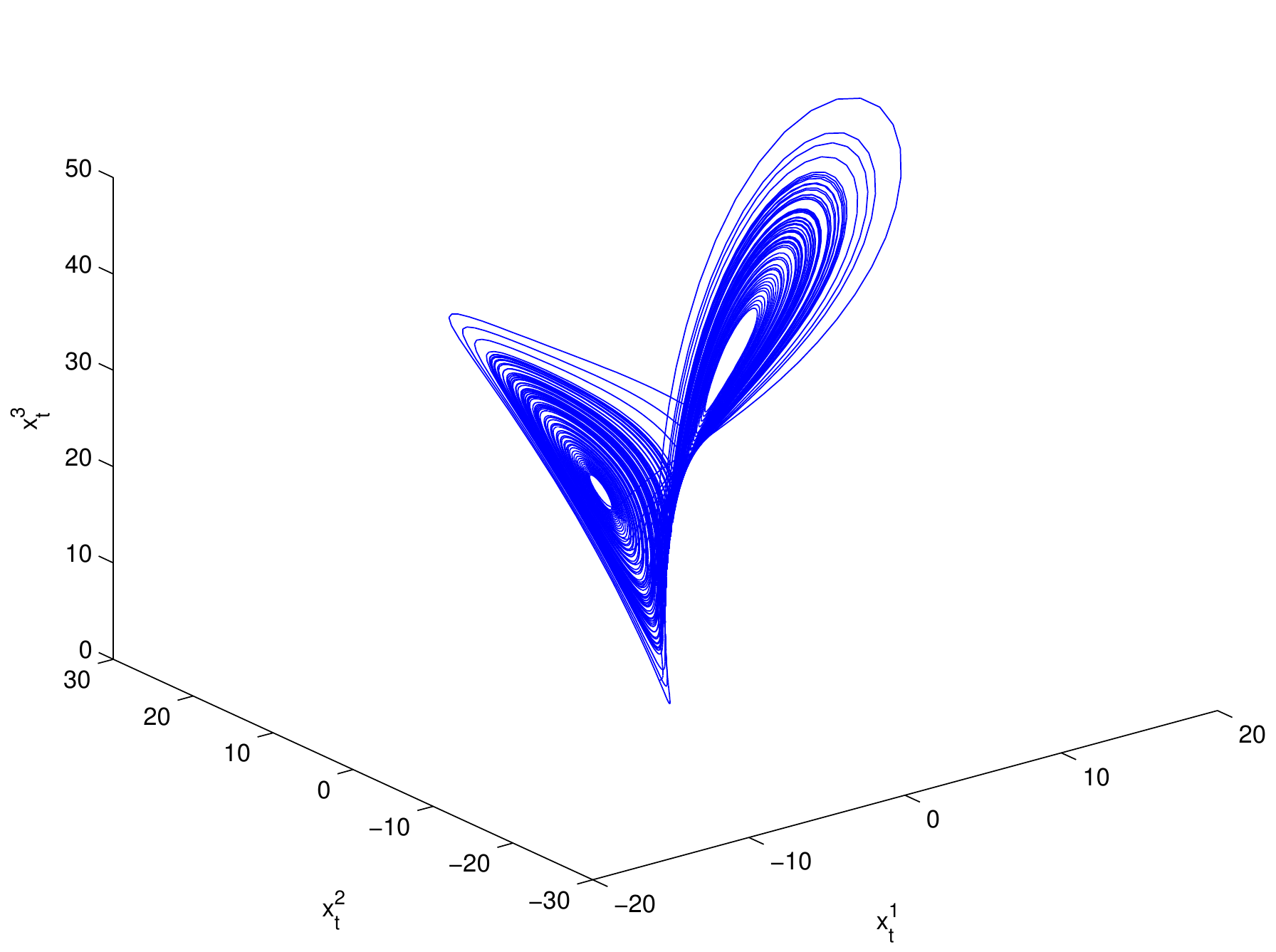}\label{fig:state}}
  \subfigure[Observed data $y_{1:5000}$]{\includegraphics[width=0.33\textwidth]{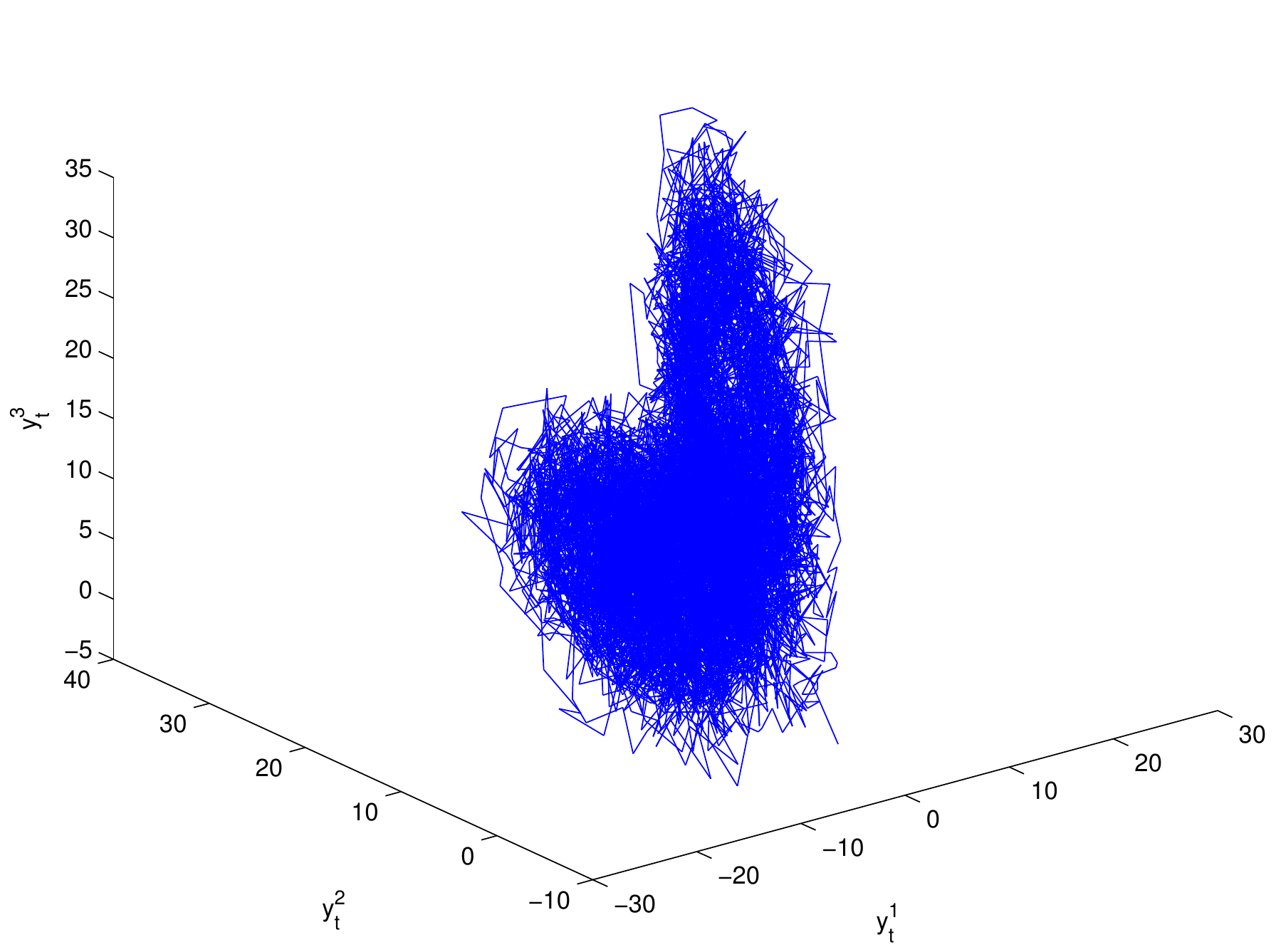}\label{fig:data}}
  \caption{Evolution of the 3-dimensional Lorenz '63 HMM in Section \ref{sec:l63} }
  \label{fig:lorenz}
\end{figure}

For the simulated dataset in Figure \ref{fig:data}, we use ABC-SMC to obtain online parameter estimates for $\theta$ and we study the performance of these estimates under different settings. We will use $\widehat{\theta}^{N,M}_{\epsilon,n}$ to denote the estimate of $\theta$
at time $n$,
 that was estimated using $N$ particles, $M$ pseudo-observations and a Gaussian kernel with covariance $\epsilon I_{d_y}$. We will compare the behaviour of the algorithm as each of $N, M, n, \epsilon$ varies. 

\subsubsection{Numerical Results}

We now examine the performance of the algorithm with $N\in\{100, 1000, 10000\}$. For each value of $N$, we ran fifty independent runs of ABC-SMC, using $M=10$ and $\epsilon=1$. In Figures \ref{fig:l63_boxplot_kappa_N}-\ref{fig:l63_boxplot_rho_N} we plot boxplots of the terminal parameter estimates, $\widehat{\theta}^{N,10}_{1,5000}$, against their true values marked by dotted green lines. In Figures \ref{fig:l63_biasvar_kappa_N}-\ref{fig:l63_biasvar_rho_N} we plot the absolute value of the Monte Carlo (MC) bias (that is, the absolute difference between the estimate and true value), in red, and the MC standard deviation, in blue. The MC bias and standard deviation points are fitted with least-squares curves proportional to $\frac{1}{\sqrt{N}}$, the standard MC rates with which the accuracy of the estimates is expected to improve. 
With regards to the variability of the estimates one sees the expected reduction in variability as $N$ increases. The bias is harder to quantify; it will not necessarily be the case that as $N$ grows the bias falls. This is because there is a Monte Carlo bias (from the SMC), an optimization bias
(from the SPSA), an approximation bias (from the ABC) and the fact that the data have been generated from the model (so the true static parameters might not be exact). Increasing $N$ can only deal with the SMC bias (which for estimates with parameters fixed is $\mathcal{O}(N^{-1})$), but the addition of parameter estimation again does not make it easy to understand what happens here. The main point is simply as expected; one obtains significantly more reproducible/consistent results as $N$ grows.

\begin{figure}[h]
  \centering
  \subfigure[$\hat{\kappa}^{N,10}_{1,5000}$]{\includegraphics[width=0.24\textwidth,,height=5.5cm]{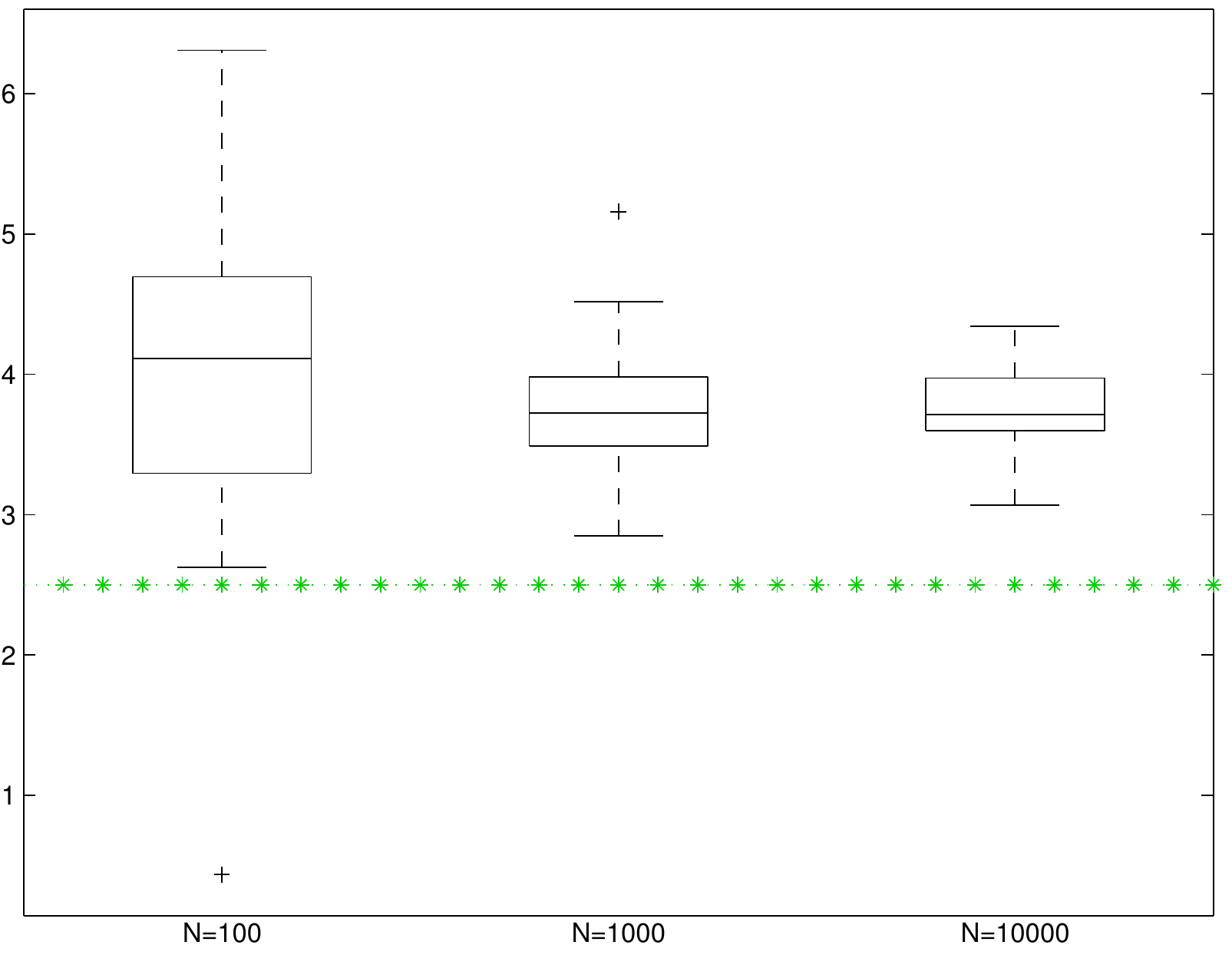}\label{fig:l63_boxplot_kappa_N}}
  \subfigure[$\hat{\sigma}^{N,10}_{1,5000}$]{\includegraphics[width=0.24\textwidth,height=5.5cm]{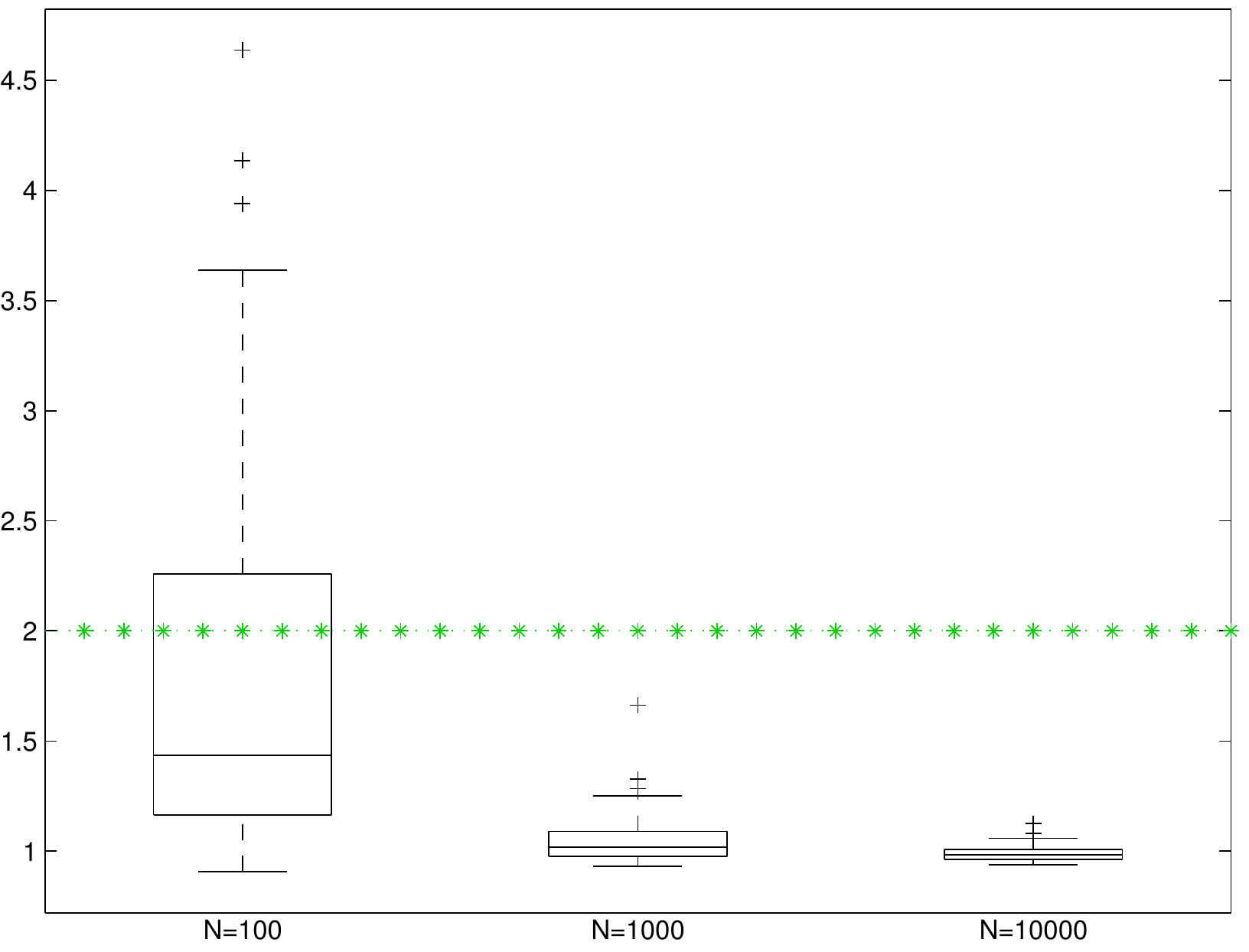}\label{fig:l63_boxplot_sigma_N}}
  \subfigure[$\hat{\sigma}^{N,10}_{63_{1,5000}}$]{\includegraphics[width=0.24\textwidth,height=5.5cm]{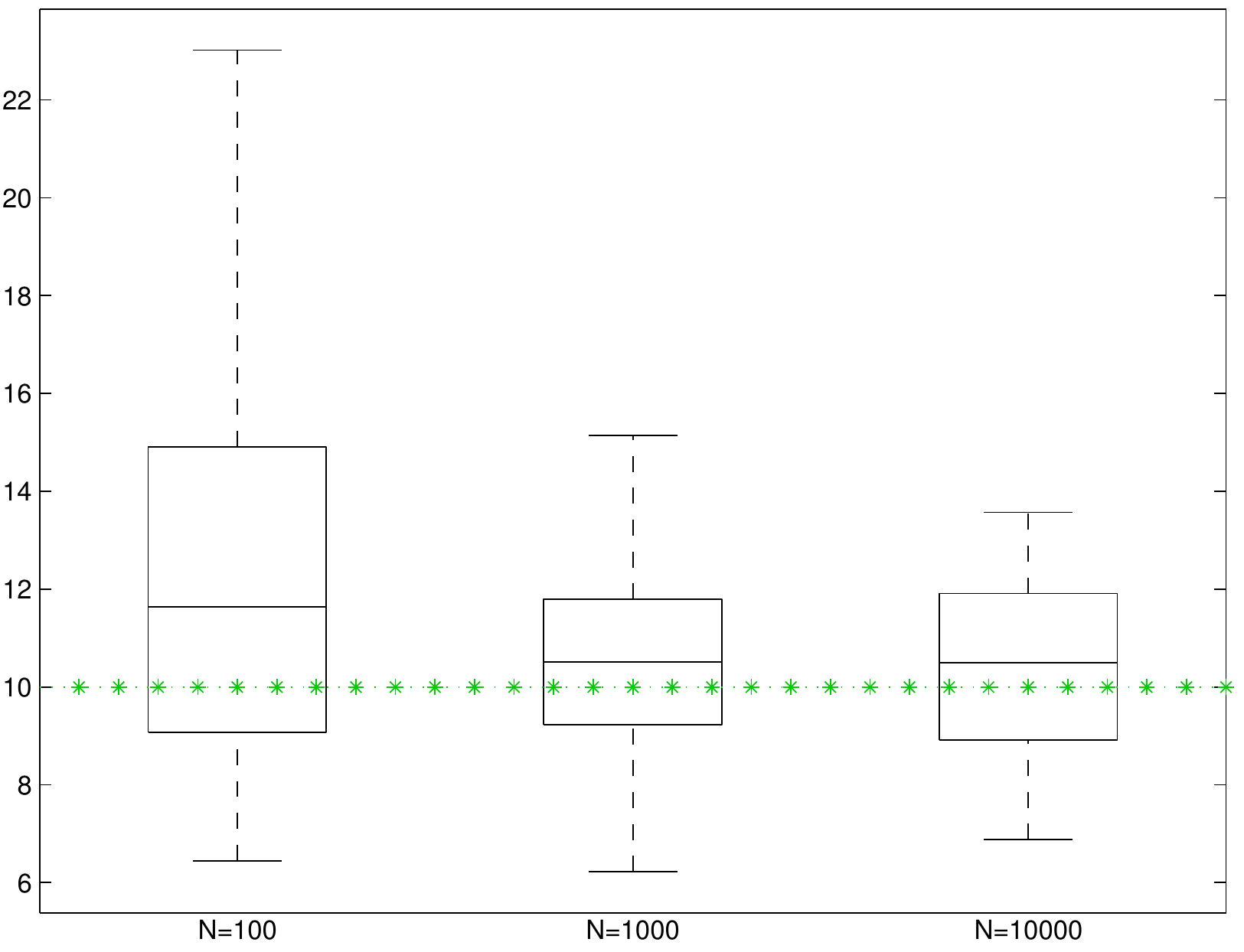}\label{fig:l63_boxplot_sigma_63_N}}
  \subfigure[$\hat{\rho}^{N,10}_{1,5000}$]{\includegraphics[width=0.24\textwidth,height=5.5cm]{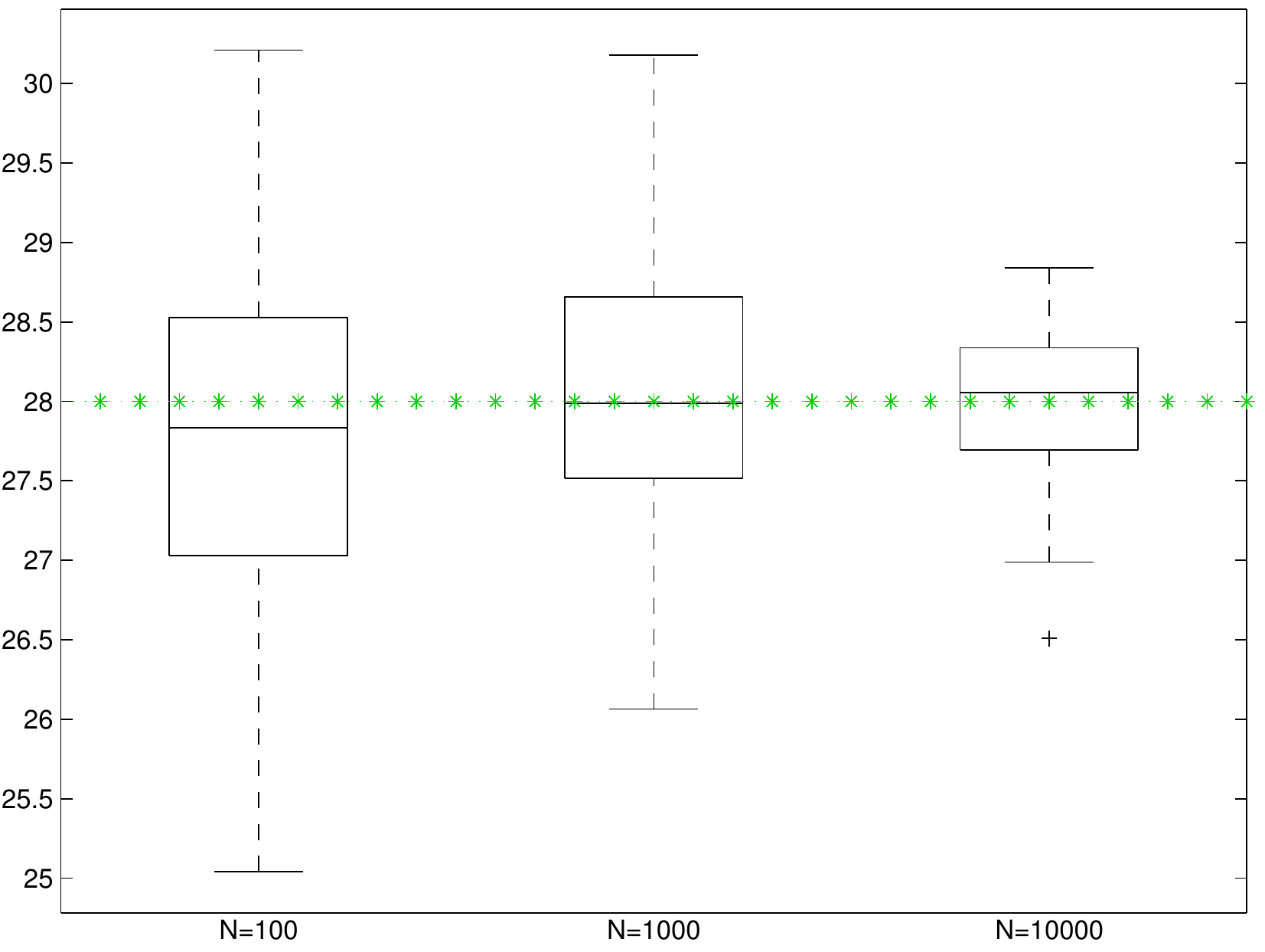}\label{fig:l63_boxplot_rho_N}}\\
  \subfigure[$\hat{\kappa}^{N,10}_{1,5000}$]{\includegraphics[width=0.24\textwidth,height=5.5cm]{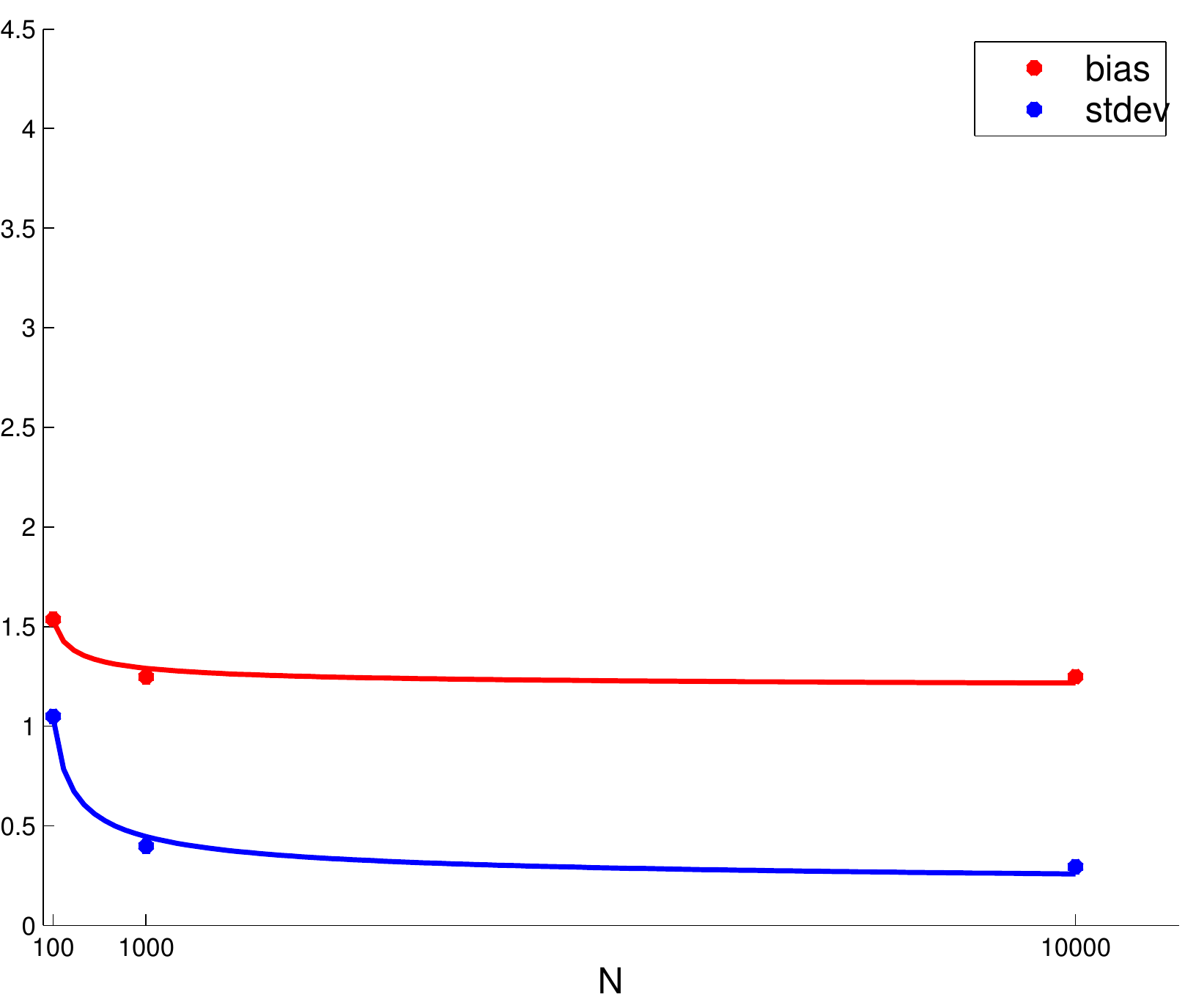}\label{fig:l63_biasvar_kappa_N}}
  \subfigure[$\hat{\sigma}^{N,10}_{1,5000}$]{\includegraphics[width=0.24\textwidth,height=5.5cm]{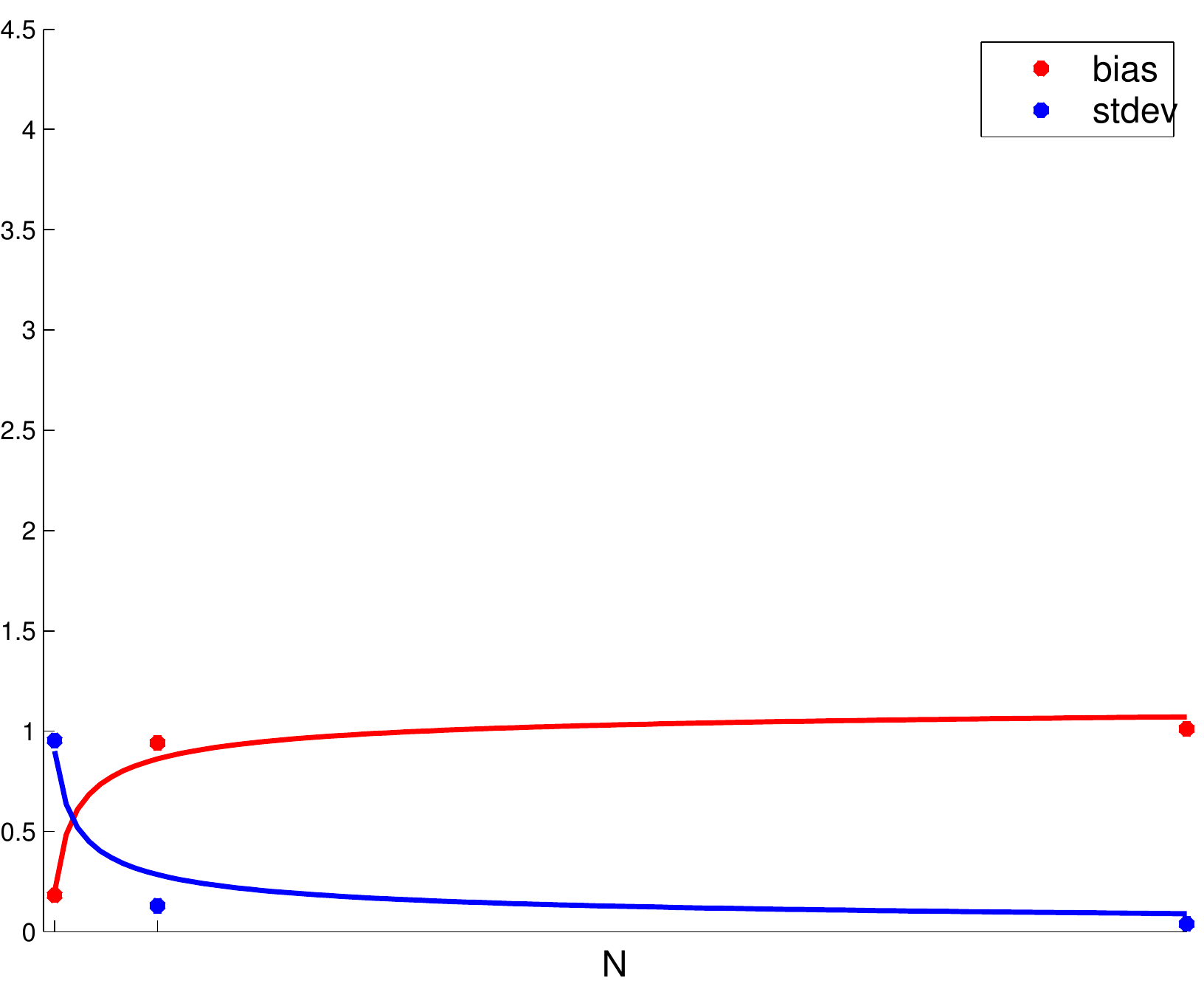}\label{fig:l63_biasvar_sigma_N}}
  \subfigure[$\hat{\sigma}^{N,10}_{63_{1,5000}}$]{\includegraphics[width=0.24\textwidth,height=5.5cm]{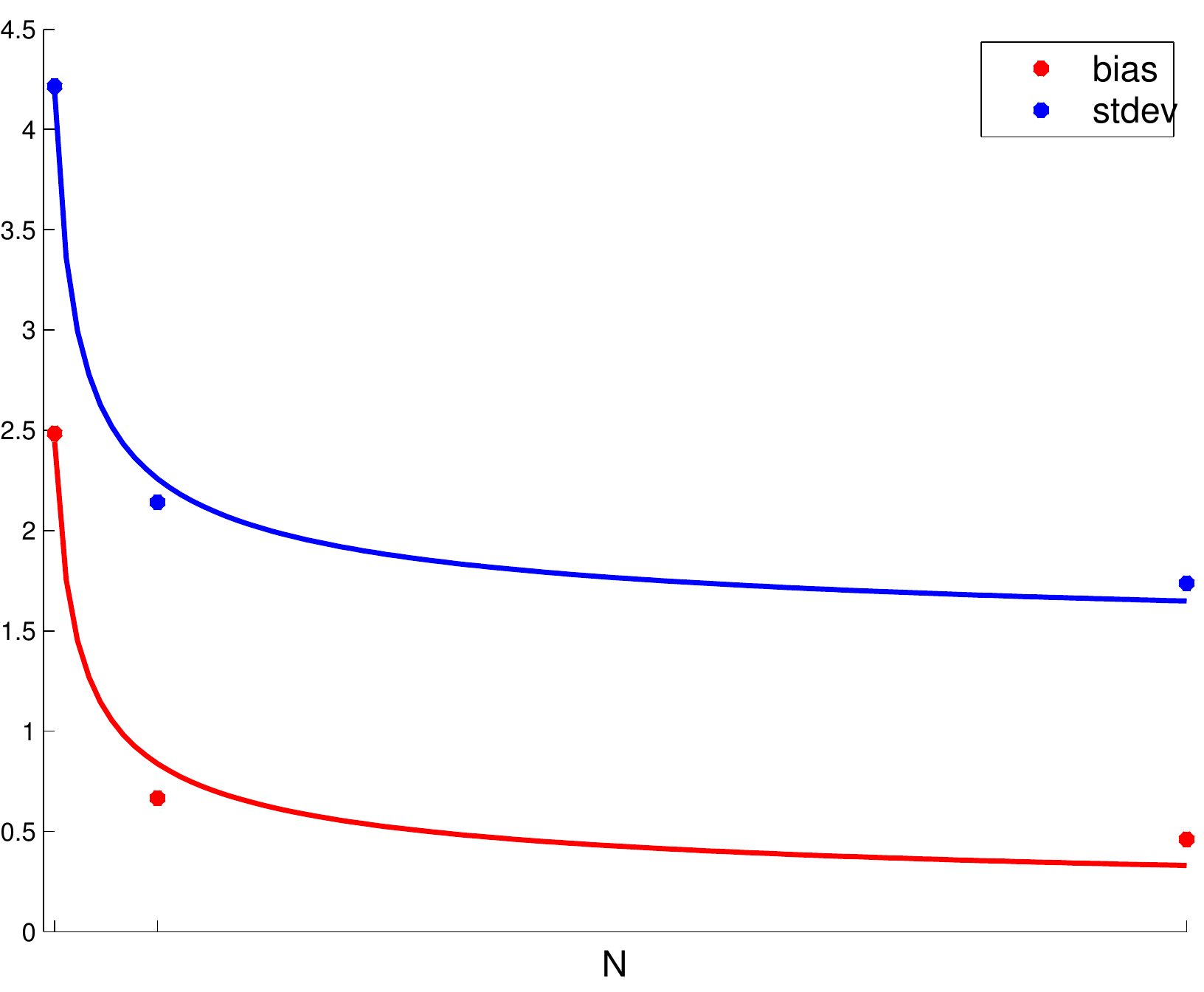}\label{fig:l63_biasvar_sigma_63_N}}
  \subfigure[$\hat{\rho}^{N,10}_{1,5000}$]{\includegraphics[width=0.24\textwidth,height=5.5cm]{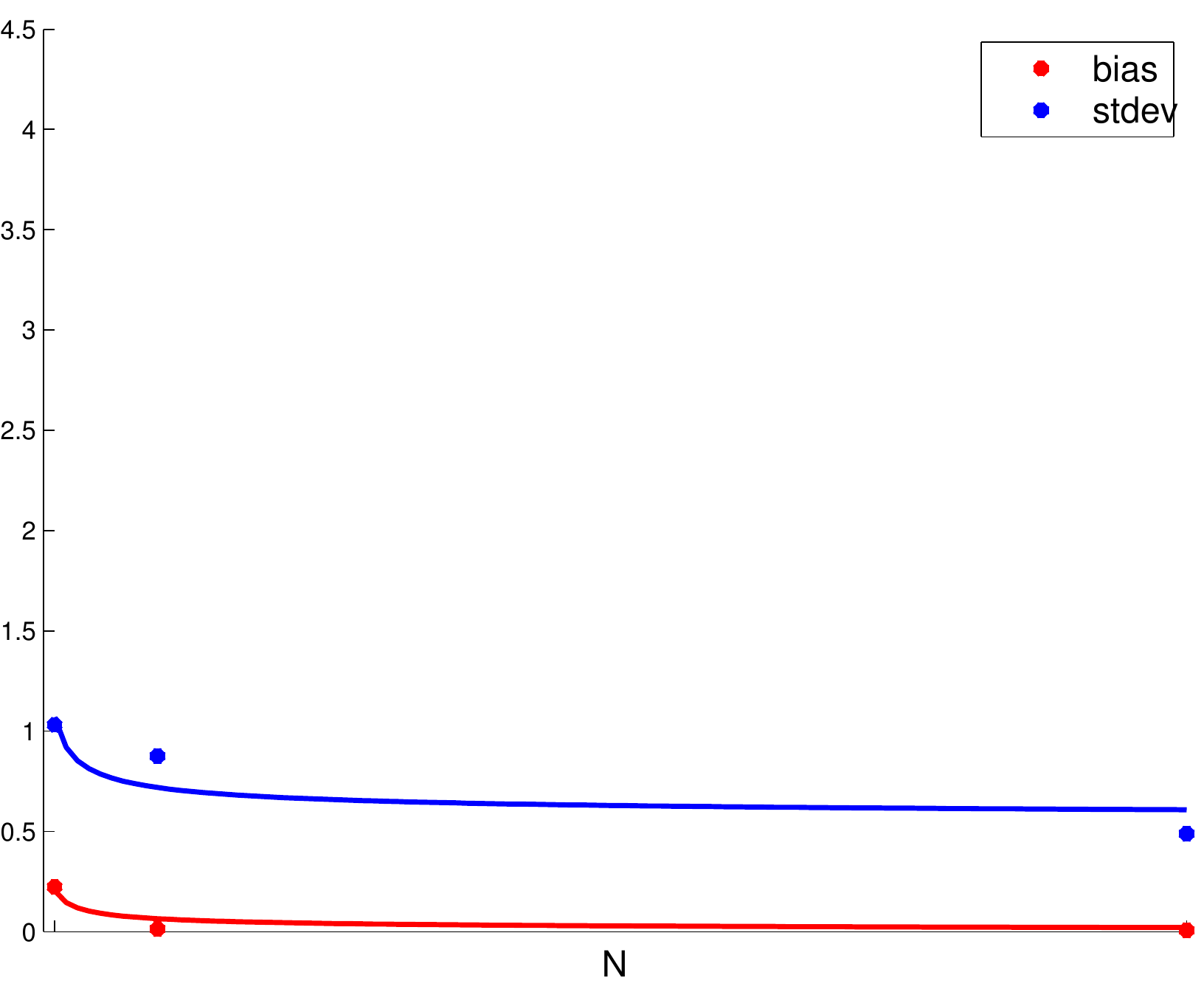}\label{fig:l63_biasvar_rho_N}}
  \caption{$\widehat{\theta}^{N,10}_{1,5000}$ when estimating $\theta=(\kappa,\sigma,\sigma_{63},\rho)$ of the Lorenz '63 HMM
, using ABC-SMC with values of $N\in\{100, 1000, 10000\}$. Figures \ref{fig:l63_boxplot_kappa_N}-\ref{fig:l63_boxplot_rho_N} show the $\widehat{\theta}^{N,10}_{1,5000}$ in boxplots and their true values in dotted green lines. Figures \ref{fig:l63_biasvar_kappa_N}-\ref{fig:l63_biasvar_rho_N} show the MC bias and MC standard deviation of the $\widehat{\theta}^{N,10}_{1,5000}$, in red and blue, with curves of least squared-error $\propto\frac{1}{\sqrt{N}}$. }
  \label{fig:l63_N}
\end{figure}

Next we look at the influence of the pseudo-observations. For $M\in\{1, 3, 5, 10, 25, 50\}$, we show in Figures \ref{fig:l63_boxplot_kappa_M}-\ref{fig:l63_boxplot_rho_M} the boxplots of the terminal estimates $\widehat{\theta}^{5000,M}_{1,5000}$ from fifty independent runs of ABC-SMC, using $N=5000$ and $\epsilon=1$. The dotted green lines marks the true $\theta$ values which generate the data. In Figures \ref{fig:l63_biasvar_kappa_M}-\ref{fig:l63_biasvar_rho_M}, the MC biases and the MC standard deviations of the $\widehat{\theta}^{5000,M}_{1,5000}$ are plotted point-wise, in red and blue, with lines of least squared-error fit to them. 
As $M$ increases, we see reductions in the MC variance. This reduction in variance can be attributed to the fact that the ABC-SMC algorithm approximates an algorithm that does not simulate the pseudo data; hence by a Rao-Blackwellization argument, one expects a reduction in variance. 
These results are consistent with~\cite{del2011adaptive}.
For this example, after $M\geq 5$, there seems to be little impact on the accuracy of the estimates; it is not clear whether such performance occurs for other examples.

\begin{figure}[hb]
  \centering
  \subfigure[$\hat{\kappa}^{5000,M}_{1,5000}$]{\includegraphics[width=0.24\textwidth,height=5.5cm]{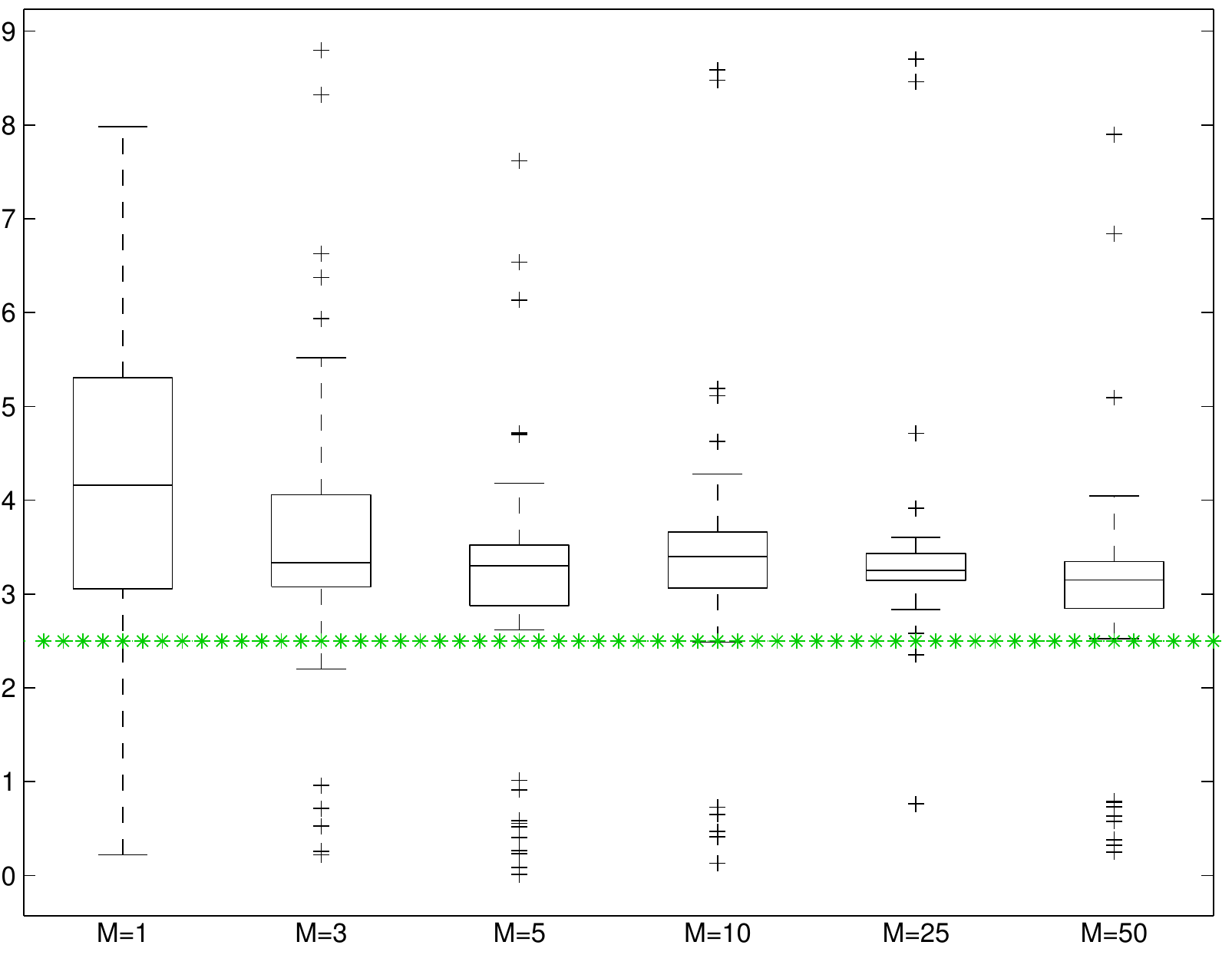}\label{fig:l63_boxplot_kappa_M}}
  \subfigure[$\hat{\sigma}^{5000,M}_{1,5000}$]{\includegraphics[width=0.24\textwidth,height=5.5cm]{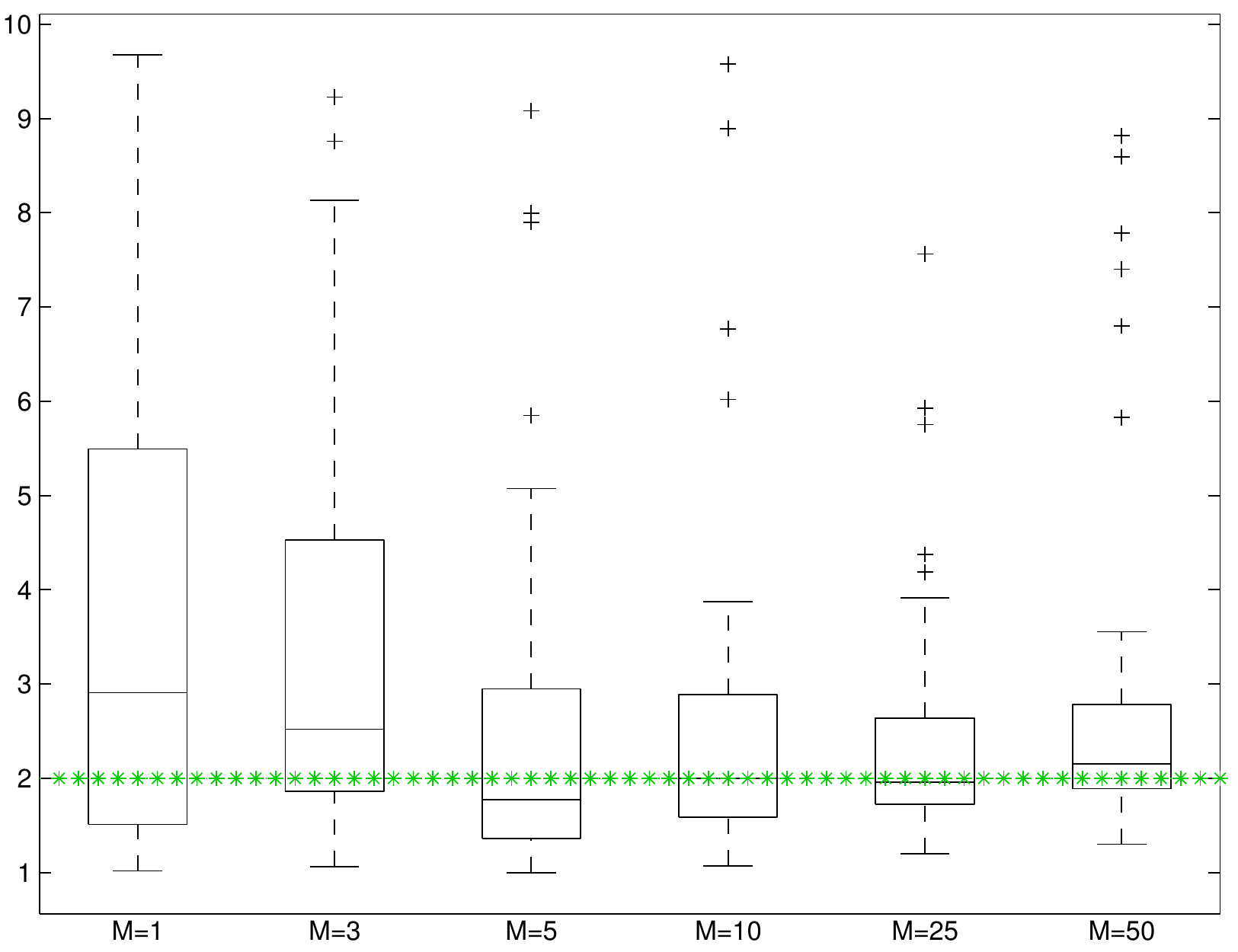}\label{fig:l63_boxplot_sigma_M}}
  \subfigure[$\hat{\sigma}_{63_{1,5000}}^{5000,M}$]{\includegraphics[width=0.24\textwidth,height=5.5cm]{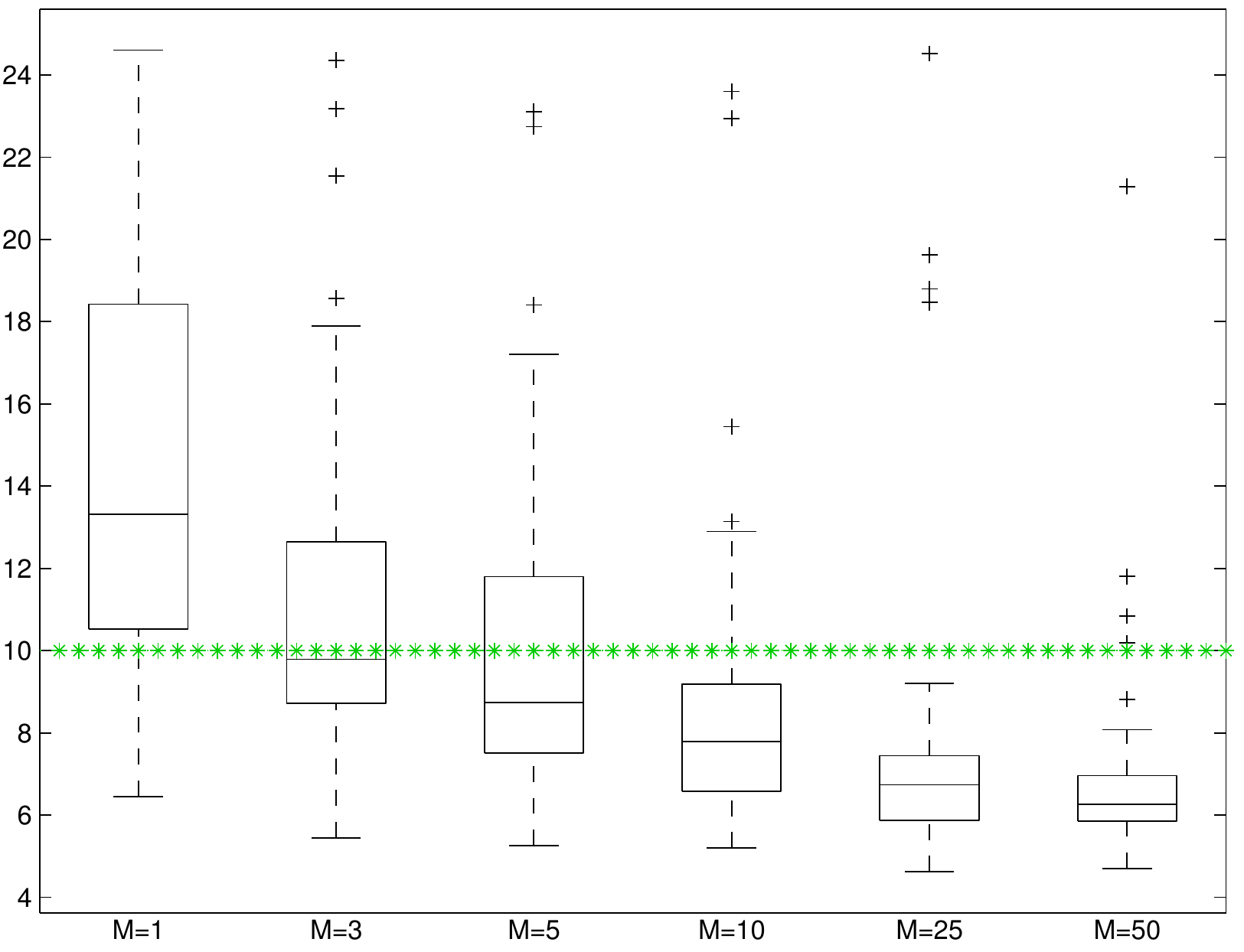}\label{fig:l63_boxplot_sigma_63_M}}
  \subfigure[$\hat{\rho}^{5000,M}_{1,5000}$]{\includegraphics[width=0.24\textwidth,height=5.5cm]{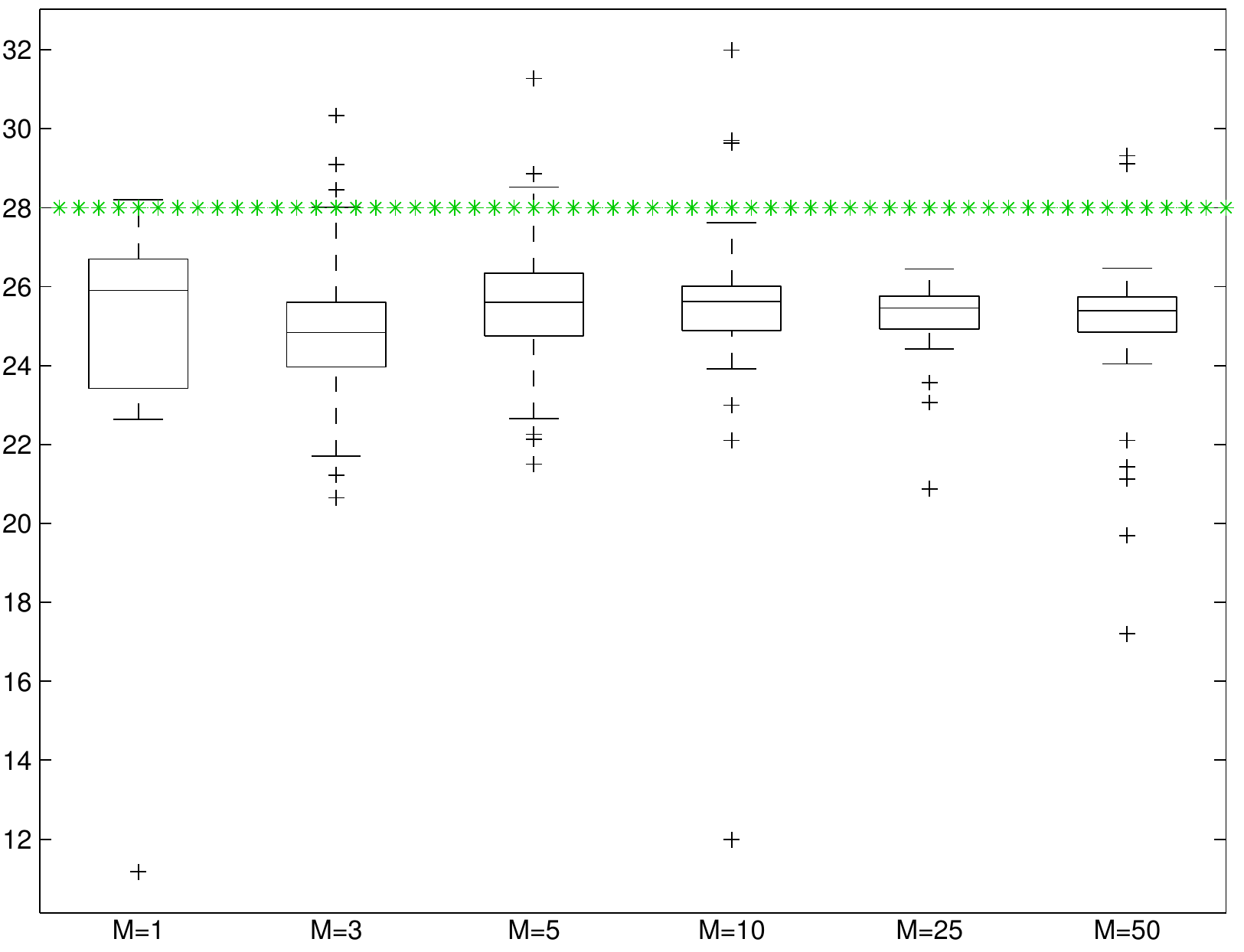}\label{fig:l63_boxplot_rho_M}}\\
  \subfigure[$\hat{\kappa}^{5000,M}_{1,5000}$]{\includegraphics[width=0.24\textwidth,height=5.5cm]{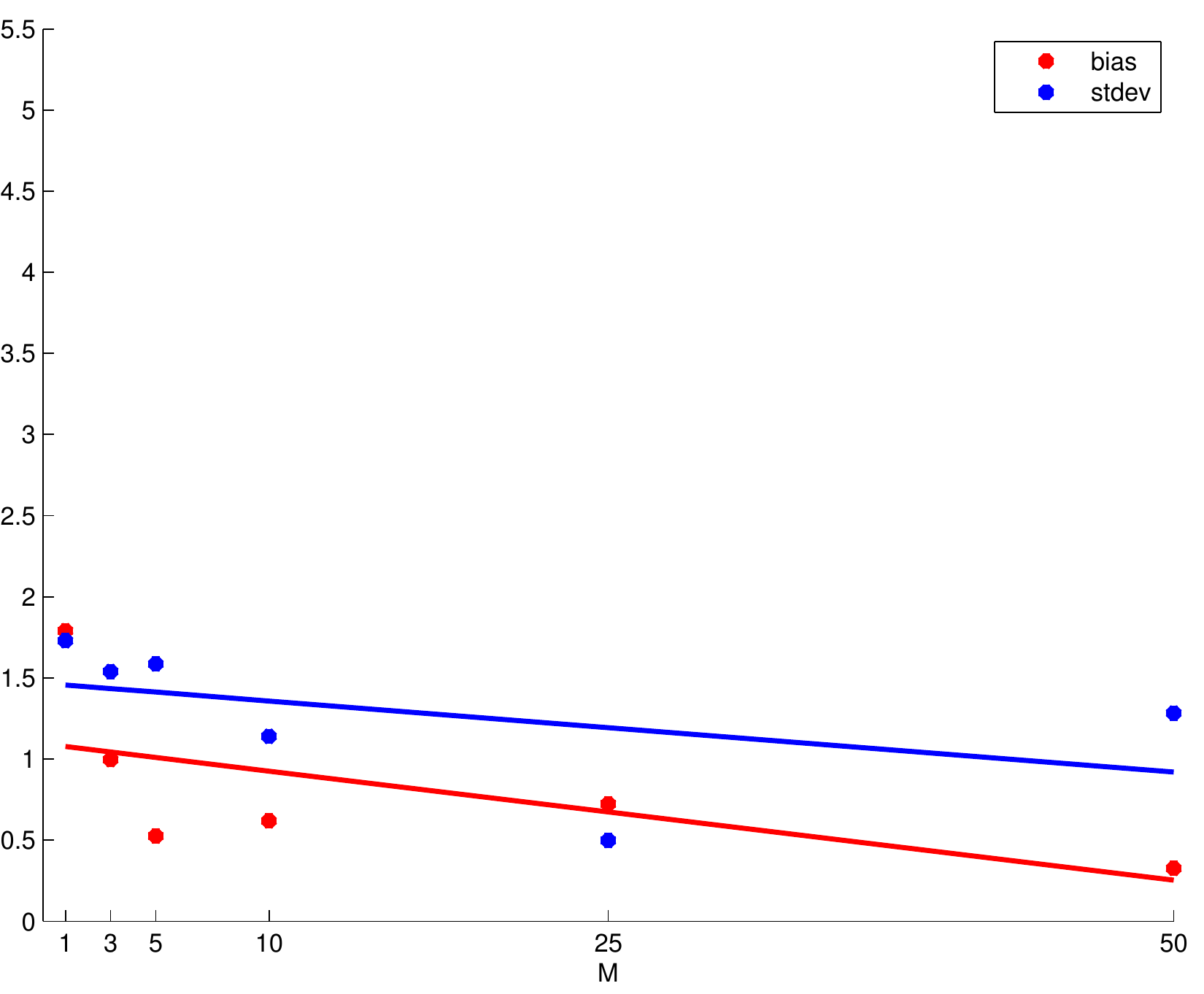}\label{fig:l63_biasvar_kappa_M}}
  \subfigure[$\hat{\sigma}^{5000,M}_{1,5000}$]{\includegraphics[width=0.24\textwidth,height=5.5cm]{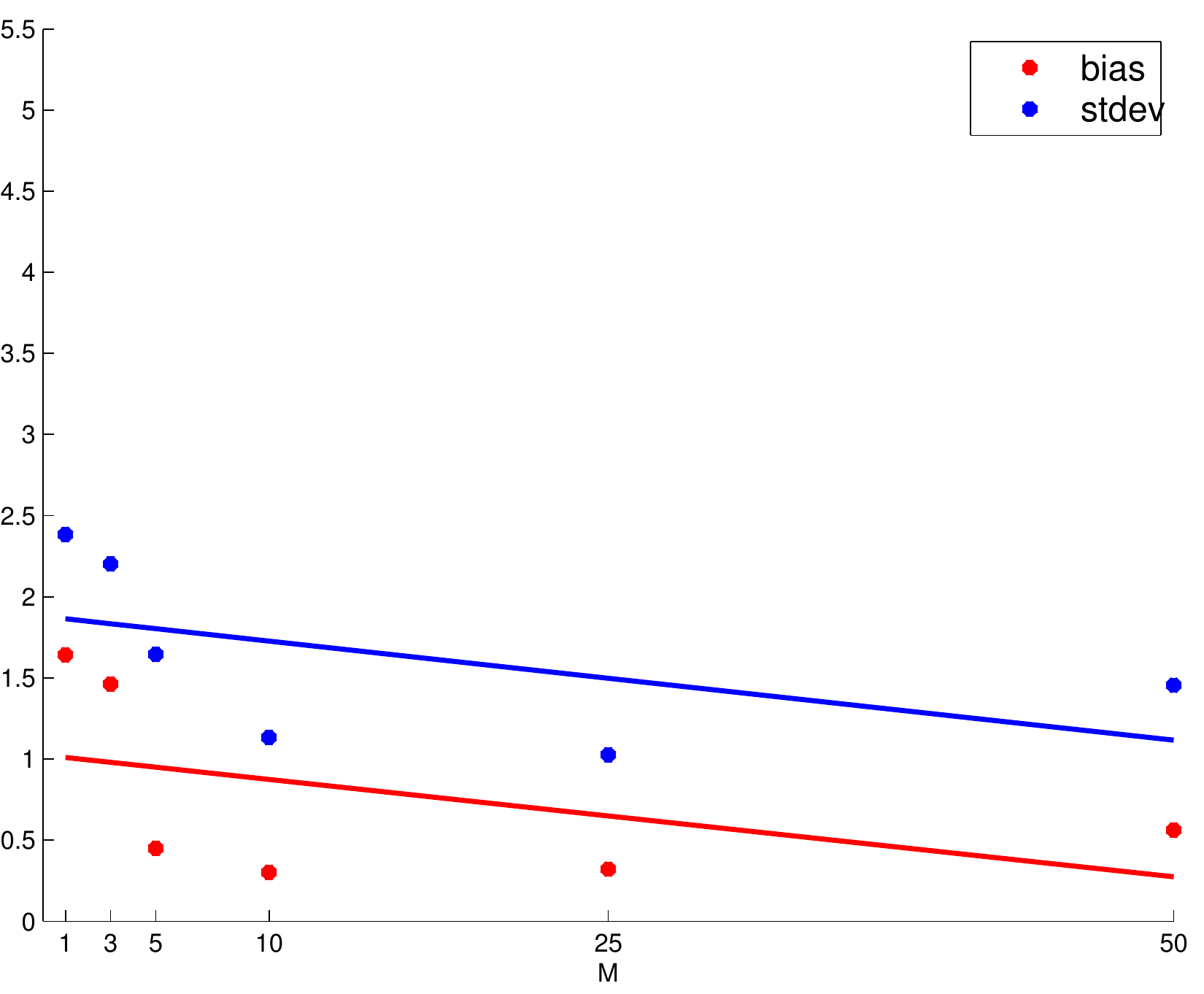}\label{fig:l63_biasvar_sigma_M}}
  \subfigure[$\hat{\sigma}_{63_{1,5000}}^{5000,M}$]{\includegraphics[width=0.24\textwidth,height=5.5cm]{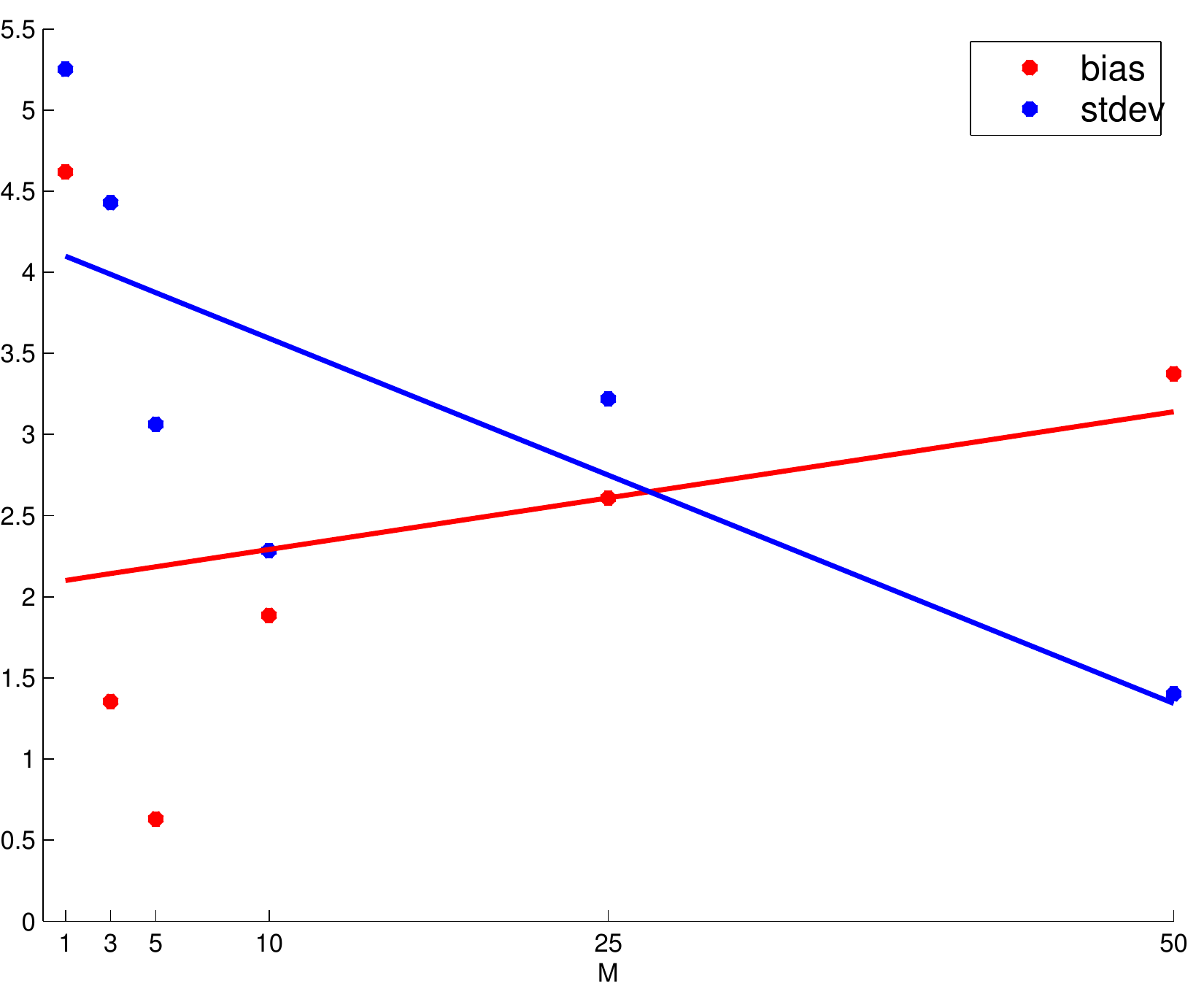}\label{fig:l63_biasvar_sigma_63_M}}
  \subfigure[$\hat{\rho}^{5000,M}_{1,5000}$]{\includegraphics[width=0.24\textwidth,height=5.5cm]{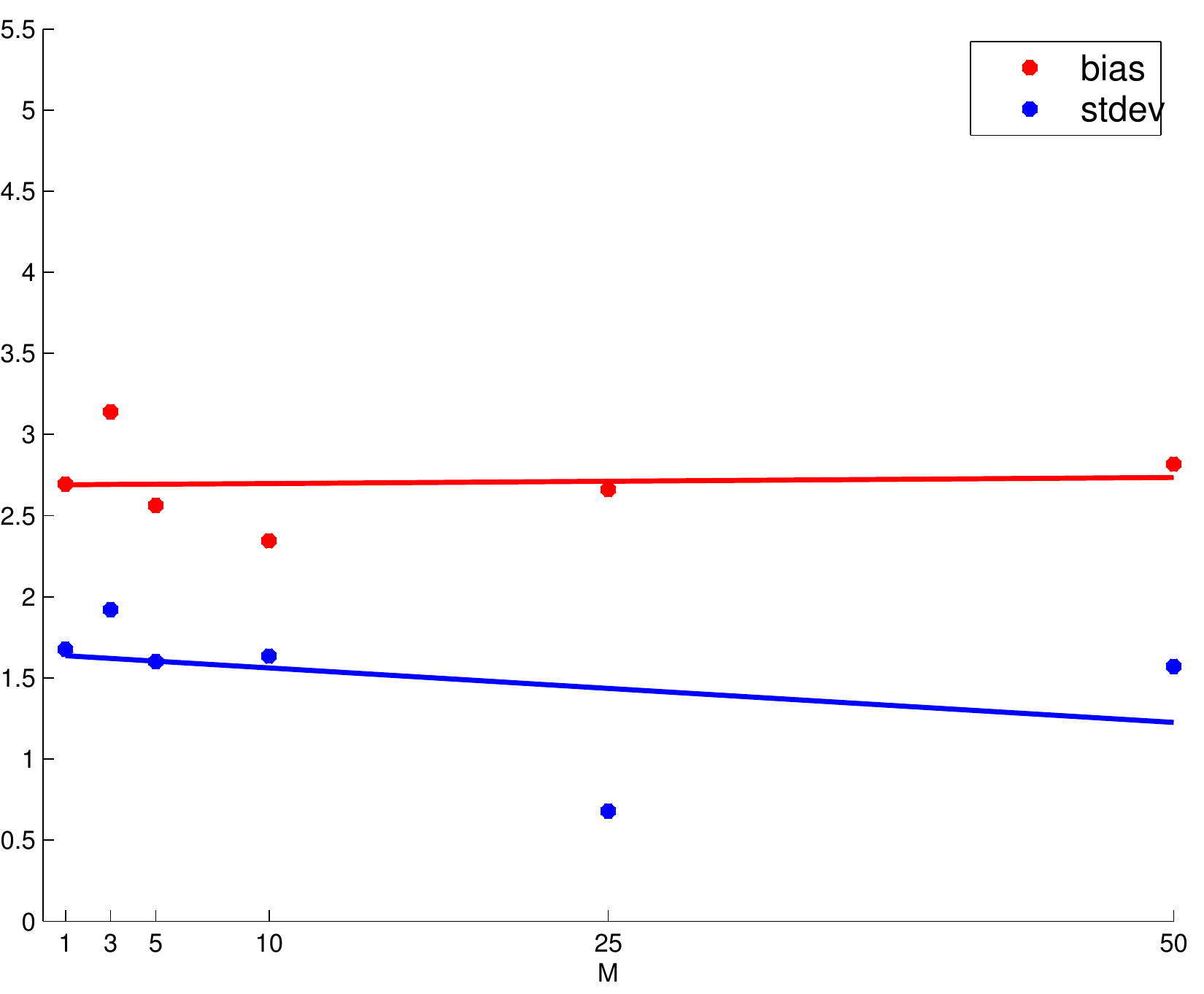}\label{fig:l63_biasvar_rho_M}}
  \caption{ $\widehat{\theta}^{5000,M}_{1,5000}$ when estimating $\theta=(\kappa,\sigma,\sigma_{63},\rho)$ of the Lorenz '63 HMM
, using ABC-SMC with values of $M\in\{1, 3, 5, 10, 25, 50\}$. Figures \ref{fig:l63_boxplot_kappa_M}-\ref{fig:l63_boxplot_rho_M} show the $\widehat{\theta}^{5000,M}_{1,5000}$ in boxplots and their true values in dotted green lines. Figures \ref{fig:l63_biasvar_kappa_M}-\ref{fig:l63_biasvar_rho_M} show the MC bias and MC standard deviation of the $\widehat{\theta}^{5000,M}_{1,5000}$, in red and blue, with lines of least squared-error. }
  \label{fig:l63_M}
\end{figure}

We now vary $n$; for $n\in\{5000, 10,000, 15,000\}$. We ran fifty independent runs of ABC-SMC using $N=200$, $M=10$, and $\epsilon=1$, and plotted boxplots of the terminal estimates $\widehat{\theta}^{200,10}_{1,n}$, in Figures \ref{fig:l63_boxplot_kappa_n}-\ref{fig:l63_boxplot_rho_n}, against the true values of $\theta$ marked in dotted green lines. Recall that recursive maximum likelihood estimation tries to maximise $\frac{1}{n}\log(p_{\theta,\epsilon}(y_{1:n}))$, so we expect $n$ not to have a great effect on the bias nor the variance (also due to the bias results in Section \ref{sec:result} and the subsequent consistency results in \cite{dean,dean1}). This is confirmed in Figures \ref{fig:l63_biasvar_kappa_n}-\ref{fig:l63_biasvar_rho_n}, where the absolute value of the MC biases and the MC standard deviations have been plotted in red and blue, and fitted with linear lines of least squared-error.

\begin{figure}[h]
  \centering
  \subfigure[$\hat{\kappa}^{200,10}_{1,n}$]{\includegraphics[width=0.24\textwidth,height=5.5cm]{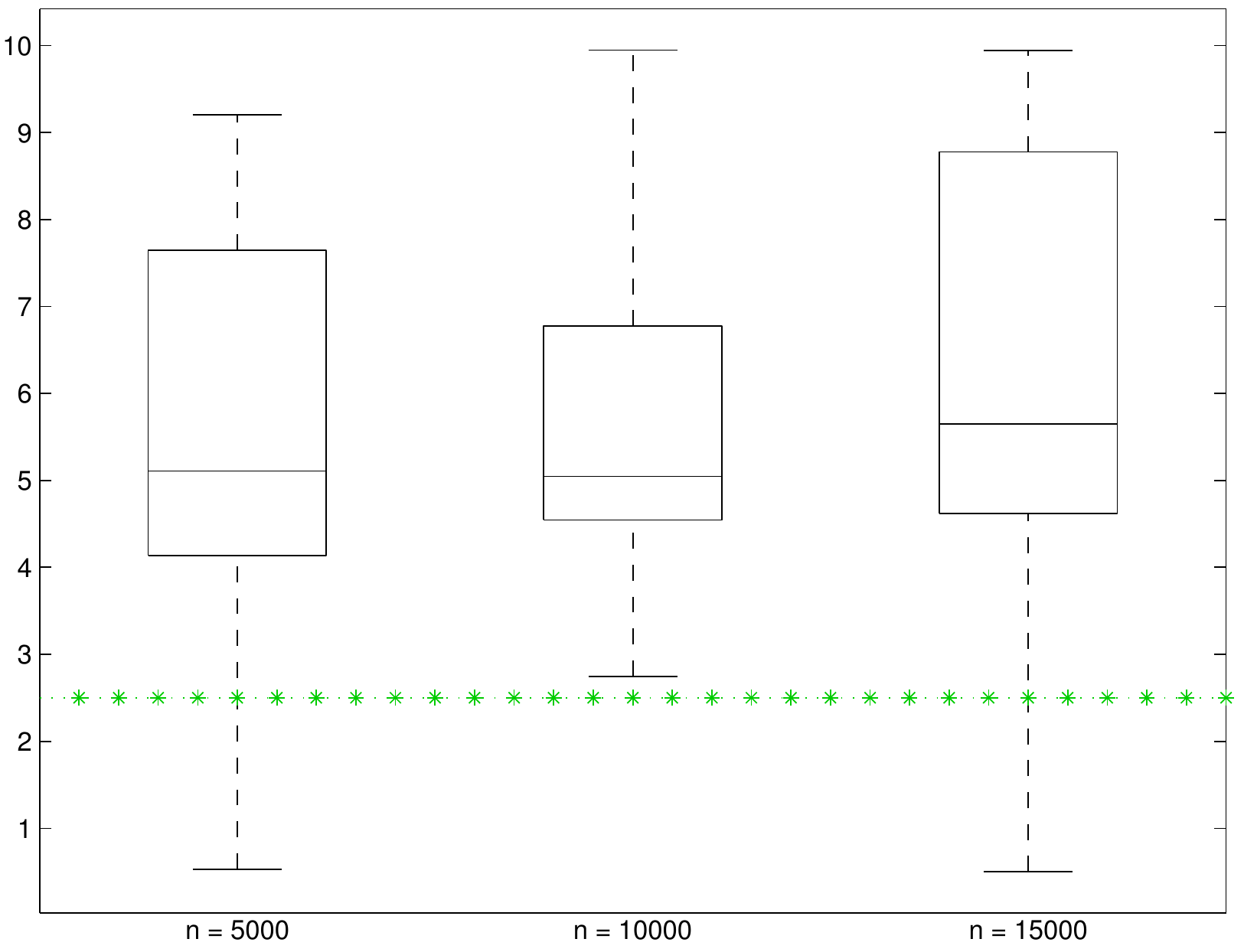}\label{fig:l63_boxplot_kappa_n}}
  \subfigure[$\hat{\sigma}^{200,10}_{1,n}$]{\includegraphics[width=0.24\textwidth,height=5.5cm]{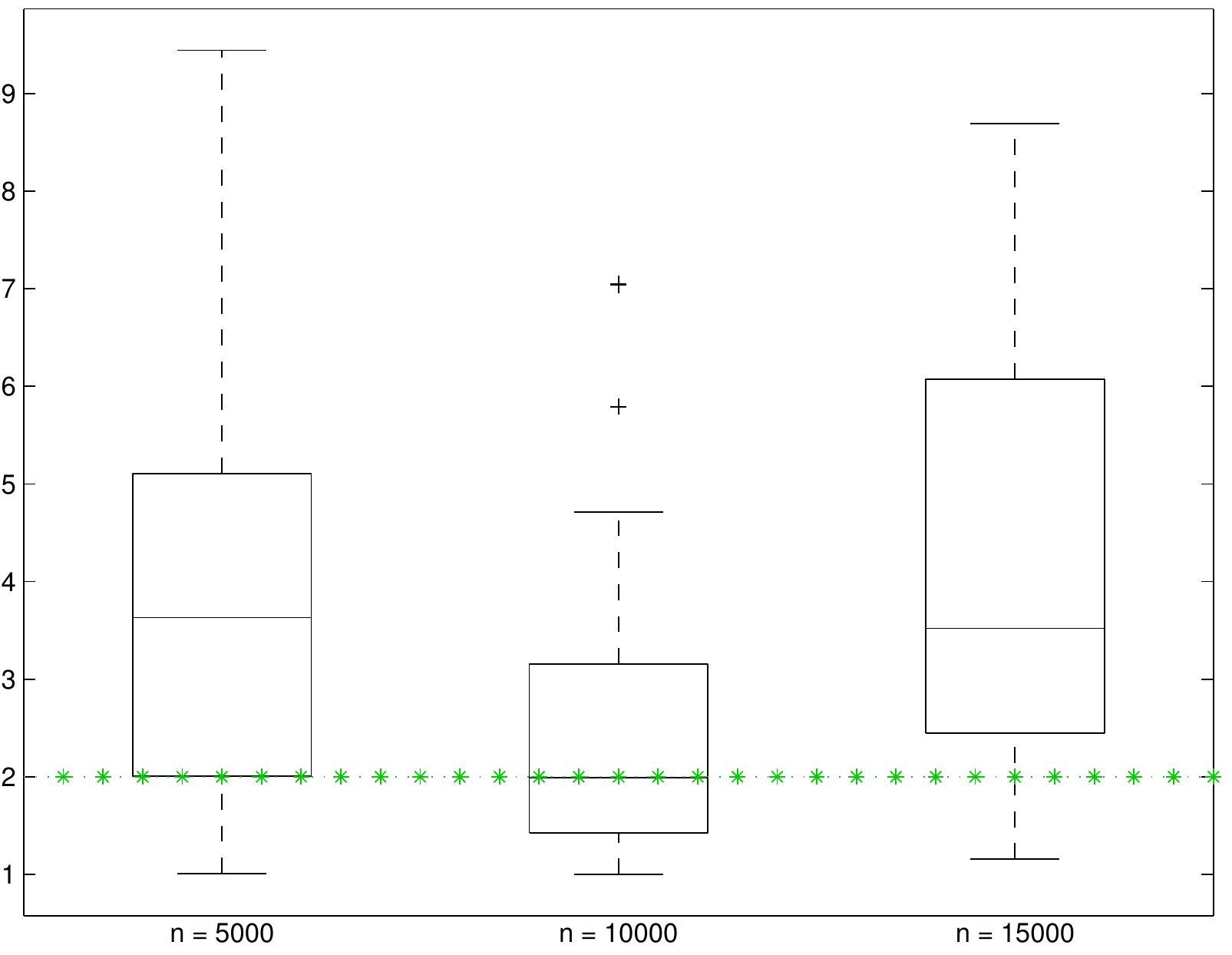}\label{fig:l63_boxplot_sigma_n}}
  \subfigure[$\hat{\sigma}^{200,10}_{63_{1,n}}$]{\includegraphics[width=0.24\textwidth,height=5.5cm]{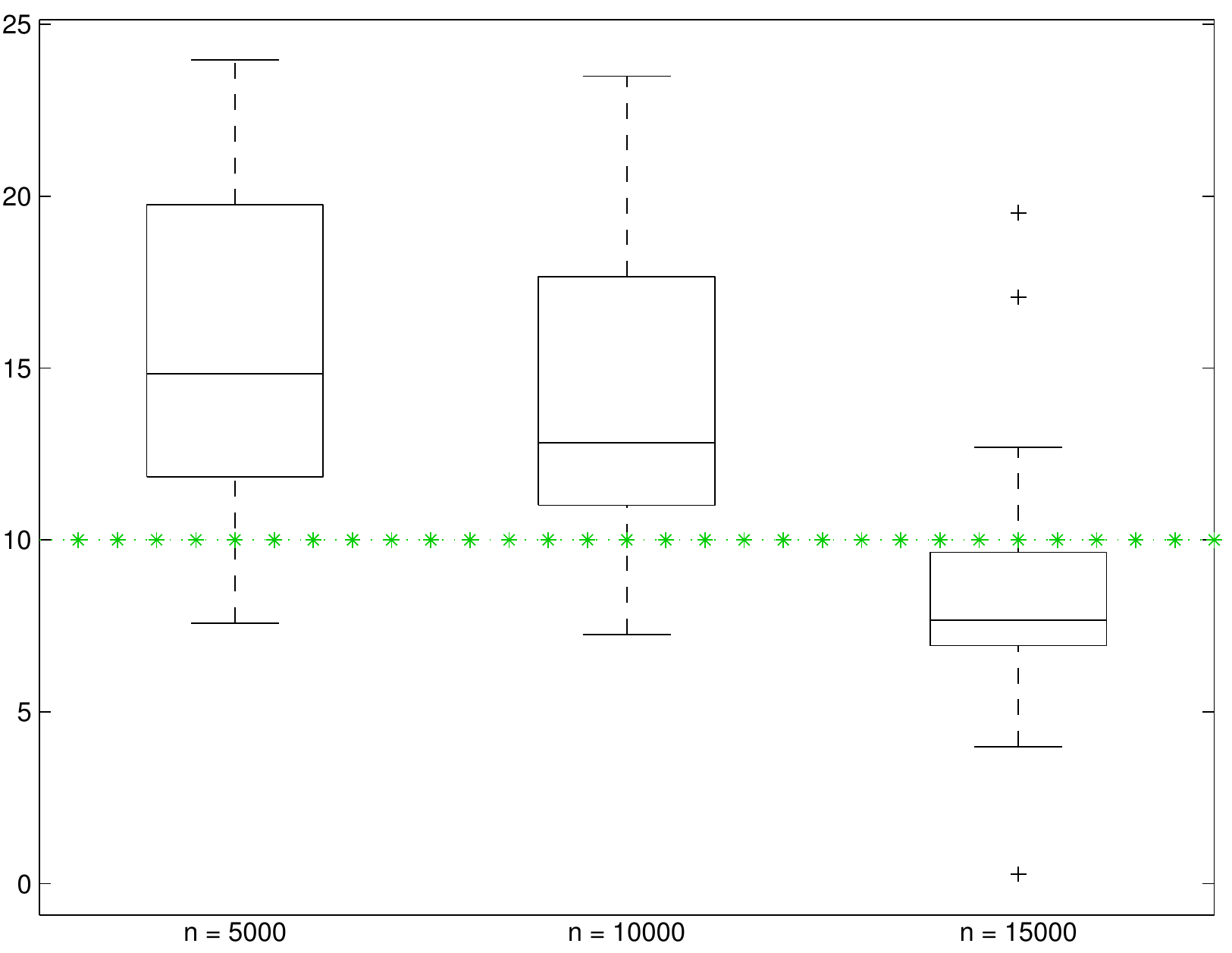}\label{fig:l63_boxplot_sigma_63_n}}
  \subfigure[$\hat{\rho}^{200,10}_{1,n}$]{\includegraphics[width=0.24\textwidth,height=5.5cm]{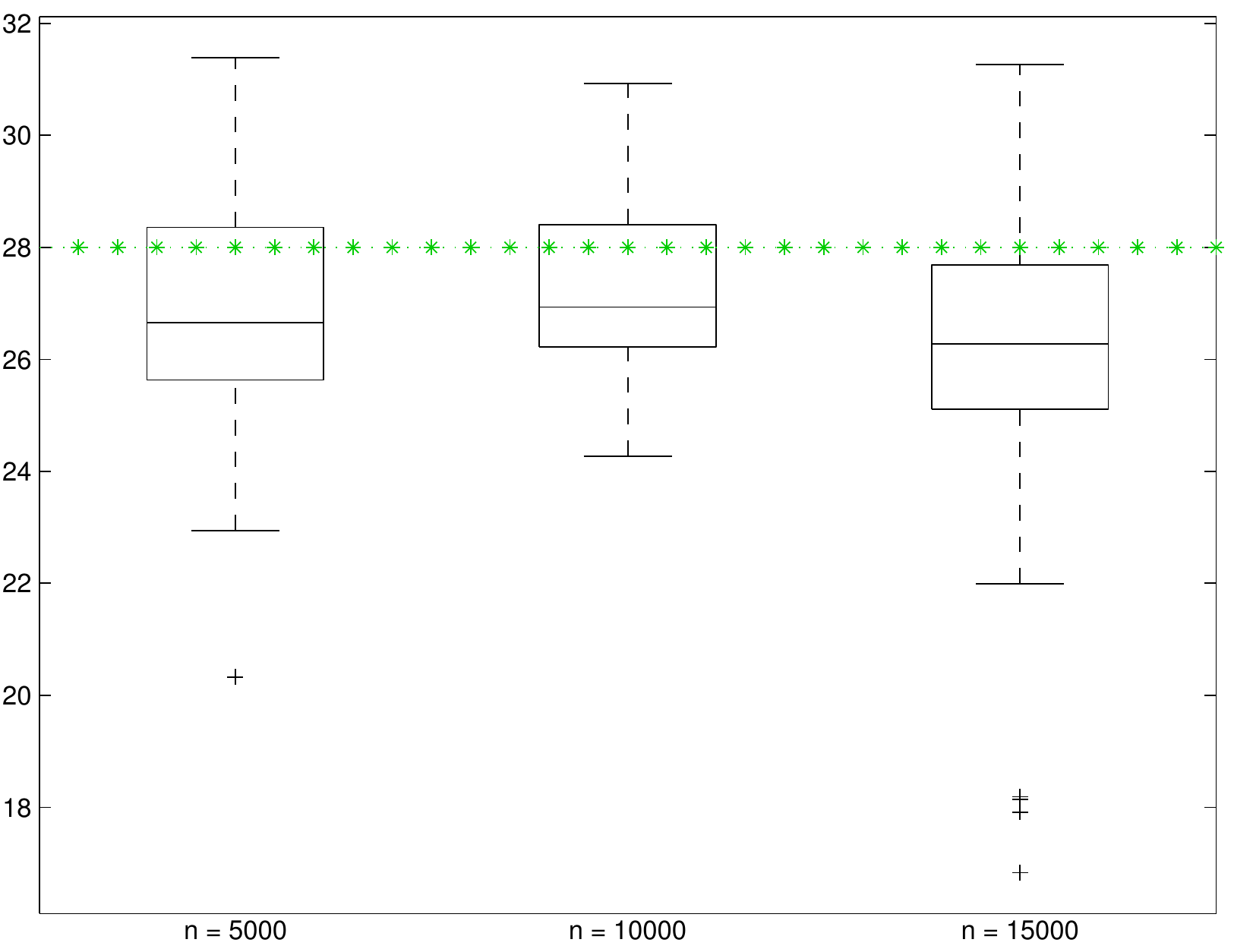}\label{fig:l63_boxplot_rho_n}}\\
  \subfigure[$\hat{\kappa}^{200,10}_{1,n}$]{\includegraphics[width=0.24\textwidth,height=5.5cm]{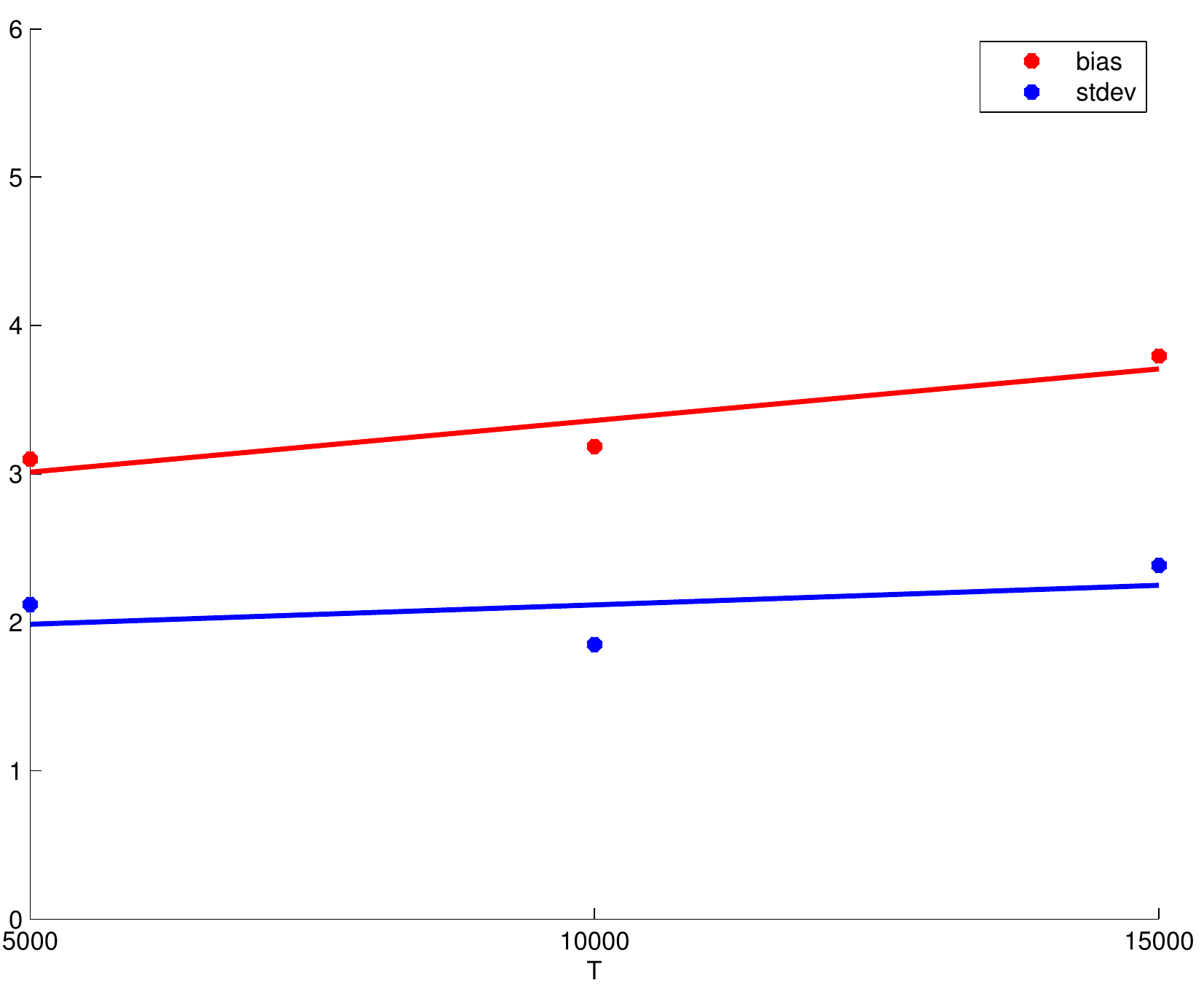}\label{fig:l63_biasvar_kappa_n}}
  \subfigure[$\hat{\sigma}^{200,10}_{1,n}$]{\includegraphics[width=0.24\textwidth,height=5.5cm]{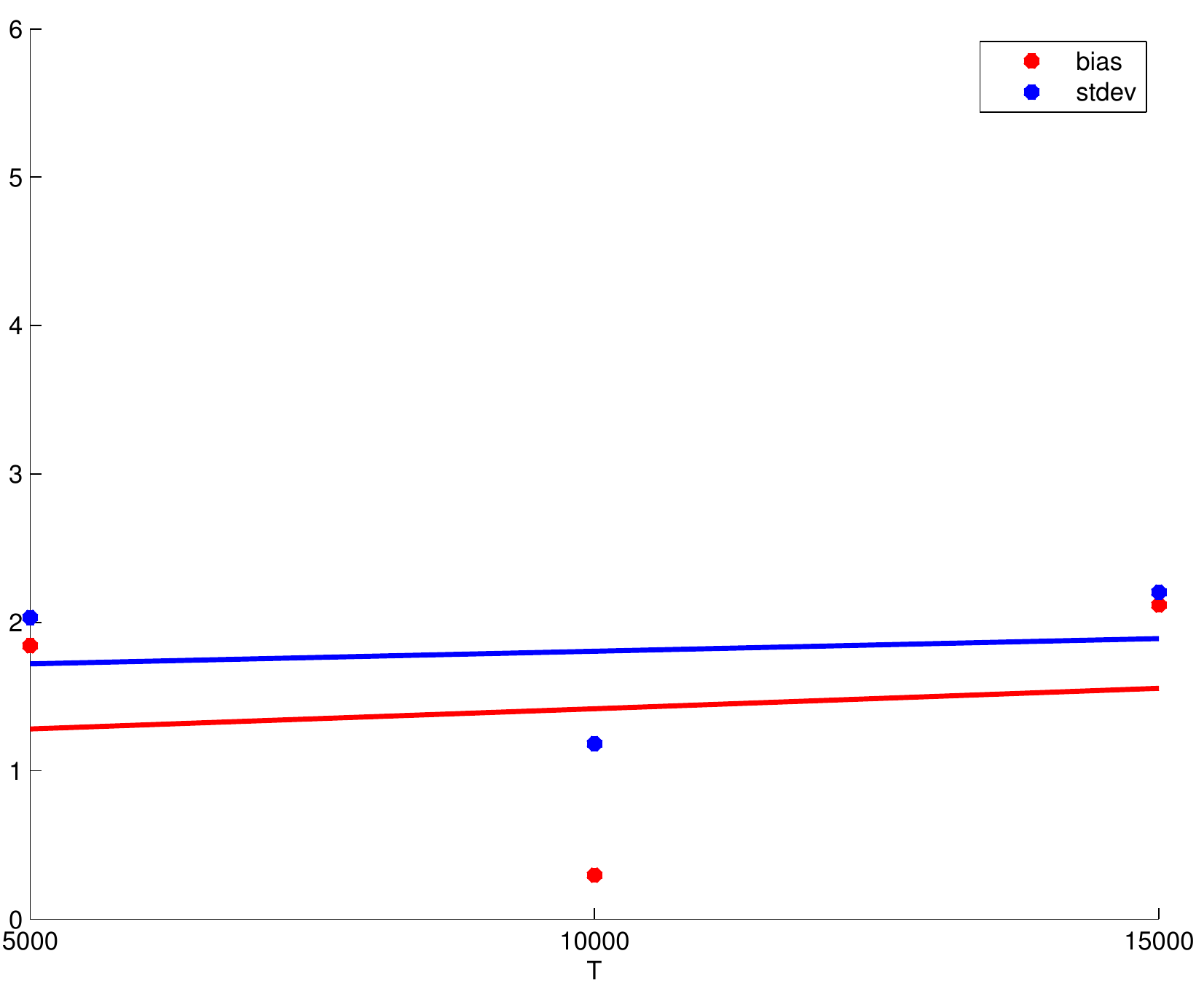}\label{fig:l63_biasvar_sigma_n}}
  \subfigure[$\hat{\sigma}^{200,10}_{63_{1,n}}$]{\includegraphics[width=0.24\textwidth,height=5.5cm]{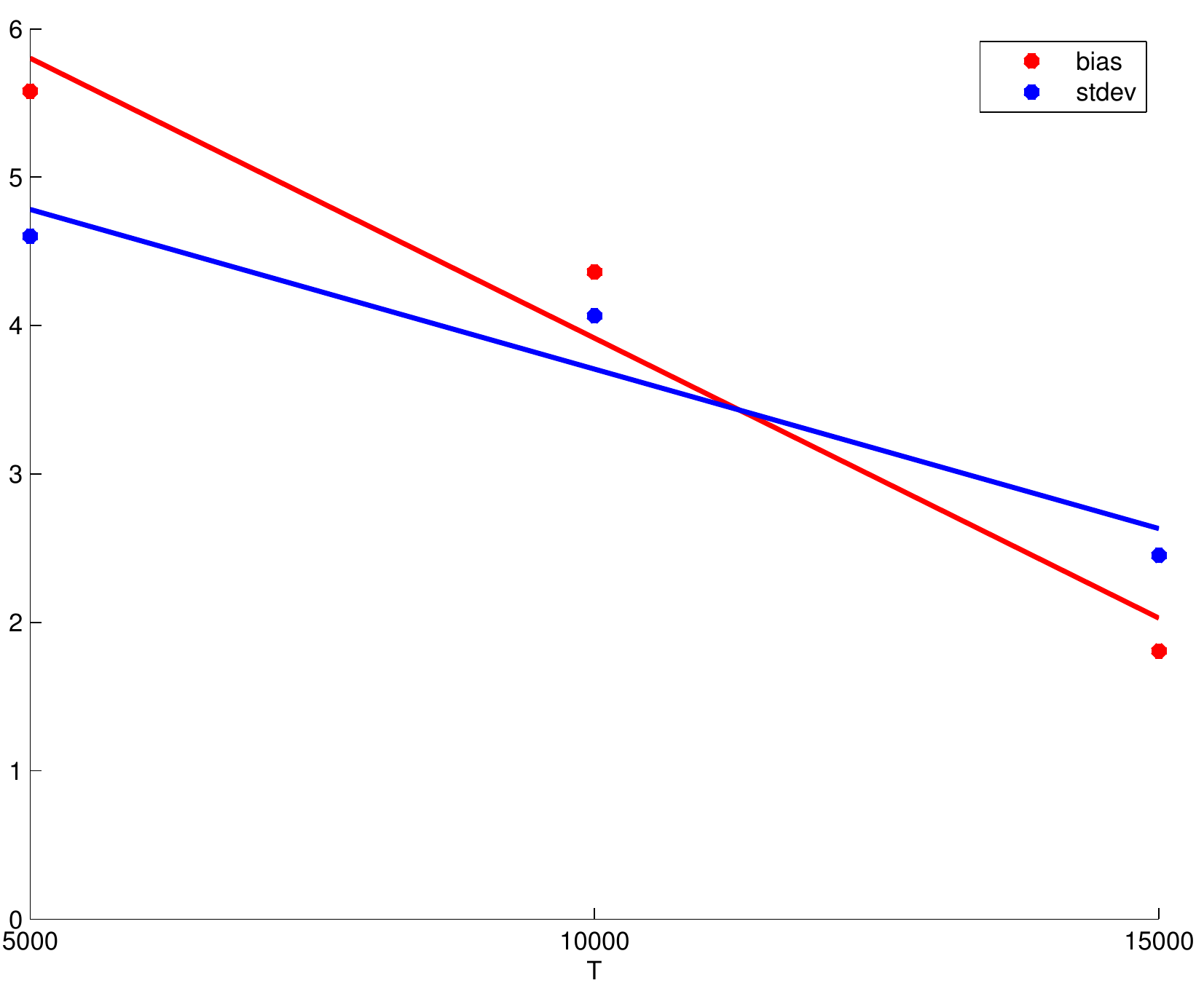}\label{fig:l63_biasvar_sigma_63_n}}
  \subfigure[$\hat{\rho}^{200,10}_{1,n}$]{\includegraphics[width=0.24\textwidth,height=5.5cm]{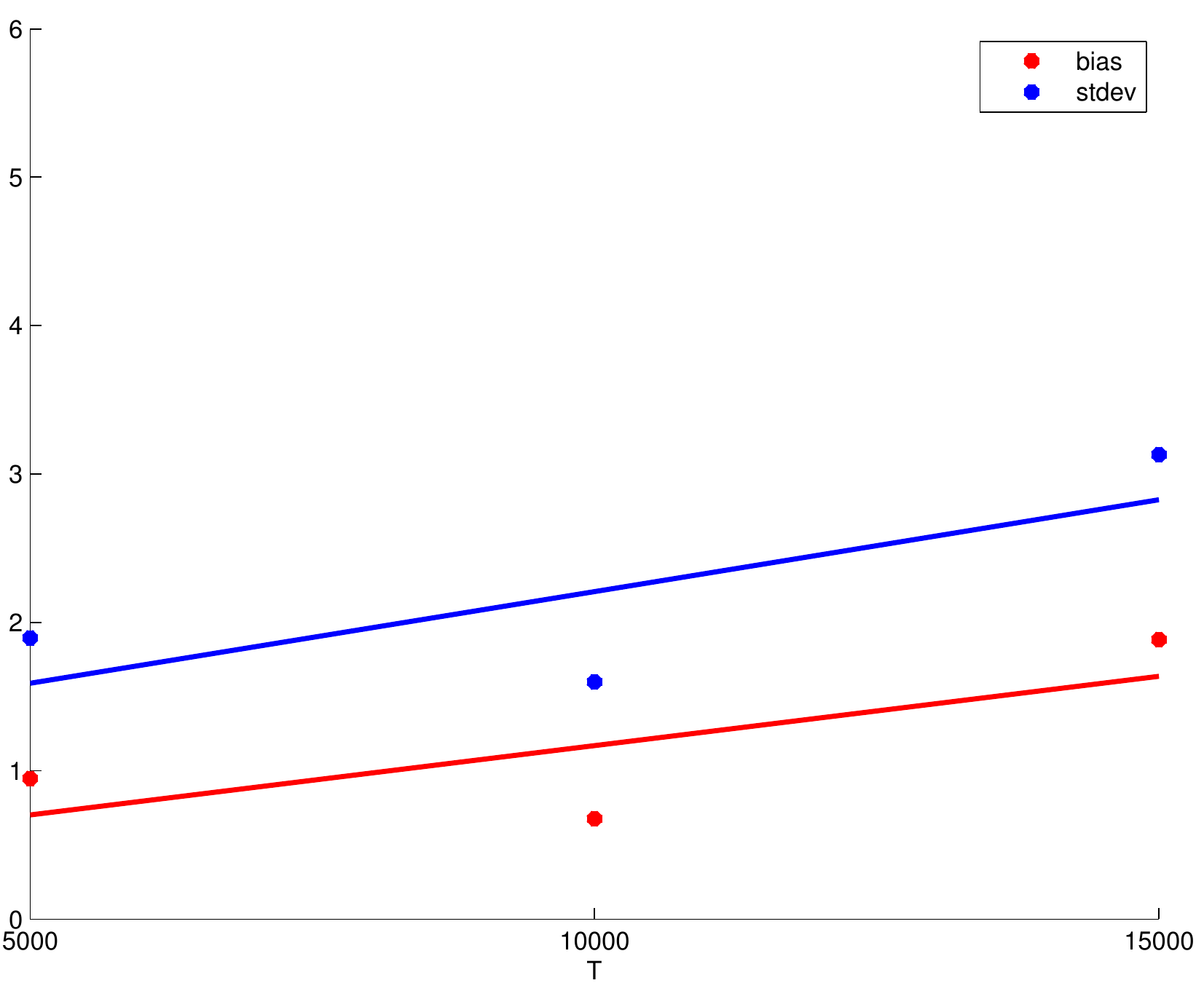}\label{fig:l63_biasvar_rho_n}}
  \caption{ $\widehat{\theta}^{200,10}_{1,n}$ when using ABC-SMC to estimate $\theta=(\kappa,\sigma,\sigma_{63},\rho)$ of the Lorenz '63 HMM
,  for datasets of length $n\in\{5000, 10000, 15000\}$. Figures \ref{fig:l63_boxplot_kappa_n}-\ref{fig:l63_boxplot_rho_n} show the $\widehat{\theta}^{200,10}_{1,n}$ in boxplots and their true values in dotted green lines. Figures \ref{fig:l63_biasvar_kappa_n}-\ref{fig:l63_biasvar_rho_n} show the MC bias and MC standard deviation of the $\widehat{\theta}^{200,10}_{1,n}$, in red and blue, with lines of least squared-error. }
  \label{fig:l63_n}
\end{figure}

Finally,  we investigate the influence of $\epsilon\in\{1, 2, 3, 4, 5, 6, 7, 8, 9, 10, 50\}$. For each $\epsilon$, we again ran fifty independent runs of ABC-SMC with $N=200$ and $M=10$, for the dataset $n=5000$. The boxplot of the parameter estimates are plotted, in Figures \ref{fig:l63_boxplot_kappa_e}-\ref{fig:l63_boxplot_rho_e}, against dotted green lines which indicate the true $\theta$. 
Figures \ref{fig:l63_biasvar_kappa_e}-\ref{fig:l63_biasvar_rho_e} show the absolute value of MC biases in red, and the MC standard deviations in blue. Fitted to the MC biases is a non-linear least squares curve proportional to $\epsilon + \frac{1}{\epsilon}$. The result we presented in Section \ref{sec:result} states that as $\epsilon$ increases, the bias will increase on $\mathcal{O}(\epsilon)$, hence the term proportional to $\epsilon$ of the fitted curve. However, the ABC-SMC algorithm becomes less stable for $\epsilon$ too small (in the sense that, for example, the variance of the weights will become larger as $\epsilon$ grows), incurring more varied estimates and affected biases; thus the term proportional to $\frac{1}{\epsilon}$. Fitted to the MC standard deviations is a non-linear least squares curves proportional to $\frac{1}{\epsilon}$. For this example, the MC standard deviation decreases at this rate as $\epsilon$ increases. 

\begin{figure}[h]
  \centering
  \subfigure[$\hat{\kappa}^{200,10}_{\epsilon,5000}$]{\includegraphics[width=0.24\textwidth,height=5.5cm]{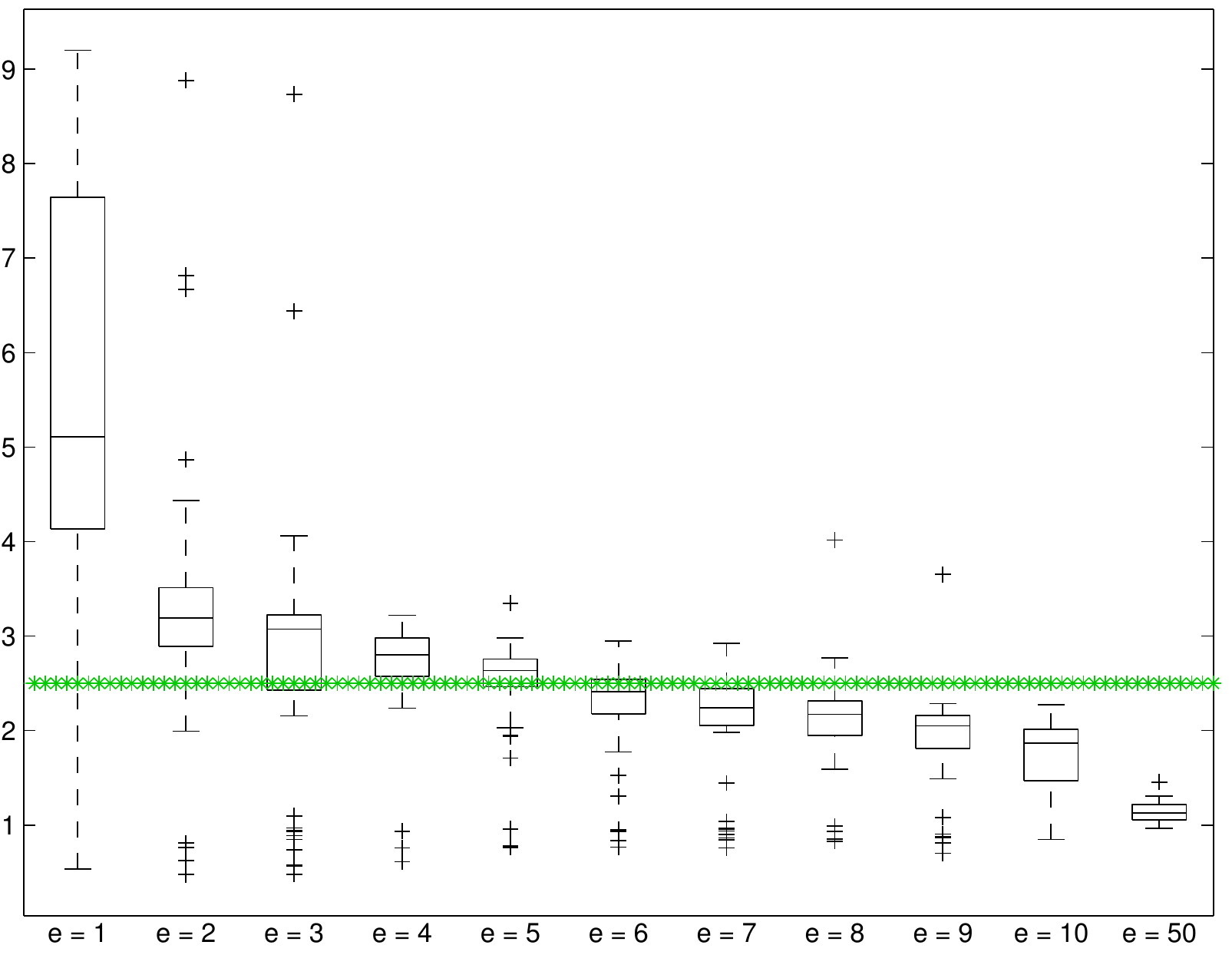}\label{fig:l63_boxplot_kappa_e}}
  \subfigure[$\hat{\sigma}^{200,10}_{\epsilon,5000}$]{\includegraphics[width=0.24\textwidth,height=5.5cm]{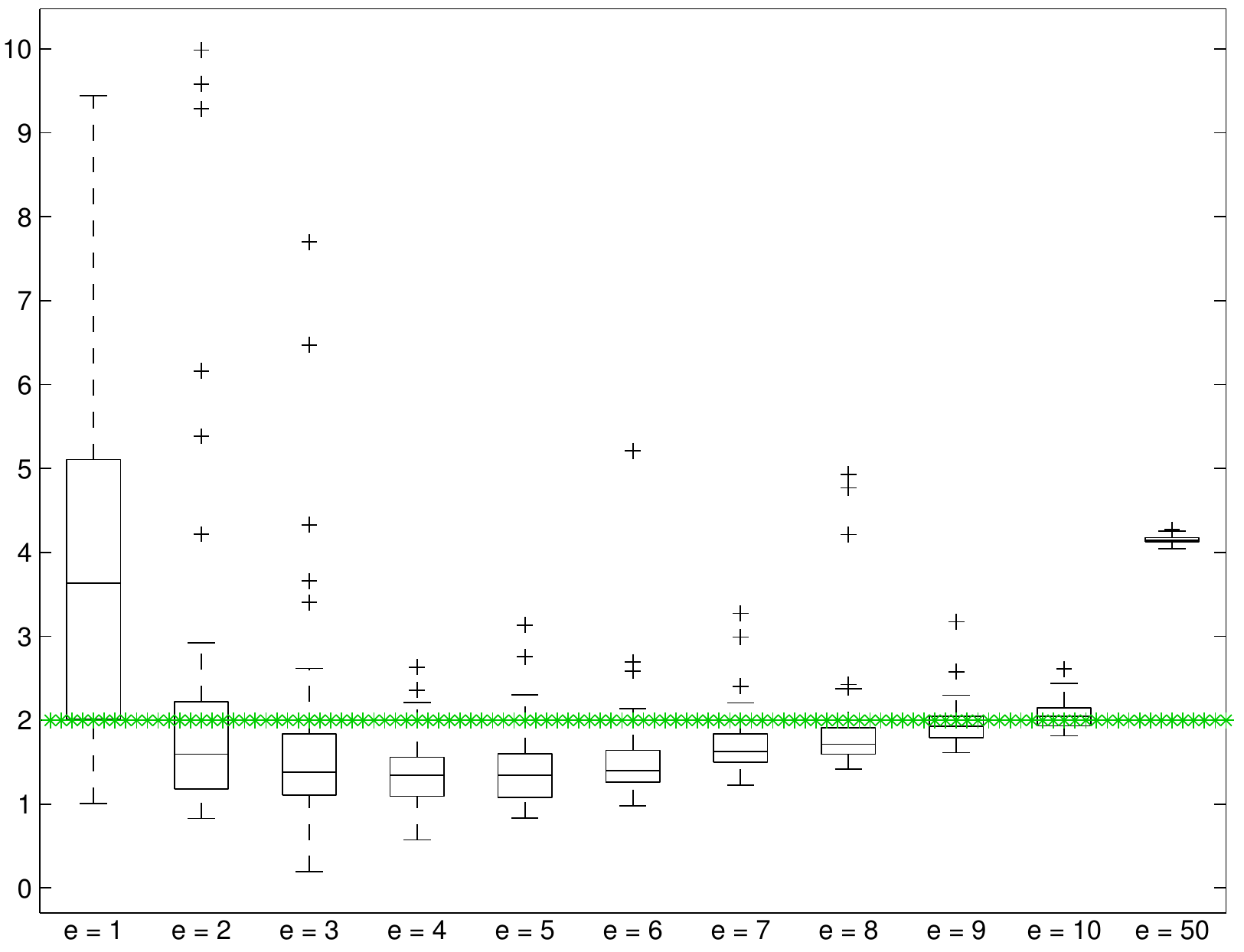}\label{fig:l63_boxplot_sigma_e}}
  \subfigure[$\hat{\sigma}^{200,10}_{63_{\epsilon,5000}}$]{\includegraphics[width=0.24\textwidth,height=5.5cm]{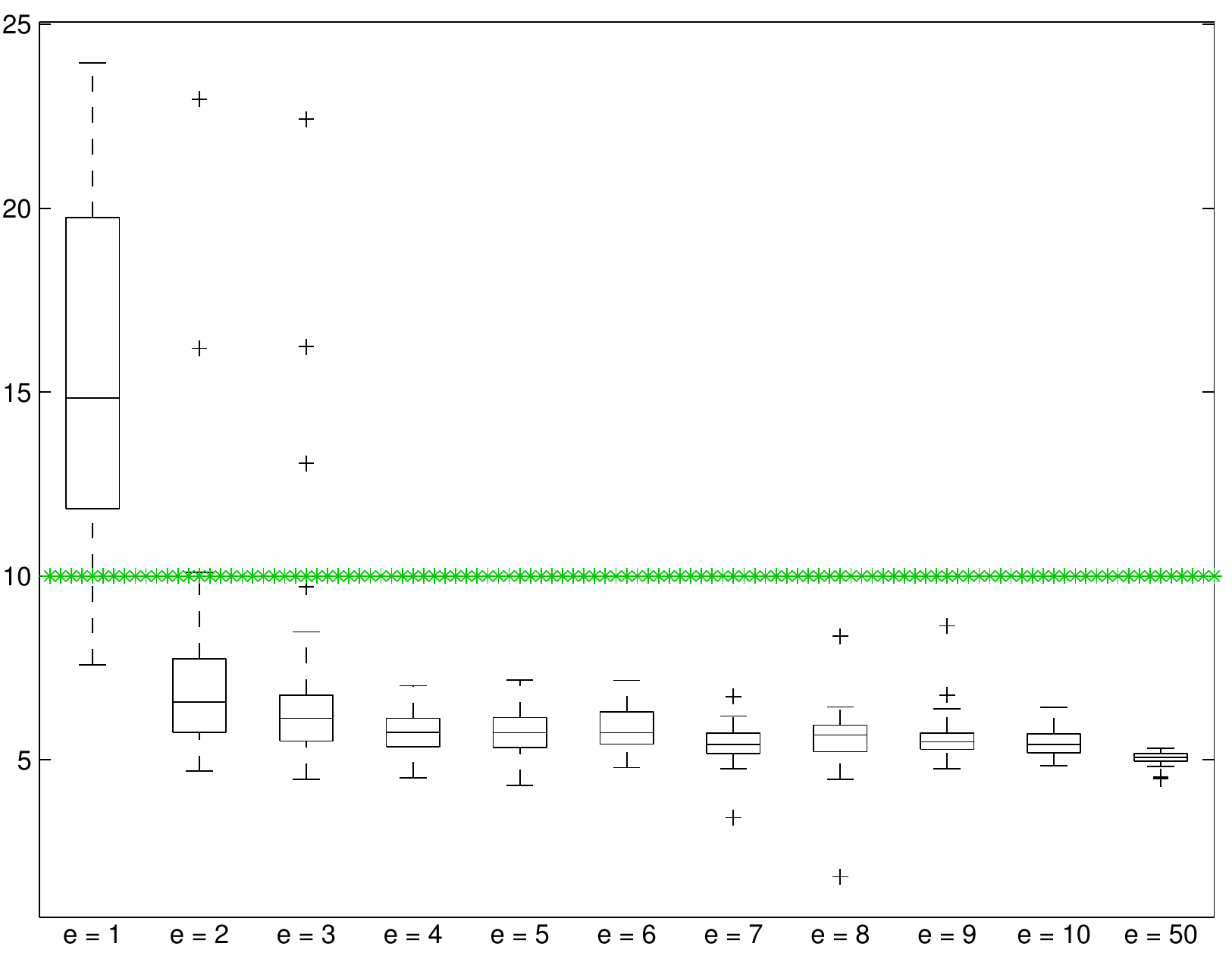}\label{fig:l63_boxplot_sigma_63_e}}
  \subfigure[$\hat{\rho}^{200,10}_{\epsilon,5000}$]{\includegraphics[width=0.24\textwidth,height=5.5cm]{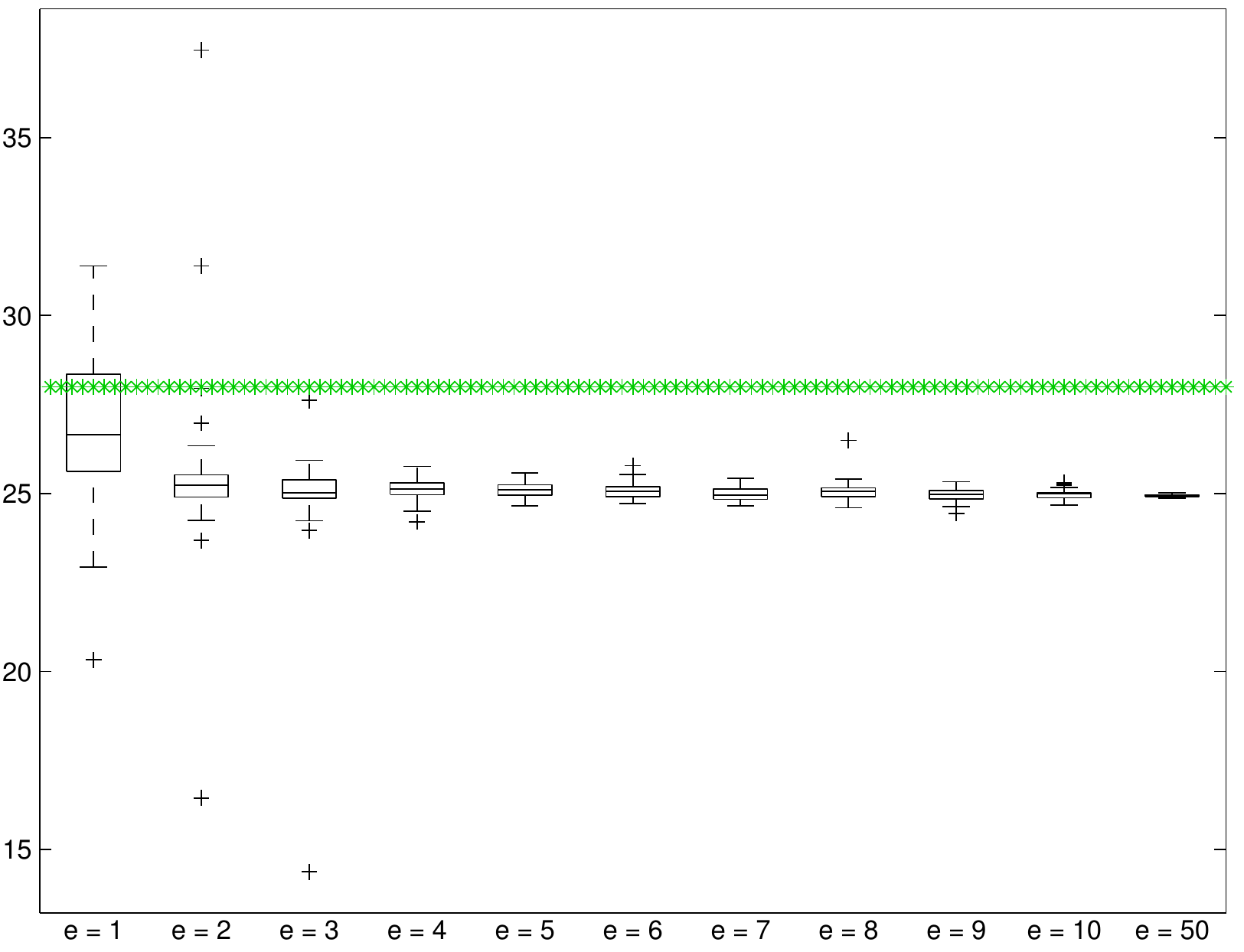}\label{fig:l63_boxplot_rho_e}}\\
  \subfigure[$\hat{\kappa}^{200,10}_{\epsilon,5000}$]{\includegraphics[width=0.24\textwidth,height=5.5cm]{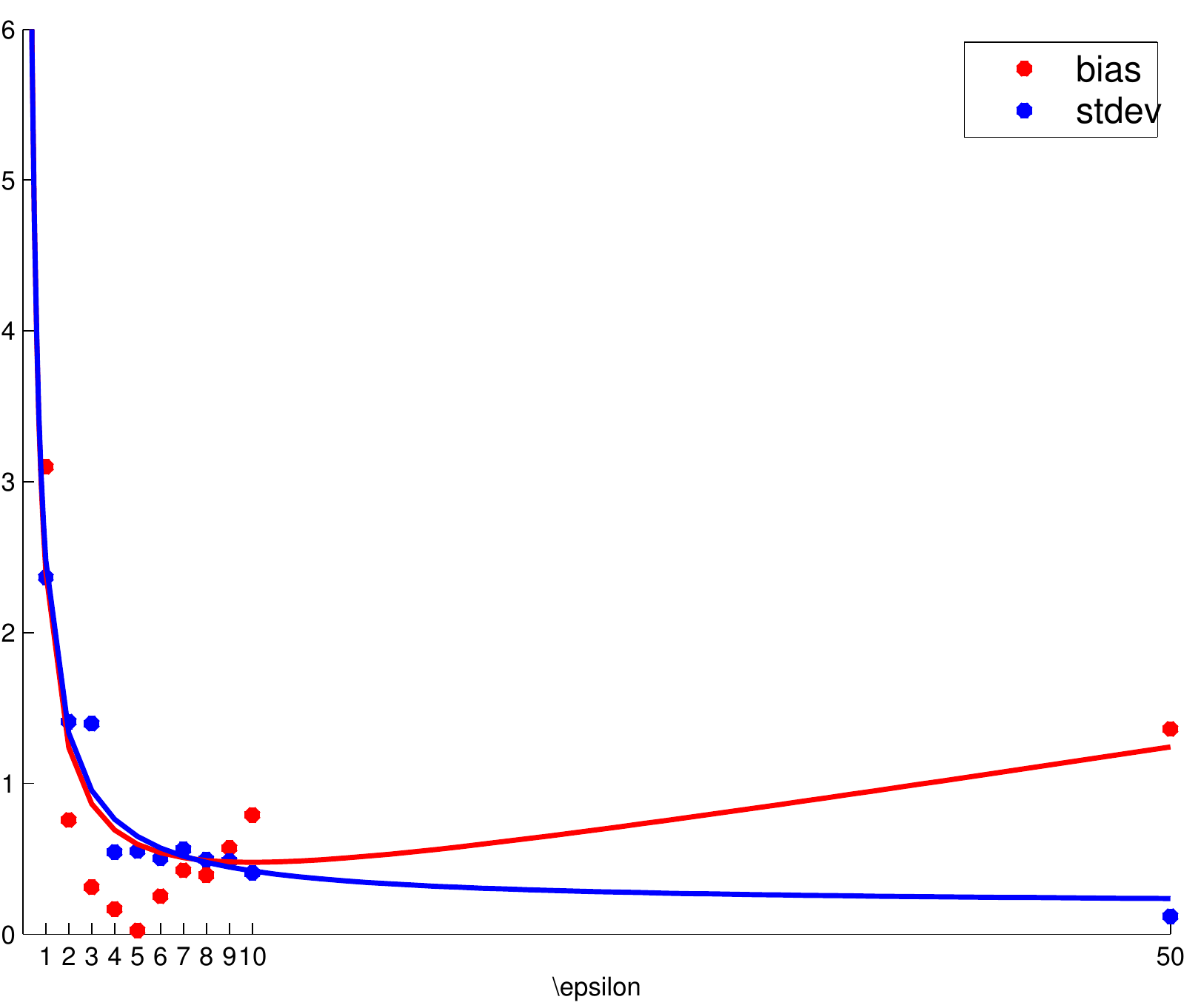}\label{fig:l63_biasvar_kappa_e}}
  \subfigure[$\hat{\sigma}^{200,10}_{\epsilon,5000}$]{\includegraphics[width=0.24\textwidth,height=5.5cm]{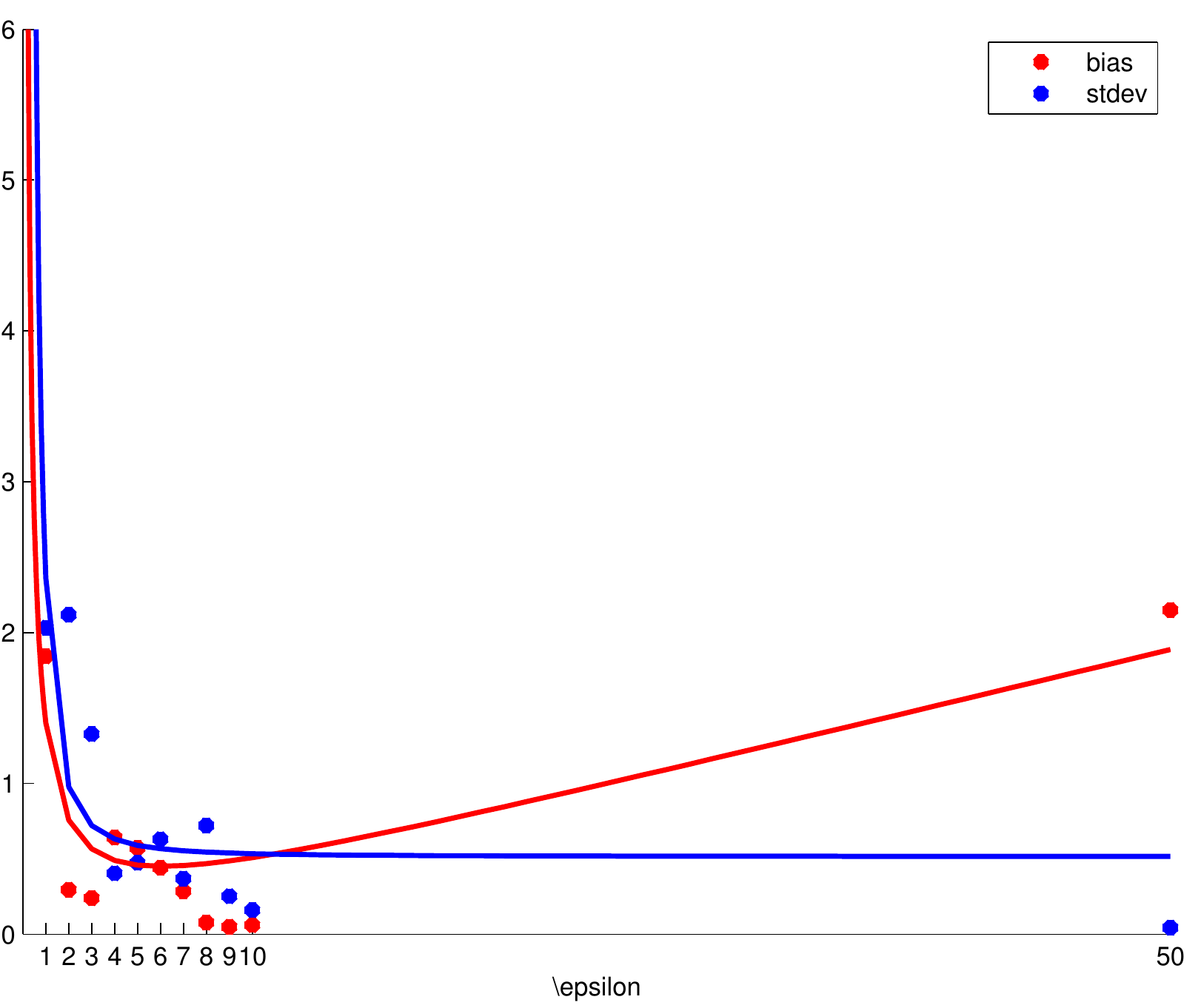}\label{fig:l63_biasvar_sigma_e}}
  \subfigure[$\hat{\sigma}^{200,10}_{63_{\epsilon,5000}}$]{\includegraphics[width=0.24\textwidth,height=5.5cm]{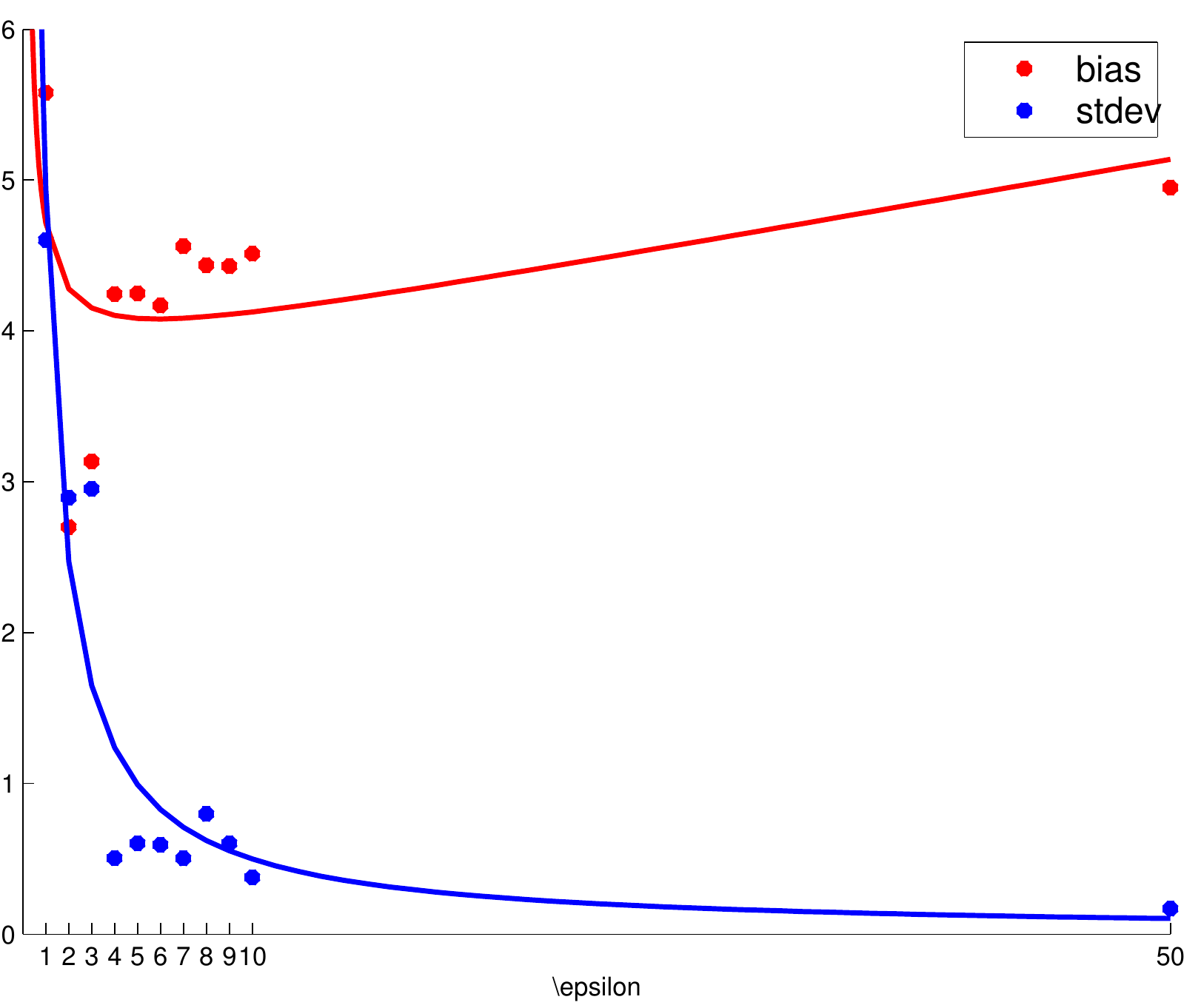}\label{fig:l63_biasvar_sigma_63_e}}
  \subfigure[$\hat{\rho}^{200,10}_{\epsilon,5000}$]{\includegraphics[width=0.24\textwidth,height=5.5cm]{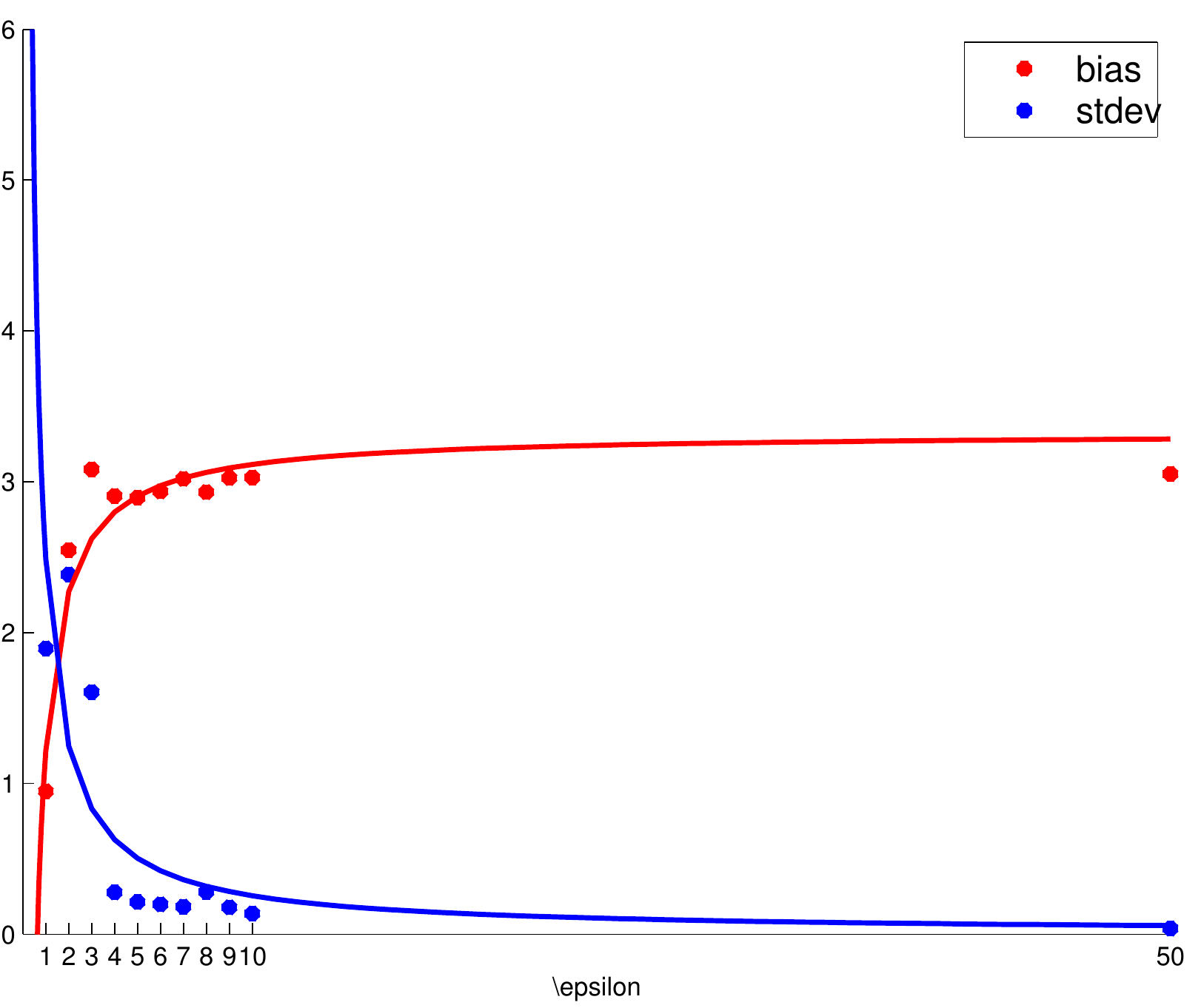}\label{fig:l63_biasvar_rho_e}}
  \caption{ $\widehat{\theta}^{200,10}_{\epsilon,5000}$ when estimating $\theta=(\kappa,\sigma,\sigma_{63},\rho)$ of the Lorenz '63 HMM
, using ABC-SMC with values of $\epsilon\in\{1, 2, 3, \dots, 10, 50\}$. Figures \ref{fig:l63_boxplot_kappa_e}-\ref{fig:l63_boxplot_rho_e} show the MC biases and their curves of non-linear least squared-error proportional to $\epsilon+\frac{1}{\epsilon}$ in red, and the MC standard deviations with their curves of non-linear least squared-error proportional to $\frac{1}{\epsilon}$ in blue. }
  \label{fig:l63_e}
\end{figure}

\section{Summary}\label{sec:summ}

In this article we have presented a technique to perform online parameter estimation using ABC-SMC and SPSA for HMMs. This
is useful for models where the state-dimension is high and the parameter and observations are of moderate dimension. In addition,
it is required when the conditional density of the observations given the hidden state is intractable.

Some future work is as follows.
The representation in Remark \ref{rem:collapsed_abc} can be potentially useful for alternative online parameter estimation techniques,
other than using SPSA. In \cite{dds} we are investigating the use of the online EM algorithm \cite{yidrlim} and any potential benefit that it may have over the ideas in
this paper. We have remarked that one drawback of the SMC algorithm implemented is its inability to deal with small $\epsilon$. Two
potential ways to proceed are as follows. One is to introduce a further approximation by the expectation-propagation algorithm
(as in \cite{chopin}) and potentially removing SMC altogether. The other is to consider more advanced SMC approaches such as \cite{delm:06}
and how this might help one reduce $\epsilon$; this is an area of ongoing research. We are also considering ABC approximations in the scenario
of deterministic dynamics for the hidden state; these models have wide application in applied mathematics as filtering initial conditions of partial differential equations.

\subsubsection*{Acknowledgements}
The second author was funded by an MOE grant and acknowledges useful conversations with David Nott.
We also acknowledge useful conversations with Sumeetpal Singh.

\appendix

\section{Notations}

We introduce a round of notations. As our analysis will rely
upon that in \cite{vlad} our notations will follow that article. 
It is remarked that under our assumptions, one can establish the same assumptions as in \cite{vlad}. Moreover, the time-inhomogenous upper-bounds in that paper
can be made time-homogenous (albeit less tight) under our assumptions. In addition, our proof strategy follows ideas in \cite{andrieu}.

$\mathcal{B}_b(\mathsf{X})$ is the class of bounded and real-valued measurable functions on $\mathsf{X}$.
Throughout, for $\varphi\in\mathcal{B}_b(\mathsf{X})$, $\|\varphi\|_{\infty}:=\sup_{x\in\mathsf{X}}|\varphi(x)|$.
For $\varphi\in\mathcal{B}_b(\mathsf{X})$ and any operator $Q:\mathsf{X}\rightarrow\mathcal{M}(\mathsf{X})$, $Q(\varphi)(x) := \int_{\mathsf{X}}\varphi(y)Q(x,dy)$. In addition for $\mu_{\theta}\in\mathcal{M}(\mathsf{X})$,
$\mu_{\theta} Q(\varphi) := \int_{\mathsf{X}}\mu_{\theta}(dx) Q(\varphi)(x)$.

We introduce the non-negative operator:
$$
R_{\theta,n}(x,dx') := g_{\theta}(y_n|x') f_{\theta}(x'|x) dx'
$$ 
with the ABC equivalent $R_{\theta,\epsilon,n}(x,dx') := g_{\theta,\epsilon}(y_n|x') f_{\theta}(x'|x) dx'$, $
g_{\theta,\epsilon}(y|x) = \int_{A_{\epsilon,y}}g(u|x)dy/\int_{A_{\epsilon,y}}dy$. To keep consistency with \cite{vlad}
and to allow the reader to follow the proofs, we note that the filter at time $n\geq 0$, $F_{\theta}^n(\mu_{\theta})$ (resp.~ABC filter, at time $n$, $F_{\theta,\epsilon}^n(\mu_{\theta})$) is exactly, with initial distribution $\mu_{\theta}\in\mathcal{P}(\mathsf{X})$ and test function
$\varphi\in\mathcal{B}_b(\mathsf{X})$ 
$$
F_{\theta}^n(\mu_{\theta})(\varphi) = \frac{\mu_{\theta} R_{1,n,\theta}(\varphi)}{\mu_{\theta} R_{1,n,\theta}(1)}
$$
resp.
$$
F_{\theta,\epsilon}^n(\mu_{\theta})(\varphi) = \frac{\mu_{\theta} R_{1,n,\theta,\epsilon}(\varphi)}{\mu_{\theta} R_{1,n,\theta,\epsilon}(1)}
$$
where 
$F_{\theta}^0(\mu_{\theta})=F^0_{\theta,\epsilon}(\mu_{\theta})=\mu_{\theta}$, 
$R_{1,n,\theta}(\varphi)(x_0) = \int \prod_{k=1}^n R_{k,\theta}(x_{k-1},dx_k) \varphi(x_n)$. 
In addition, we write the filter derivatives as $\widetilde{F}_{\theta}^n(\mu_{\theta},\widetilde{\mu_{\theta}})(\varphi)$,
$\widetilde{F}_{\theta,\epsilon}^n(\mu_{\theta},\widetilde{\mu_{\theta}})(\varphi)$
where the second argument is the gradient of the initial measure.

The following operators will be used below, for $n\geq 1$:
\begin{eqnarray}
\widetilde{G}^n(\mu_{\theta},\widetilde{\mu_{\theta}})(\varphi) & := & (\mu_{\theta} R_{1,n,\theta}(1))^{-1}[\widetilde{\mu_{\theta}} R_{1,n,\theta}(\varphi)-\widetilde{\mu_{\theta}} R_{1,n,\theta}(1)
F^n_{\theta}(\mu_{\theta})(\varphi)]\label{eq:gtdefn}\\
\widetilde{H}^n(\mu_{\theta})(\varphi) & := & 
F_{\theta}^{n-1}(\mu_{\theta}) R_{n,\theta}(1)^{-1}[F_{\theta}^{n-1}(\mu_{\theta})\widetilde{R}_{n,\theta}(\varphi) - 
F_{\theta}^{n-1}(\mu_{\theta})\widetilde{R}_{n,\theta}(1) F_{\theta}^n(\mu_{\theta})(\varphi)
]\label{eq:htdefn}
\end{eqnarray}
with the convention $\widetilde{G}^0(\mu_{\theta},\widetilde{\mu_{\theta}})(\varphi)=\widetilde{\mu_{\theta}}$. In addition, we set
$$
\widetilde{G}^{(n)}(\mu_{\theta},\widetilde{\mu_{\theta}})(\varphi)  :=  (\mu_{\theta} R_{n,\theta}(1))^{-1}[\widetilde{\mu_{\theta}} R_{n,\theta}(\varphi)-\widetilde{\mu_{\theta}} R_{n,\theta}(1)
F^{(n)}_{\theta}(\mu_{\theta})(\varphi)].
$$
where $F_{\theta}^{(n)}(\mu_{\theta}) = \mu_{\theta} R_{n,\theta}/\mu_{\theta} R_{n,\theta}(1)$.
Finally, an important notational convention is as follows. Throughout we use $C$ to denote a constant whose value may change from line-to-line in the calculations.
This constant will typically not depend upon important parameters such as $\epsilon$ and $n$ and any important dependencies will be highlighted.

\section{Bias of the Log-Likelihood}\label{app:log_like}

\begin{proof}[Proof of Proposition \ref{prop:ll_bias}]
We begin with the equality
\begin{equation}
\log(p_{\theta}(y_{1:n})) - \log(p_{\theta,\epsilon}(y_{1:n})) = 
\sum_{k=1}^n \bigg(\log(p_{\theta}(y_k|y_{1:k-1})) - \log(p_{\theta,\epsilon}(y_k|y_{1:k-1}))\bigg)
\label{eq:master_eq}
\end{equation}
with, for $1\leq k \leq n$
\begin{eqnarray*}
p_{\theta}(y_k|y_{1:k-1}) & = & \int_{\mathsf{X}^{2}} g_{\theta}(y_k|x_k) f_{\theta}(x_k|x_{k-1}) F_{\theta}^{k-1}(\mu_{\theta})(dx_{k-1})dx_{k}\\
p_{\theta,\epsilon}(y_k|y_{1:k-1}) & = & \int_{\mathsf{X}^{2}} g_{\theta}^{\epsilon}(y_k|x_k) f_{\theta}(x_k|x_{k-1}) F_{\theta,\epsilon}^{k-1}(\mu_{\theta})(dx_{k-1}) dx_{k}.
\end{eqnarray*}
 We will consider each summand in \eqref{eq:master_eq}. The case $k\geq 2$ is only considered; the scenario $k=1$ will follow a similar and simpler argument.

Using the inequality
$|\log(x)-\log(y)|\leq |x-y|/(x\wedge y)$ for every $x,y>0$ we have
$$
|\log(p_{\theta}(y_k|y_{1:k-1})) - \log(p_{\theta,\epsilon}(y_k|y_{1:k-1}))| \leq \frac{|p_{\theta}(y_k|y_{1:k-1})-p_{\theta,\epsilon}(y_k|y_{1:k-1})|}{p_{\theta}(y_k|y_{1:k-1})\wedge p_{\theta,\epsilon}(y_k|y_{1:k-1})}.
$$
Note that
$$
p_{\theta}(y_k|y_{1:k-1})\wedge p_{\theta}(y_k|y_{1:k-1}) = 
$$
\begin{equation}
\int_{\mathsf{X}^2} g_{\theta}(y_k|x_k) f_{\theta}(x_k|x_{k-1}) F_{\theta}^{k-1}(\mu_{\theta})(dx_{k-1})dx_k 
\wedge \int_{\mathsf{X}^{2}} g_{\theta}^{\epsilon}(y_k|x_k) f_{\theta}(x_k|x_{k-1}) F_{\theta,\epsilon}^{k-1}(\mu_{\theta})(dx_{k-1}) dx_{k}
\geq
C >0
\label{eq:p_lower_bound}
\end{equation}
where we have applied (A\ref{hyp:like_bound}) and $C$ does not depend upon $\epsilon$. Thus we consider
$$
|p_{\theta,\epsilon}(y_k|y_{1:k-1})-p_{\theta}(y_k|y_{1:k-1})| = 
$$
$$
|\int_{\mathsf{X}^2} g_{\theta}(y_k|x_k) f_{\theta}(x_k|x_{k-1}) F_{\theta}^{k-1}(\mu_{\theta})(dx_{k-1})dx_k - 
\int_{\mathsf{X}^2} g_{\theta,\epsilon}(y_k|x_k) f_{\theta}(x_k|x_{k-1}) F_{\theta,\epsilon}^{k-1}(\mu_{\theta})(dx_{k-1})dx_k|.
$$
The R.H.S.~can be upper-bounded by the sum of
$$
|\int_{\mathsf{X}^2} [g_{\theta}(y_k|x_k) - g_{\theta,\epsilon}(y_k|x_k)]f_{\theta}(x_k|x_{k-1}) F_{\theta}^{k-1}(\mu_{\theta})(dx_{k-1})dx_k |
$$
and
$$
|\int_{\mathsf{X}^2} g_{\theta,\epsilon}(y_k|x_k) f_{\theta}(x_k|x_{k-1}) [F_{\theta,\epsilon}^{k-1}(\mu_{\theta})(dx_{k-1})-F_{\theta,\epsilon}^{k-1}(\mu_{\theta})(dx_{k-1}])dx_k|.
$$
The first expression can be dealt with by using (A\ref{hyp:like_cont}), which implies
\begin{equation}
\sup_{x\in\mathsf{X}}|g_{\theta,\epsilon}(y_k|x)-g_{\theta,\epsilon}(y_k|x)|\leq C\epsilon
\label{eq:g_continuity_epsilon}.
\end{equation}
 The second expression can be controlled by \cite[Theorem 2]{jasra}: 
\begin{equation}
\sup_{k\geq 1} \|F_{\theta}^{k-1}(\mu_{\theta})-F_{\theta,\epsilon}^{k-1}(\mu_{\theta})\| \leq C\epsilon
\label{eq:abc_filter_stability}
\end{equation}
to yield that
\begin{equation}
|p_{\theta,\epsilon}(y_k|y_{1:k-1})-p_{\theta}(y_k|y_{1:k-1})| \leq C\epsilon\label{eq:p_control}.
\end{equation}
One can thus conclude.
\end{proof}

\section{Bias of the Gradient of the Log-Likelihood}\label{app:log_like_grad}

\begin{proof}[Proof of Theorem \ref{prop:ll_bias}]
We have that
$$
\nabla\bigg(\log p_{\theta}(y_{1:n}) - \log p_{\theta,\epsilon}(y_{1:n})\bigg) = 
\nabla\bigg\{\sum_{k=1}^n \bigg(\log[p_{\theta}(y_k|y_{1:k-1}) - \log[p_{\theta,\epsilon}(y_k|y_{1:k-1}) \bigg)\bigg\}.
$$
It then follows that 
$$
\nabla\bigg(\log p_{\theta}(y_{1:n}) - \log p_{\theta,\epsilon}(y_{1:n})\bigg) = 
$$
\begin{equation}
\sum_{k=1}^n \bigg( \frac{[\nabla p_{\theta}(y_k|y_{1:k-1}) - \nabla p_{\theta,\epsilon}(y_k|y_{1:k-1}) ]}{p_{\theta}(y_k|y_{1:k-1})} + 
\frac{\nabla p_{\theta,\epsilon}(y_k|y_{1:k-1})}{p_{\theta}(y_k|y_{1:k-1})p_{\theta,\epsilon}(y_k|y_{1:k-1})}[p_{\theta,\epsilon}(y_k|y_{1:k-1})-p_{\theta}(y_k|y_{1:k-1})]
\bigg).
\label{eq:ll_main_decomp}
\end{equation}
We will deal with the two terms on the R.H.S.~of \eqref{eq:ll_main_decomp} in turn. The scenario $k\geq 2$ is only considered; the case $k=1$ follows a similar and simpler argument.

First starting with summand
$$
\frac{[\nabla p_{\theta}(y_k|y_{1:k-1}) - \nabla p_{\theta,\epsilon}(y_k|y_{1:k-1}) ]}{p_{\theta}(y_k|y_{1:k-1})}.
$$
Noting \eqref{eq:p_lower_bound}, we need only upper-bound the $\mathbb{L}_1$ norm of the following expression
\begin{equation}
\int_{\mathsf{X}^2} \nabla\{g_{\theta}(y_k|x_k)\} f_{\theta}(x_k|x_{k-1}) F_{\theta}^{k-1}(\mu_{\theta})(dx_{k-1})dx_k - 
\int_{\mathsf{X}^2} \nabla\{g_{\theta,\epsilon}(y_k|x_k)\} f_{\theta}(x_k|x_{k-1}) F_{\theta,\epsilon}^{k-1}(\mu_{\theta})(dx_{k-1})dx_k
\label{eq:ll_1_1}
\end{equation}
\begin{equation}
+ \int_{\mathsf{X}^2} g_{\theta}(y_k|x_k) \nabla\{f_{\theta}(x_k|x_{k-1})\} F_{\theta}^{k-1}(\mu_{\theta})(dx_{k-1})dx_k - 
\int_{\mathsf{X}^2} g_{\theta,\epsilon}(y_k|x_k) \nabla\{f_{\theta}(x_k|x_{k-1})\} F_{\theta,\epsilon}^{k-1}(\mu_{\theta})(dx_{k-1})dx_k
\label{eq:ll_1_2}
\end{equation}
\begin{equation}
+ \int_{\mathsf{X}^2} g_{\theta}(y_k|x_k) f_{\theta}(x_k|x_{k-1}) \widetilde{F}_{\theta}^{k-1}(\mu_{\theta},\widetilde{\mu_{\theta}})(dx_{k-1})dx_k - 
\int_{\mathsf{X}^2} g_{\theta,\epsilon}(y_k|x_k) f_{\theta}(x_k|x_{k-1}) \widetilde{F}_{\theta,\epsilon}^{k-1}(\mu_{\theta},\widetilde{\mu_{\theta}})(dx_{k-1})dx_k
\label{eq:ll_1_3}.
\end{equation}
We start with \eqref{eq:ll_1_1}. Using (A\ref{hyp:like_grad_cont}) we can establish that for each $k\geq 1$ 
\begin{equation}
\sup_{x\in\mathsf{X}}|\nabla\{g_{\theta}(y_k|x_k)\}-\nabla\{g_{\theta,\epsilon}(y_k|x_k)\}| \leq C \epsilon
\label{eq:grad_g_cont}
\end{equation}
where $C$ does not depend upon $k,\epsilon$. Hence 
$$
|\int_{\mathsf{X}^2} [\nabla\{g_{\theta}(y_k|x_k)\}-\nabla\{g_{\theta,\epsilon}(y_k|x_k)\}]f_{\theta}(x_k|x_{k-1}) F_{\theta}^{k-1}(\mu_{\theta})(dx_{k-1})dx_k|
\leq C \epsilon.
$$ 
Then we note that by \cite[Theorem 2]{jasra} (see \eqref{eq:abc_filter_stability}) and (A\ref{hyp:like_grad_bound})
$$
|\int_{\mathsf{X}^2} \nabla\{g_{\theta,\epsilon}(y_k|x_k)\} f_{\theta}(x_k|x_{k-1}) [F_{\theta}^{k-1}(\mu_{\theta})(dx_{k-1}) - F_{\theta,\epsilon}^{k-1}(\mu_{\theta})(dx_{k-1})]dx_k| \leq C\epsilon
$$
Thus we have shown that
$$
|\int_{\mathsf{X}^2} \nabla\{g_{\theta}(y_k|x_k)\} f_{\theta}(x_k|x_{k-1}) F_{\theta}^{k-1}(\mu_{\theta})(dx_{k-1})dx_k - 
\int_{\mathsf{X}^2} \nabla\{g_{\theta,\epsilon}(y_k|x_k)\} f_{\theta}(x_k|x_{k-1}) F_{\theta,\epsilon}^{k-1}(\mu_{\theta})(dx_{k-1})dx_k|
\leq C\epsilon.
$$
Now, moving onto \eqref{eq:ll_1_2}, by \eqref{eq:g_continuity_epsilon}
we have
$$
|\int_{\mathsf{X}^2} [g_{\theta}(y_k|x_k)-g_{\theta,\epsilon}(y_k|x_k)]\nabla\{f_{\theta}(x_k|x_{k-1})\} F_{\theta}^{k-1}(\mu_{\theta})(dx_{k-1})dx_k|
\leq C \epsilon.
$$
and can again use \cite[Theorem 2]{jasra} (i.e.~\eqref{eq:abc_filter_stability}) to deduce that
$$
|\int_{\mathsf{X}^2} g_{\theta,\epsilon}(y_k|x_k) \nabla\{f_{\theta}(x_k|x_{k-1})\} [F_{\theta}^{k-1}(\mu_{\theta})(dx_{k-1})-F_{\theta,\epsilon}^{k-1}(\mu_{\theta})(dx_{k-1})]dx_k| \leq C\epsilon
$$
and thus that 
$$
|
\int_{\mathsf{X}^2} g_{\theta}(y_k|x_k) \nabla\{f_{\theta}(x_k|x_{k-1})\} F_{\theta}^{k-1}(\mu_{\theta})(dx_{k-1})dx_k - 
\int_{\mathsf{X}^2} g_{\theta,\epsilon}(y_k|x_k) \nabla\{f_{\theta}(x_k|x_{k-1})\} F_{\theta,\epsilon}^{k-1}(\mu_{\theta})(dx_{k-1})dx_k
| \leq C\epsilon
$$
which upper-bounds the expression in \eqref{eq:ll_1_2}. We now move onto \eqref{eq:ll_1_3}, which upper-bounded by
$$
|\int_{\mathsf{X}^2} [g_{\theta}(y_k|x_k)-g_{\theta,\epsilon}(y_k|x_k)]f_{\theta}(x_k|x_{k-1}) \widetilde{F}_{\theta}^{k-1}(\mu_{\theta},\widetilde{\mu_{\theta}})(dx_{k-1})dx_k| + 
$$
$$
|\int_{\mathsf{X}^2} g_{\theta,\epsilon}(y_k|x_k) f_{\theta}(x_k|x_{k-1}) [\widetilde{F}_{\theta}^{k-1}(\mu_{\theta},\widetilde{\mu_{\theta}})(dx_{k-1})-\widetilde{F}_{\theta,\epsilon}^{k-1}(\mu_{\theta},\widetilde{\mu_{\theta}})(dx_{k-1})]dx_k|.
$$
For the first expression, we can write:
$$
(\sup_{x\in\mathsf{X}}|g_{\theta}(y_k|x) - g_{\theta,\epsilon}(y_k|x)|) |\int_{\mathsf{X}}\bigg(\int_{\mathsf{X}} 
\frac{[g_{\theta}(y_k|x_k) - g_{\theta,\epsilon}(y_k|x_k) ]}{(\sup_{x\in\mathsf{X}}|g_{\theta}(y_k|x) - g_{\theta,\epsilon}(y_k|x)|)}f_{\theta}(x_k|x_{k-1})dx_k\bigg) \widetilde{F}_{\theta}^{k-1}(\mu_{\theta},\widetilde{\mu_{\theta}})(dx_{k-1})|.
$$
Then we can apply \eqref{eq:g_continuity_epsilon} and, noting that 
$$
\bigg(\int_{\mathsf{X}} 
\frac{[g_{\theta}(y_k|x_k) - g_{\theta,\epsilon}(y_k|x_k) ]}{(\sup_{x\in\mathsf{X}}|g_{\theta}(y_k|x) - g_{\theta,\epsilon}(y_k|x)|)}f_{\theta}(x_k|x_{k-1})dx_k\bigg) \leq 1
$$
one can also use Lemma \ref{lem:filter_deriv_upper} to deduce that
$$
|\int_{\mathsf{X}^2} [g_{\theta}(y_k|x_k)-g_{\theta,\epsilon}(y_k|x_k)]f_{\theta}(x_k|x_{k-1}) \widetilde{F}_{\theta}^{k-1}(\mu_{\theta},\widetilde{\mu_{\theta}})(dx_{k-1})dx_k| \leq C(1+\|\widetilde{\mu_{\theta}}\|)\epsilon.
$$
Then, one can easily apply Theorem \ref{theo:filt_deriv_bias} to show that
$$
|\int_{\mathsf{X}^2} g_{\theta,\epsilon}(y_k|x_k) f_{\theta}(x_k|x_{k-1}) [\widetilde{F}_{\theta}^{k-1}(\mu_{\theta},\widetilde{\mu_{\theta}})(dx_{k-1})-\widetilde{F}_{\theta,\epsilon}^{k-1}(\mu_{\theta},\widetilde{\mu_{\theta}})(dx_{k-1})]dx_k|
\leq C(2+\|\widetilde{\mu_{\theta}}\|)\epsilon.
$$
Thus we have upper-bounded the $\mathbb{L}_1-$norm of the sum of the expressions \eqref{eq:ll_1_1}-\eqref{eq:ll_1_3} and  we have established that
\begin{equation}
\frac{[\nabla p_{\theta}(y_k|y_{1:k-1}) - \nabla p_{\theta,\epsilon}(y_k|y_{1:k-1}) ]}{p_{\theta}(y_k|y_{1:k-1})}
\leq C(2+\|\widetilde{\mu_{\theta}}\|)\epsilon.
\label{eq:first_ll_bound}
\end{equation}

Moving onto the second summand on the R.H.S.~of \eqref{eq:ll_main_decomp}, 
$$
\frac{\nabla p_{\theta,\epsilon}(y_k|y_{1:k-1})}{p_{\theta}(y_k|y_{1:k-1})p_{\theta,\epsilon}(y_k|y_{1:k-1})}[p_{\theta,\epsilon}(y_k|y_{1:k-1})-p_{\theta}(y_k|y_{1:k-1}).
$$
By \eqref{eq:p_control}, we need only consider upper-bounding, in $\mathbb{L}_1$, $\nabla p_{\theta,\epsilon}(y_k|y_{1:k-1})$. This can be decomposed into the sum of three expressions:
$$
\int_{\mathsf{X}^2} \nabla\{g_{\theta,\epsilon}(y_k|x_k)\} f_{\theta}(x_k|x_{k-1}) F_{\theta,\epsilon}^{k-1}(\mu_{\theta})(dx_{k-1})dx_k
$$
$$
\int_{\mathsf{X}^2} g_{\theta,\epsilon}(y_k|x_k) \nabla\{f_{\theta}(x_k|x_{k-1})\} F_{\theta,\epsilon}^{k-1}(\mu_{\theta})(dx_{k-1})dx_k
$$
and
$$
\int_{\mathsf{X}^2} g_{\theta,\epsilon}(y_k|x_k) f_{\theta}(x_k|x_{k-1}) \widetilde{F}_{\theta,\epsilon}^{k-1}(\mu_{\theta},\widetilde{\mu_{\theta}})(dx_{k-1})dx_k.
$$
As $\nabla\{g_{\theta,\epsilon}(y_k|x_k)\}$ and $g_{\theta,\epsilon}(y_k|x_k) \nabla\{f_{\theta}(x_k|x_{k-1})\}$ are upper-bounded as well as $\mathsf{X}$ being compact the first two expressions are upper-bounded in $\mathbb{L}_1$. In addition as $\int_{\mathsf{X}} g_{\theta,\epsilon}(y_k|x_k) f_{\theta}(x_k|x_{k-1})dx_k$ is upper-bounded, we can apply Lemma \ref{lem:filter_deriv_upper} to see that the third expression is upper-bounded in $\mathbb{L}_1$. Hence, we have shown that
\begin{equation}
\bigg|\frac{\nabla p_{\theta,\epsilon}(y_k|y_{1:k-1})}{p_{\theta}(y_k|y_{1:k-1})p_{\theta,\epsilon}(y_k|y_{1:k-1})}[p_{\theta,\epsilon}(y_k|y_{1:k-1})-p_{\theta}(y_k|y_{1:k-1})]\bigg|
\leq C(1+\|\widetilde{\mu_{\theta}}\|)\epsilon\label{eq:second_ll_bound}.
\end{equation}
Combining the results \eqref{eq:first_ll_bound}-\eqref{eq:second_ll_bound} and noting \eqref{eq:ll_main_decomp} we can conclude.
\end{proof}

\section{Bias of the Gradient of the Filter}\label{app:filt_deriv}

\begin{theorem}\label{theo:filt_deriv_bias}
Assume (A1-5). Then there exist a $C<+\infty$ such that for any $n\geq 1$, $\mu_{\theta}\in\mathcal{P}(\mathsf{X})$, $\widetilde{\mu_{\theta}}\in\mathcal{M}(\mathsf{X})$, $\epsilon >0$,
$\theta\in\Theta$:
$$
\|\widetilde{F}_{\theta}^n(\mu_{\theta},\widetilde{\mu_{\theta}})-\widetilde{F}_{\theta,\epsilon}^n(\mu_{\theta},\widetilde{\mu_{\theta}})\| \leq C\epsilon(2+\|\widetilde{\mu_{\theta}}\|).
$$
\end{theorem}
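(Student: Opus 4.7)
The plan is to use the standard additive decomposition of the filter derivative (as in \cite{vlad}) to reduce the problem to controlling ABC perturbations of the operators $\widetilde{G}^n$ and $\widetilde{H}^k$ defined in \eqref{eq:gtdefn}--\eqref{eq:htdefn}, and then to sum the resulting contributions using the exponential forgetting implied by (A3).

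First I would write
\begin{equation*}
\widetilde{F}_\theta^n(\mu_\theta,\widetilde{\mu_\theta}) = \widetilde{G}^n(\mu_\theta,\widetilde{\mu_\theta}) + \sum_{k=1}^n \widetilde{G}_{k,n}\bigl(F_\theta^{k-1}(\mu_\theta),\widetilde{H}^k(\mu_\theta)\bigr),
\end{equation*}
where $\widetilde{G}_{k,n}(\nu,\widetilde{\nu})$ denotes the signed-measure propagation of $\widetilde{\nu}$ through $R_{k+1,\theta},\dots,R_{n,\theta}$ starting at $\nu$. This decomposition is obtained by applying the chain rule to $\mu_\theta R_{1,n,\theta}(\varphi)/\mu_\theta R_{1,n,\theta}(1)$ and recognising the resulting terms via \eqref{eq:gtdefn}--\eqref{eq:htdefn}. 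An identical decomposition holds for $\widetilde{F}_{\theta,\epsilon}^n$ with every $R_{k,\theta}$, $\widetilde{R}_{k,\theta}$, $F_\theta^{k-1}$ replaced by its ABC analogue.

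Subtracting the two decompositions term by term, each of the $n+1$ differences is controlled using three ingredients: the pointwise bounds $\sup_x|g_\theta-g_{\theta,\epsilon}|\leq C\epsilon$ and $\sup_x|\nabla\{g_\theta\}-\nabla\{g_{\theta,\epsilon}\}|\leq C\epsilon$, proved in \eqref{eq:g_continuity_epsilon} and \eqref{eq:grad_g_cont}, which give operator-norm bounds of order $\epsilon$ for $R_{k,\theta}-R_{k,\theta,\epsilon}$ and $\widetilde{R}_{k,\theta}-\widetilde{R}_{k,\theta,\epsilon}$; the uniform filter-stability estimate \eqref{eq:abc_filter_stability} from \cite{jasra}; and the Doeblin-type minorization coming from (A3), which by the arguments of \cite{vlad} yields geometric forgetting $\|\widetilde{G}_{k,n}(\nu,\widetilde{\nu})\|\leq C\rho^{n-k}\|\widetilde{\nu}\|$ with $\rho<1$ uniform in $\theta$ and $\epsilon$. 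For each $k$ I would insert the hybrid quantity $\widetilde{G}_{k,n,\epsilon}(F_\theta^{k-1}(\mu_\theta),\widetilde{H}^k(\mu_\theta))$ and add/subtract to separate (a) the perturbation of the downstream propagator from (b) the perturbation of the input pair $(F^{k-1},\widetilde{H}^k)$; each piece is of order $C\epsilon\rho^{n-k}$. The initial-measure term yields $C\epsilon\|\widetilde{\mu_\theta}\|$ by the same argument applied to $\widetilde{G}^n$ acting on $\widetilde{\mu_\theta}$. Summing $\rho^{n-k}$ over $k$ gives a geometric series bounded independently of $n$, yielding the estimate $C\epsilon(2+\|\widetilde{\mu_\theta}\|)$.

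The hard part is the bookkeeping in the sum: one must verify that $\widetilde{G}_{k,n}$ enjoys exponential stability in total-variation norm \emph{uniformly} in $\epsilon$, so that an $O(\epsilon)$ perturbation at each of $n$ time steps does not compound into $O(n\epsilon)$. This stability follows from \cite{vlad} provided the minorization constants can be chosen independently of $\epsilon$, which in turn follows from (A3) because $g_{\theta,\epsilon}$ inherits the two-sided bounds from $g_\theta$ through the defining ratio \eqref{eq:g_epsilon}. Ensuring this $\epsilon$-uniformity at each appearance of a stability bound is what makes the whole decomposition collapse to an $n$-independent ABC bias, and is the main technical point requiring careful verification.
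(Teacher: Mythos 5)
Your overall toolkit is the right one (the additive decomposition of the filter derivative from \cite[Lemma 6.4]{vlad}, geometric forgetting of $\widetilde{G}$, single-step $\mathcal{O}(\epsilon)$ perturbation bounds, and $\epsilon$-uniform minorization from (A\ref{hyp:like_bound})), but the order in which you deploy it leaves a genuine gap. You subtract the exact and ABC decompositions term by term, so the $k$-th difference contains the piece you call (a): the \emph{same} input pair $(F_\theta^{k-1}(\mu_\theta),\widetilde{H}^k(\mu_\theta))$ fed into two \emph{different multi-step} propagators, $\widetilde{G}_{k,n}$ versus $\widetilde{G}_{k,n,\epsilon}$. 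None of your three ingredients bounds this operator difference: the pointwise bounds \eqref{eq:g_continuity_epsilon} and \eqref{eq:grad_g_cont} are single-step statements, \eqref{eq:abc_filter_stability} concerns the filter rather than the signed-measure propagator, and the forgetting bound $\|\widetilde{G}_{k,n}(\nu,\widetilde{\nu})\|\leq C\rho^{n-k}\|\widetilde{\nu}\|$ controls each operator separately but produces no factor of $\epsilon$. To extract an $\epsilon$ you must telescope \emph{inside} this term over the $n-k$ intermediate steps, swapping one $R_{j,\theta,\epsilon}$ for $R_{j,\theta}$ at a time; done carefully this yields a bound of the form $C\epsilon\sum_{j=k+1}^{n}\rho^{n-j}\rho^{j-k}=C\epsilon(n-k)\rho^{n-k}$ rather than your claimed $C\epsilon\rho^{n-k}$ (still summable over $k$, so the theorem survives, but this is not covered by the argument as written). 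The same issue afflicts the initial term $\widetilde{G}^n(\mu_\theta,\widetilde{\mu_\theta})-\widetilde{G}^n_{\theta,\epsilon}(\mu_\theta,\widetilde{\mu_\theta})$, which is again a difference of $n$-step propagators. By contrast, the point you single out as the main technical difficulty --- $\epsilon$-uniformity of the minorization constants --- is essentially immediate here, since $g_{\theta,\epsilon}$ is an average of $g_\theta$ over $A_{\epsilon,y}$ and so inherits the two-sided bounds of (A\ref{hyp:like_bound}) directly.

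The paper avoids this difficulty entirely by telescoping first over the time at which the ABC dynamics are switched to the exact dynamics: the difference of the two filter derivatives is written as a sum over $p$ of terms in which the \emph{downstream} propagation is by the exact operators in both summands and only the single step at time $n-p+1$ differs. Applying \cite[Lemma 6.4]{vlad} within each such term, every resulting difference is a fixed operator ($\widetilde{G}$ or $\widetilde{H}$, with Lipschitz constants $C\rho^{p-1}$ supplied by \cite[Lemmas 6.7 and 6.8]{vlad}) evaluated at two inputs that differ by a \emph{one-step} ABC perturbation, which is $\mathcal{O}(\epsilon)$ by Lemmas \ref{lem:abc_perturbation_filter} and \ref{lem:abc_filter_deriv_perturb}. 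If you wish to keep your order of decomposition you must supply the inner telescoping for the multi-step propagator differences as a separate lemma; otherwise adopt the paper's ordering, under which your remaining steps go through essentially as you describe.
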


\begin{proof}
We have the following telescoping sum decomposition (e.g.~\cite{delmoral}) for the differences in the filters, with $\varphi\in\mathcal{B}_b(\mathsf{X})$:
$$
F_{\theta}^n(\mu_{\theta})(\varphi)-F_{\theta,\epsilon}^n(\mu_{\theta})(\varphi) = 
\sum_{p=1}^n\bigg[ F_{\theta}^{n-p+1,n}(F_{\theta,\epsilon}^{n-p}(\mu_{\theta}))(\varphi) - 
F_{\theta}^{n-p+2,n}(F_{\theta,\epsilon}^{n-p+1}(\mu_{\theta}))(\varphi)
\bigg]
$$
where we are using the notation $F_{\theta}^{q,n}(\mu_{\theta})(\varphi) = \frac{\mu_{\theta} R_{q,n,\theta}(\varphi)}{\mu_{\theta} R_{q,n,\theta}(1)}$, for $1\leq q\leq n$. 
Hence, taking gradients and swapping the order of summation and differentiation
we have and omitting the second arguments of $\widetilde{F}$ on the R.H.S.~(to reduce the notational burden)
\begin{eqnarray}
\widetilde{F}_{\theta}^n(\mu_{\theta},\widetilde{\mu_{\theta}})(\varphi)-\widetilde{F}_{\theta,\epsilon}^n(\mu_{\theta},\widetilde{\mu_{\theta}})(\varphi) & = & 
\sum_{p=1}^n\bigg[ \widetilde{F}_{\theta}^{n-p+2,n}(F_{\theta}^{(n-p+1)}[F_{\theta,\epsilon}^{n-p}(\mu_{\theta})],\widetilde{F}_{\theta}^{(n-p+1)}[F_{\theta,\epsilon}^{n-p}(\mu_{\theta})])(\varphi) - 
\nonumber
\\
& & 
\widetilde{F}_{\theta}^{n-p+2,n}(F_{\theta,\epsilon}^{(n-p+1)}[F_{\theta,\epsilon}^{(n-p)}(\mu_{\theta})],
\widetilde{F}_{\theta,\epsilon}^{(n-p+1)}[F_{\theta,\epsilon}^{(n-p)}(\mu_{\theta})])(\varphi)
\bigg]
\label{eq:filter_deriv_main_decomp}.
\end{eqnarray}
To continue with the proof we will adopt \cite[Lemma 6.4]{vlad}:
$$
\widetilde{F}_{\theta}^n(\mu_{\theta},\widetilde{\mu_{\theta}})(\varphi) = \widetilde{G}_{\theta}^n(\mu_{\theta},\widetilde{\mu_{\theta}})
+ \sum_{q=1}^n \widetilde{G}_{\theta}^{q+1,n}(F_{\theta}^q(\mu_{\theta}),\widetilde{H}^q(\mu_{\theta}))(\varphi)
$$
with  $\widetilde{G}_{\theta}^n$ and $\widetilde{H}^q(\mu_{\theta})$ defined in \eqref{eq:gtdefn}-\eqref{eq:htdefn}
and $\widetilde{G}_{\theta}^{q+1,n}$  similar extension to the notation as for the filter $F_{\theta}^{q+1,n}$ and the convention
$\widetilde{G}_{\theta}^{n+1,n}(\mu_{\theta},\widetilde{\mu_{\theta}}) = \widetilde{\mu_{\theta}}$.
Returning to \eqref{eq:filter_deriv_main_decomp} and again omitting the second arguments of $\widetilde{F}$ on the R.H.S.:
$$
\widetilde{F}_{\theta}^n(\mu_{\theta},\widetilde{\mu_{\theta}})(\varphi)-\widetilde{F}_{\theta,\epsilon}^n(\mu_{\theta},\widetilde{\mu_{\theta}})(\varphi) = 
$$
$$
 \sum_{p=1}^n
\bigg[\widetilde{G}_{\theta}^{n-p+2,n}\{F_{\theta}^{(n-p+1)}(F_{\theta,\epsilon}^{n-p}(\mu_{\theta})),\widetilde{F}_{\theta}^{(n-p+1)}(F_{\theta,\epsilon}^{n-p}(\mu_{\theta}))\}(\varphi) - 
\widetilde{G}_{\theta}^{n-p+2,n}\{F_{\theta,\epsilon}^{(n-p+1)}[F_{\theta,\epsilon}^{n-p}(\mu_{\theta})],\widetilde{F}_{\theta,\epsilon}^{(n-p+1)}[F_{\theta,\epsilon}^{n-p}(\mu_{\theta})]\}(\varphi)~+ 
$$
$$
\sum_{q=n-p+2}^{n} \bigg\{\widetilde{G}_{\theta}^{q+1,n}\{F_{\theta}^{n-p+2,q}[ F_{\theta}^{(n-p+1)}(F_{\theta,\epsilon}^{n-p}(\mu_{\theta})) ]
,\widetilde{H}_{\theta}^{n-p+2,q}[F_{\theta}^{(n-p+1)}(F_{\theta,\epsilon}^{n-p}(\mu_{\theta})) ]\}(\varphi)~- 
$$
\begin{equation}
\widetilde{G}_{\theta}^{q+1,n}\{F_{\theta}^{n-p+2,q}[F_{\theta,\epsilon}^{(n-p+1)}(F_{\theta,\epsilon}^{n-p}(\mu_{\theta}))],
\widetilde{H}_{\theta}^{n-p+2,q}[F_{\theta,\epsilon}^{(n-p+1)}(F_{\theta,\epsilon}^{n-p}(\mu_{\theta}))]\}(\varphi)
\bigg\}
\bigg]\label{eq:main_decomp_filter_+1}.
\end{equation}

We start first with the summand on the R.H.S.~of the second line of \eqref{eq:main_decomp_filter_+1}, which we compactly denote as:
$$
\widetilde{G}_{\theta}^{p-1}\{F_{\theta}[F_{\theta,\epsilon}^{n-p}(\mu_{\theta})],\widetilde{F}_{\theta}[F_{\theta,\epsilon}^{n-p}(\mu_{\theta})]\}(\varphi) - 
\widetilde{G}_{\theta}^{p-1}\{F_{\theta,\epsilon}[F_{\theta,\epsilon}^{n-p}(\mu_{\theta})],\widetilde{F}_{\theta,\epsilon}[F_{\theta,\epsilon}^{n-p}(\mu_{\theta})]\}(\varphi).
$$
This can be decomposed further into the sum of 
\begin{equation}
\widetilde{G}_{\theta}^{p-1}\{F_{\theta}[F_{\theta,\epsilon}^{n-p}(\mu_{\theta})],\widetilde{F}_{\theta}[F_{\theta,\epsilon}^{n-p}(\mu_{\theta})]\}(\varphi)
- 
\widetilde{G}_{\theta}^{p-1}\{F_{\theta,\epsilon}[F_{\theta,\epsilon}^{n-p}(\mu_{\theta})],\widetilde{F}_{\theta}[F_{\theta,\epsilon}^{n-p}(\mu_{\theta})]\}(\varphi)
\label{eq:filt_grad_new1}
\end{equation}
and
\begin{equation}
\widetilde{G}_{\theta}^{p-1}\{F_{\theta,\epsilon}[F_{\theta,\epsilon}^{n-p}(\mu_{\theta})],\widetilde{F}_{\theta}[F_{\theta,\epsilon}^{n-p}(\mu_{\theta})]\}(\varphi) - 
\widetilde{G}_{\theta}^{p-1}\{F_{\theta,\epsilon}[F_{\theta,\epsilon}^{n-p}(\mu_{\theta})],\widetilde{F}_{\theta,\epsilon}[F_{\theta,\epsilon}^{n-p}(\mu_{\theta})]\}(\varphi)
\label{eq:filt_grad_new2}.
\end{equation}
Beginning with \eqref{eq:filt_grad_new1}, by \cite[Lemma 6.7]{vlad}, equation (43) we have
$$
|\widetilde{G}_{\theta}^{p-1}\{F_{\theta}[F_{\theta,\epsilon}^{n-p}(\mu_{\theta})],\widetilde{F}_{\theta}[F_{\theta,\epsilon}^{n-p}(\mu_{\theta})]\}(\varphi)
- 
\widetilde{G}_{\theta}^{p-1}\{F_{\theta,\epsilon}[F_{\theta,\epsilon}^{n-p}(\mu_{\theta})],\widetilde{F}_{\theta}[F_{\theta,\epsilon}^{n-p}(\mu_{\theta})]\}(\varphi)|
$$
$$
\leq C\|\varphi\|_{\infty}\rho^{p-1}\|F_{\theta}[F_{\theta,\epsilon}^{n-p}(\mu_{\theta})]-F_{\theta,\epsilon}[F_{\theta,\epsilon}^{n-p}(\mu_{\theta})]\|
\|\widetilde{F}_{\theta}[F_{\theta,\epsilon}^{n-p}(\mu_{\theta})]\|
$$
where $\rho\in(0,1)$ and $C$ do not depend upon $\mu_{\theta},\epsilon$ or $n,p$.
Applying Lemma \ref{lem:abc_perturbation_filter} we have
$$
|\widetilde{G}_{\theta}^{p-1}\{F_{\theta}[F_{\theta,\epsilon}^{n-p}(\mu_{\theta})],\widetilde{F}_{\theta}[F_{\theta,\epsilon}^{n-p}(\mu_{\theta})]\}(\varphi)
- 
\widetilde{G}_{\theta}^{p-1}\{F_{\theta,\epsilon}[F_{\theta,\epsilon}^{n-p}(\mu_{\theta})],\widetilde{F}_{\theta}[F_{\theta,\epsilon}^{n-p}(\mu_{\theta})]\}(\varphi)|
$$
$$
\leq C\|\varphi\|_{\infty}\rho^{p-1}\epsilon \|\widetilde{F}_{\theta}[F_{\theta,\epsilon}^{n-p}(\mu_{\theta})]\|
$$
where $C$ does not depend upon $\mu_{\theta}$, $\epsilon$ or $n,p$.
Then by Remark \ref{rem:filt_deriv_contr} and Lemma \ref{lem:filter_deriv_upper} $\|\widetilde{F}_{\theta}[F_{\theta,\epsilon}^{n-p}(\mu_{\theta})]\|\leq C(2+\|\widetilde{\mu_{\theta}}\|)$ 
and thus the upper-bound on the $\mathbb{L}_1-$norm of \eqref{eq:filt_grad_new1}:
\begin{equation}
|\widetilde{G}_{\theta}^{p-1}\{F_{\theta}[F_{\theta,\epsilon}^{n-p}(\mu_{\theta})],\widetilde{F}_{\theta}[F_{\theta,\epsilon}^{n-p}(\mu_{\theta})]\}(\varphi)
- 
\widetilde{G}_{\theta}^{p-1}\{F_{\theta,\epsilon}[F_{\theta,\epsilon}^{n-p}(\mu_{\theta})],\widetilde{F}_{\theta}[F_{\theta,\epsilon}^{n-p}(\mu_{\theta})]\}(\varphi)|
\leq C\|\varphi\|_{\infty}\epsilon\rho^{p-1}(2+\|\widetilde{\mu_{\theta}}\|) \label{eq:filt_grad_new3}.
\end{equation}
Now, moving onto  \eqref{eq:filt_grad_new2}, by \cite[Lemma 6.7]{vlad}, equation (42):
$$
|\widetilde{G}_{\theta}^{p-1}\{F_{\theta,\epsilon}[F_{\theta,\epsilon}^{n-p}(\mu_{\theta})],\widetilde{F}_{\theta}[F_{\theta,\epsilon}^{n-p}(\mu_{\theta})]\}(\varphi) - 
\widetilde{G}_{\theta}^{p-1}\{F_{\theta,\epsilon}[F_{\theta,\epsilon}^{n-p}(\mu_{\theta})],\widetilde{F}_{\theta,\epsilon}[F_{\theta,\epsilon}^{n-p}(\mu_{\theta})]\}(\varphi)|
$$
$$
 \leq
C\rho^{p-1}\|\varphi\|_{\infty}\|\widetilde{F}_{\theta}[F_{\theta,\epsilon}^{n-p}(\mu_{\theta})] - \widetilde{F}_{\theta,\epsilon}[F_{\theta,\epsilon}^{n-p}(\mu_{\theta})]\|.
$$
Applying Lemma \ref{lem:abc_filter_deriv_perturb} 
$$
|\widetilde{G}_{\theta}^{p-1}\{F_{\theta,\epsilon}[F_{\theta,\epsilon}^{n-p}(\mu_{\theta})],\widetilde{F}_{\theta}[F_{\theta,\epsilon}^{n-p}(\mu_{\theta})]\}(\varphi) - 
\widetilde{G}_{\theta}^{p-1}\{F_{\theta,\epsilon}[F_{\theta,\epsilon}^{n-p}(\mu_{\theta})],\widetilde{F}_{\theta,\epsilon}[F_{\theta,\epsilon}^{n-p}(\mu_{\theta})]\}(\varphi)|
$$
$$
\leq
C\|\varphi\|_{\infty}\epsilon\rho^{p-1}(1+\|\widetilde{F}_{\theta,\epsilon}^{n-p}(\mu_{\theta})\|).
$$
Then by Lemma \ref{lem:filter_deriv_upper}, we deduce that
\begin{equation}
|\widetilde{G}_{\theta}^{p-1}\{F_{\theta,\epsilon}[F_{\theta,\epsilon}^{n-p}(\mu_{\theta})],\widetilde{F}_{\theta}[F_{\theta,\epsilon}^{n-p}(\mu_{\theta})]\}(\varphi) - 
\widetilde{G}_{\theta}^{p-1}\{F_{\theta,\epsilon}[F_{\theta,\epsilon}^{n-p}(\mu_{\theta})],\widetilde{F}_{\theta,\epsilon}[F_{\theta,\epsilon}^{n-p}(\mu_{\theta})]\}(\varphi)|
 \leq 
C\|\varphi\|_{\infty}\epsilon \rho^{p-1} (2 + \|\widetilde{\mu_{\theta}}\|)\label{eq:filt_grad_new4}.
\end{equation}
Combining \eqref{eq:filt_grad_new3} and \eqref{eq:filt_grad_new4}
\begin{equation}
|\widetilde{G}_{\theta}^{p-1}\{F_{\theta}[F_{\theta,\epsilon}^{n-p}(\mu_{\theta})],\widetilde{F}_{\theta}[F_{\theta,\epsilon}^{n-p}(\mu_{\theta})]\}(\varphi) - 
\widetilde{G}_{\theta}^{p-1}\{F_{\theta,\epsilon}[F_{\theta,\epsilon}^{n-p}(\mu_{\theta})],\widetilde{F}_{\theta,\epsilon}[F_{\theta,\epsilon}^{n-p}(\mu_{\theta})]\}(\varphi)|
\leq C\|\varphi\|_{\infty}\epsilon \rho^{p-1}(2 + \|\widetilde{\mu_{\theta}}\|) \label{eq:filt_grad_new5}.
\end{equation}

We now consider the summands over $q$ in the second and third lines of \eqref{eq:main_decomp_filter_+1}. Again, adopting the compact notation above we can
decompose the summands over $q$ into the sum of
\begin{equation}
\widetilde{G}_{\theta}^{n-q}\{F_{\theta}^{s}[F_{\theta}(F_{\theta,\epsilon}^{n-p}(\mu_{\theta}))],\widetilde{H}_{\theta}^{s}[F_{\theta}(F_{\theta,\epsilon}^{n-p}(\mu_{\theta}))]\}(\varphi) - 
\widetilde{G}_{\theta}^{n-q}\{F_{\theta}^s[ F_{\theta,\epsilon}(F_{\theta,\epsilon}^{n-p}(\mu_{\theta}))],\widetilde{H}_{\theta}^{s}[F_{\theta}(F_{\theta,\epsilon}^{n-p}(\mu_{\theta}))]\}(\varphi)
\label{eq:g1_filt_deriv}
\end{equation}
and
\begin{equation}
\widetilde{G}_{\theta}^{n-q}\{F_{\theta}^s[ F_{\theta,\epsilon}(F_{\theta,\epsilon}^{n-p}(\mu_{\theta}))],\widetilde{H}_{\theta}^{s}[F_{\theta}(F_{\theta,\epsilon}^{n-p}(\mu_{\theta}))]\}(\varphi)
 - 
\widetilde{G}_{\theta}^{n-q}\{F_{\theta}^s[ F_{\theta,\epsilon}(F_{\theta,\epsilon}^{n-p}(\mu_{\theta}))],\widetilde{H}_{\theta}^s[F_{\theta,\epsilon}(F_{\theta,\epsilon}^{n-p}(\mu_{\theta}))]\}(\varphi)
\label{eq:g2_filt_deriv}
\end{equation}
where $s=q-n+p-1$.
We start with \eqref{eq:g1_filt_deriv}; by \cite[Lemma 6.7]{vlad} equation (43), we have
$$
|\widetilde{G}_{\theta}^{n-q}\{F_{\theta}^{s}[F_{\theta}(F_{\theta,\epsilon}^{n-p}(\mu_{\theta}))],\widetilde{H}_{\theta}^{s}[F_{\theta}(F_{\theta,\epsilon}^{n-p}(\mu_{\theta}))]\}(\varphi) - 
\widetilde{G}_{\theta}^{n-q}\{F_{\theta}^s[ F_{\theta,\epsilon}(F_{\theta,\epsilon}^{n-p}(\mu_{\theta}))],\widetilde{H}_{\theta}^{s}[F_{\theta}(F_{\theta,\epsilon}^{n-p}(\mu_{\theta}))]\}(\varphi)|
$$
$$
\leq 
C\|\varphi\|_{\infty}\rho^{n-q}\|F_{\theta}^{s}[F_{\theta}(F_{\theta,\epsilon}^{n-p}(\mu_{\theta}))]-F_{\theta}^s[ F_{\theta,\epsilon}(F_{\theta,\epsilon}^{n-p}(\mu_{\theta}))]\|
\|\widetilde{H}_{\theta}^{s}[F_{\theta}(F_{\theta,\epsilon}^{n-p}(\mu_{\theta}))\|.
$$
Then we will use the stability of the filter (e.g.~\cite[Theorem 3.1]{vlad})
$$
\|F_{\theta}^{s}[F_{\theta}(F_{\theta,\epsilon}^{n-p}(\mu_{\theta}))]-F_{\theta}^s[ F_{\theta,\epsilon}(F_{\theta,\epsilon}^{n-p}(\mu_{\theta}))]\|
\leq C\rho^s
\|F_{\theta}(F_{\theta,\epsilon}^{n-p}(\mu_{\theta}))-F_{\theta,\epsilon}(F_{\theta,\epsilon}^{n-p}(\mu_{\theta}))\|.
$$
By Lemma \ref{lem:abc_perturbation_filter} $\|F_{\theta}(F_{\theta,\epsilon}^{n-p}(\mu_{\theta}))-F_{\theta,\epsilon}(F_{\theta,\epsilon}^{n-p}(\mu_{\theta}))\|\leq C \epsilon$ and thus 
$$
|\widetilde{G}_{\theta}^{n-q}\{F_{\theta}^{s}[F_{\theta}(F_{\theta,\epsilon}^{n-p}(\mu_{\theta}))],\widetilde{H}_{\theta}^{s}[F_{\theta}(F_{\theta,\epsilon}^{n-p}(\mu_{\theta}))]\}(\varphi) - 
\widetilde{G}_{\theta}^{n-q}\{F_{\theta}^s[ F_{\theta,\epsilon}(F_{\theta,\epsilon}^{n-p}(\mu_{\theta}))],\widetilde{H}_{\theta}^{s}[F_{\theta}(F_{\theta,\epsilon}^{n-p}(\mu_{\theta}))]\}(\varphi)|
$$
$$
\leq C\|\varphi\|_{\infty}\epsilon \rho^{p-1} \|\widetilde{H}_{\theta}^{s}[F_{\theta}(F_{\theta,\epsilon}^{n-p}(\mu_{\theta}))]\|.
$$
By \cite[Lemma 6.8]{vlad} we have $\|\widetilde{H}_{\theta}^{s}[F_{\theta}(F_{\theta,\epsilon}^{n-p}(\mu_{\theta}))]\|\leq C$, where $C$ does not depend upon 
$F_{\theta}(F_{\theta,\epsilon}^{n-p}(\mu_{\theta}))$ or $\epsilon$ and hence
$$
|\widetilde{G}_{\theta}^{n-q}\{F_{\theta}^{s}[F_{\theta}(F_{\theta,\epsilon}^{n-p}(\mu_{\theta}))],\widetilde{H}_{\theta}^{s}[F_{\theta}(F_{\theta,\epsilon}^{n-p}(\mu_{\theta}))]\}(\varphi) - 
\widetilde{G}_{\theta}^{n-q}\{F_{\theta}^s[ F_{\theta,\epsilon}(F_{\theta,\epsilon}^{n-p}(\mu_{\theta}))],\widetilde{H}_{\theta}^{s}[F_{\theta}(F_{\theta,\epsilon}^{n-p}(\mu_{\theta}))]\}(\varphi)|
\leq C\|\varphi\|_{\infty}\epsilon \rho^{p-1}.
$$
Now, turning to \eqref{eq:g2_filt_deriv} and applying \cite[Lemma 6.7]{vlad} (42) we have
$$
|\widetilde{G}_{\theta}^{n-q}\{F_{\theta}^s[ F_{\theta,\epsilon}(F_{\theta,\epsilon}^{n-p}(\mu_{\theta}))],\widetilde{H}_{\theta}^{s}[F_{\theta}(F_{\theta,\epsilon}^{n-p}(\mu_{\theta}))]\}(\varphi)
 - 
\widetilde{G}_{\theta}^{n-q}\{F_{\theta}^s[ F_{\theta,\epsilon}(F_{\theta,\epsilon}^{n-p}(\mu_{\theta}))],\widetilde{H}_{\theta}^s[F_{\theta,\epsilon}(F_{\theta,\epsilon}^{n-p}(\mu_{\theta}))]\}(\varphi)|
$$
\begin{equation}
\leq
C\|\varphi\|_{\infty}\rho^{n-q}\|\widetilde{H}_{\theta}^{s}[F_{\theta}(F_{\theta,\epsilon}^{n-p}(\mu_{\theta}))]-\widetilde{H}_{\theta}^s[F_{\theta,\epsilon}(F_{\theta,\epsilon}^{n-p}(\mu_{\theta}))]\|\label{eq:filt_grad_new_new}.
\end{equation}
Then by \cite[Lemma 6.8]{vlad} we have 
$$
\|\widetilde{H}_{\theta}^{s}[F_{\theta}(F_{\theta,\epsilon}^{n-p}(\mu_{\theta}))]-\widetilde{H}_{\theta}^s[F_{\theta,\epsilon}(F_{\theta,\epsilon}^{n-p}(\mu_{\theta}))]\| \leq C\rho^{s}
\|F_{\theta}(F_{\theta,\epsilon}^{n-p})(\mu_{\theta}) - F_{\theta,\epsilon}(F_{\theta,\epsilon}^{n-p}(\mu_{\theta}))\|
$$
and then on applying Lemma \ref{lem:abc_perturbation_filter} we thus have that
$$
\|\widetilde{H}_{\theta}^q(F_{\theta}(F_{\theta,\epsilon}^{n-p})(\mu_{\theta})) - \widetilde{H}_{\theta}^q(F_{\theta,\epsilon}^{n-p+1})(\mu_{\theta})\| \leq C
\epsilon \rho^{s}.
$$
Returning to \eqref{eq:filt_grad_new_new}, it follows by the above calculations that:
$$
|\widetilde{G}_{\theta}^{n-q}\{F_{\theta}^{s}[F_{\theta}(F_{\theta,\epsilon}^{n-p}(\mu_{\theta}))],\widetilde{H}_{\theta}^{s}[F_{\theta}(F_{\theta,\epsilon}^{n-p}(\mu_{\theta}))]\}(\varphi) - 
\widetilde{G}_{\theta}^{n-q}\{F_{\theta}^s[ F_{\theta,\epsilon}(F_{\theta,\epsilon}^{n-p}(\mu_{\theta}))],\widetilde{H}_{\theta}^{s}[F_{\theta}(F_{\theta,\epsilon}^{n-p}(\mu_{\theta}))]\}(\varphi)|
\leq C\|\varphi\|_{\infty}\epsilon \rho^{p-1}.
$$
Thus we have proved that
\begin{equation}
|\widetilde{G}_{\theta}^{n-q}\{F_{\theta}^{s}[F_{\theta}(F_{\theta,\epsilon}^{n-p}(\mu_{\theta}))],\widetilde{H}_{\theta}^{s}[F_{\theta}(F_{\theta,\epsilon}^{n-p}(\mu_{\theta}))]\}(\varphi)-
\widetilde{G}_{\theta}^{n-q}\{F_{\theta}^s[ F_{\theta,\epsilon}(F_{\theta,\epsilon}^{n-p}(\mu_{\theta}))],\widetilde{H}_{\theta}^s[F_{\theta,\epsilon}(F_{\theta,\epsilon}^{n-p}(\mu_{\theta}))]\}(\varphi)|| \leq C\|\varphi\|_{\infty}\epsilon \rho^{p-1}\label{eq:filt_grad_new_new1}.
\end{equation}

Then, returning to \eqref{eq:main_decomp_filter_+1} and noting \eqref{eq:filt_grad_new5}, \eqref{eq:filt_grad_new_new1} we have the upper-bound
$$
\|\widetilde{F}_{\theta}^n(\mu_{\theta},\widetilde{\mu_{\theta}})-\widetilde{F}_{\theta,\epsilon}^n(\mu_{\theta},\widetilde{\mu_{\theta}})\| \leq C\epsilon(2+\|\widetilde{\mu_{\theta}}\|)\sum_{p=1}^n [\rho^{p-1} + \sum_{q=n-p}^{n}\rho^{p-1}]\leq C\epsilon(2+\|\widetilde{\mu_{\theta}}\|).
$$
\end{proof}

\subsection{Technical Results for ABC Bias of the Filter-Derivative}

\begin{lem}\label{lem:abc_filter_deriv_perturb}
Assume (A1-5). Then there exist a $C<+\infty$ such that for any $n\geq 1$, $\mu_{\theta}\in\mathcal{P}(\mathsf{X})$, $\widetilde{\mu_{\theta}}\in\mathcal{M}(\mathsf{X})$, $\epsilon>0$
$\theta\in\Theta$:
$$
\|\widetilde{F}_{\theta}^{(n)}(\mu_{\theta},\widetilde{\mu_{\theta}}) -\widetilde{F}_{\theta,\epsilon}^{(n)}(\mu_{\theta},\widetilde{\mu_{\theta}})\|\leq C\epsilon(1+\|\widetilde{\mu_{\theta}}\|).
$$
\end{lem}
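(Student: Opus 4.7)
The plan is to decompose the one-step filter derivative into its two canonical pieces and then bound the $\epsilon$-perturbation of each piece using the assumed Lipschitz continuity of $g_\theta$ and $\nabla g_\theta$ in the argument $y$. By applying the quotient rule to $\mu_\theta R_{n,\theta}(\varphi)/\mu_\theta R_{n,\theta}(1)$ (equivalently, the one-step case of \cite[Lemma 6.4]{vlad}), one has
$$
\widetilde{F}_\theta^{(n)}(\mu_\theta,\widetilde{\mu_\theta})(\varphi) = \widetilde{G}^{(n)}(\mu_\theta,\widetilde{\mu_\theta})(\varphi) + \widetilde{H}^{(n)}(\mu_\theta)(\varphi),
$$
where $\widetilde{H}^{(n)}(\mu_\theta)(\varphi) = (\mu_\theta R_{n,\theta}(1))^{-1}[\mu_\theta \widetilde{R}_{n,\theta}(\varphi) - \mu_\theta\widetilde{R}_{n,\theta}(1) F_\theta^{(n)}(\mu_\theta)(\varphi)]$ with $\widetilde{R}_{n,\theta}(x,dx') = \nabla\{g_\theta(y_n|x')f_\theta(x'|x)\}dx'$, and the same decomposition holds for the ABC versions with $R_\epsilon$, $\widetilde{R}_\epsilon$. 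Hence I would write
$$
\widetilde{F}_\theta^{(n)}(\mu_\theta,\widetilde{\mu_\theta}) - \widetilde{F}_{\theta,\epsilon}^{(n)}(\mu_\theta,\widetilde{\mu_\theta}) = [\widetilde{G}^{(n)} - \widetilde{G}_\epsilon^{(n)}](\mu_\theta,\widetilde{\mu_\theta}) + [\widetilde{H}^{(n)} - \widetilde{H}_\epsilon^{(n)}](\mu_\theta)
$$
and bound each summand separately in total variation against any $\varphi\in\mathcal{B}_b(\mathsf{X})$ with $\|\varphi\|_\infty\leq 1$.

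For the $\widetilde{G}$-piece, I would use the standard add-and-subtract trick $A/B - A'/B' = (A-A')/B - A'(B-B')/(BB')$ applied to the ratios defining $\widetilde{G}^{(n)}$ and $\widetilde{G}_\epsilon^{(n)}$. The denominators $\mu_\theta R_{n,\theta}(1)$ and $\mu_\theta R_{n,\theta,\epsilon}(1)$ are bounded above and below by positive constants via (A\ref{hyp:like_bound}). The numerator differences are all of the form $\widetilde{\mu_\theta}[(g_\theta - g_{\theta,\epsilon})f_\theta\varphi]$, $\widetilde{\mu_\theta}[(g_\theta-g_{\theta,\epsilon})f_\theta]$, or involve $\|F_\theta^{(n)}(\mu_\theta) - F_{\theta,\epsilon}^{(n)}(\mu_\theta)\|$. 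Using \eqref{eq:g_continuity_epsilon} (which is a consequence of (A\ref{hyp:like_cont}) and (A\ref{hyp:stat})) together with the one-step analogue of Lemma \ref{lem:abc_perturbation_filter}, every such term is $O(\epsilon)$, and because all numerator terms carry a single factor of $\widetilde{\mu_\theta}$ (integrated against a bounded kernel), the overall contribution of the $\widetilde{G}$-piece is bounded by $C\epsilon\|\widetilde{\mu_\theta}\|$.

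For the $\widetilde{H}$-piece, the same ratio manipulation is applied, but now the crucial differences involve the gradient operator: $\widetilde{R}_{n,\theta}(\varphi)(x) - \widetilde{R}_{n,\theta,\epsilon}(\varphi)(x)$ expands into a piece with $\nabla g_\theta - \nabla g_{\theta,\epsilon}$ (controlled by \eqref{eq:grad_g_cont}, a consequence of (A\ref{hyp:like_grad_cont})) and a piece with $g_\theta - g_{\theta,\epsilon}$ multiplied by $\nabla f_\theta$ (controlled by \eqref{eq:g_continuity_epsilon} and the upper bound on $\nabla f_\theta$ from (A\ref{hyp:like_grad_bound})). The boundedness assumptions (A\ref{hyp:like_bound}) and (A\ref{hyp:like_grad_bound}) ensure $\widetilde{R}_{n,\theta,\epsilon}(\varphi)$ and $\widetilde{R}_{n,\theta,\epsilon}(1)$ are uniformly bounded, and the denominator $\mu_\theta R_{n,\theta}(1)$ is bounded below as before. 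Since $\widetilde{H}^{(n)}$ involves only $\mu_\theta$ (never $\widetilde{\mu_\theta}$), this contribution is bounded by $C\epsilon$. Adding the two contributions yields the claimed bound $C\epsilon(1+\|\widetilde{\mu_\theta}\|)$.

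The main obstacle is purely one of careful bookkeeping: in the ratio expansions there will be roughly four or five residual terms per piece, and I need to check that each of them inherits a genuine $\epsilon$-factor from exactly one of three sources (the likelihood perturbation, the gradient-likelihood perturbation, or the one-step filter perturbation) and that the $\|\widetilde{\mu_\theta}\|$-factor enters only through the $\widetilde{G}$-piece. The constants absorbed into $C$ depend on the lower and upper bounds in (A\ref{hyp:like_bound}) and (A\ref{hyp:like_grad_bound}) and on the Lipschitz constants in (A\ref{hyp:like_cont}) and (A\ref{hyp:like_grad_cont}), but not on $n$, $\epsilon$, or $\mu_\theta,\widetilde{\mu_\theta}$, as required.
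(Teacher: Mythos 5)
Your proposal is correct and follows essentially the same route as the paper: the same decomposition of $\widetilde{F}_{\theta}^{(n)}$ into the $\widetilde{G}^{(n)}$ and $\widetilde{H}^{(n)}$ pieces, the same add-and-subtract manipulation of the ratios, and the same three perturbation inputs (the $O(\epsilon)$ bounds on $g_{\theta}-g_{\theta,\epsilon}$, on $\nabla g_{\theta}-\nabla g_{\theta,\epsilon}$, and the one-step filter bound of Lemma \ref{lem:abc_perturbation_filter}), with the $\|\widetilde{\mu_{\theta}}\|$ factor entering only through the $\widetilde{G}$-piece. The only cosmetic difference is that the paper first applies a Hahn--Jordan decomposition $\widetilde{\mu_{\theta}}=\widetilde{\mu_{\theta}}^{+}-\widetilde{\mu_{\theta}}^{-}$ so as to invoke Lemma \ref{lem:abc_perturbation_filter} on normalized probability measures, whereas you manipulate the signed measure directly; both yield the same bound.
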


\begin{proof}
By \cite[Lemma 6.7]{vlad} we have the decomposition, for $\varphi\in\mathcal{B}_b(\mathsf{X})$:
$$
\widetilde{F}_{\theta}^{(n)}(\mu_{\theta},\widetilde{\mu_{\theta}})(\varphi) = \widetilde{G}_{\theta}^{(n)}(\mu_{\theta},\widetilde{\mu_{\theta}})(\varphi) - \widetilde{H}_{\theta}^{(n)}(\mu_{\theta})(\varphi)
$$
where 
$$
\widetilde{H}^{(n)}(\mu_{\theta})(\varphi) := 
\mu_{\theta} R_{n,\theta}(1)^{-1}[\mu_{\theta}\widetilde{R}_{n,\theta}(\varphi) - 
\mu_{\theta}\widetilde{R}_{n,\theta}(1) \mu_{\theta}(\varphi).
$$
Thus to control the difference, we can consider the two differences $\widetilde{G}_{\theta}^{(n)}(\mu_{\theta},\widetilde{\mu_{\theta}})(\varphi) - \widetilde{G}_{\theta,\epsilon}^{(n)}(\mu_{\theta},\widetilde{\mu_{\theta}})(\varphi)$ 
and $\widetilde{H}_{\theta}^{(n)}(\mu_{\theta})(\varphi)-\widetilde{H}_{\theta,\epsilon}^{(n)}(\mu_{\theta})(\varphi)$.

\textbf{Control of} $\widetilde{G}_{\theta}^{(n)}(\mu_{\theta},\widetilde{\mu_{\theta}})(\varphi) - \widetilde{G}_{\theta,\epsilon}^{(n)}(\mu_{\theta},\widetilde{\mu_{\theta}})(\varphi)$. We will use the Hahn-Jordan decomposition: 
$\widetilde{\mu_{\theta}}=\widetilde{\mu_{\theta}}^+-\widetilde{\mu_{\theta}}^-$. It is assumed that both $\widetilde{\mu_{\theta}}^+(1),\widetilde{\mu_{\theta}}^-(1)>0$. The scenario with either 
$\widetilde{\mu_{\theta}}^+(1)=0$ or  $\widetilde{\mu_{\theta}}^+(1)=0$ is straightforward and omitted for brevity.
We can write:
$$
\widetilde{G}_{\theta}^{(n)}(\mu_{\theta},\widetilde{\mu_{\theta}})(\varphi) = \frac{\widetilde{\mu_{\theta}}^+ R_{n,\theta}(1)}{\mu_{\theta} R_{n,\theta}(1)}[ F_{\theta}^{(n)}(\widetilde{\bar{\mu_{\theta}}}^+)(\varphi) - F_{\theta}^{(n)}(\mu_{\theta})(\varphi)]  + 
\frac{\widetilde{\mu_{\theta}}^- R_{n,\theta}(1)}{\mu_{\theta} R_{n,\theta}(1)}[ F_{\theta}^{(n)}(\widetilde{\bar{\mu_{\theta}}}^-)(\varphi) - F_{\theta}^{(n)}(\mu_{\theta})(\varphi)] 
$$
where $\widetilde{\bar{\mu_{\theta}}}^+(\cdot)=\widetilde{\mu_{\theta}}^+(\cdot)/\widetilde{\mu_{\theta}}^+(1)$ and $\widetilde{\bar{\mu_{\theta}}}^-(\cdot)=\widetilde{\mu_{\theta}}^-(\cdot)/\widetilde{\mu_{\theta}}^-(1)$. Thus we have
\begin{eqnarray}
\widetilde{G}_{\theta}^{(n)}(\mu_{\theta},\widetilde{\mu_{\theta}})(\varphi) - 
\widetilde{G}_{\theta,\epsilon}^{(n)}(\mu_{\theta},\widetilde{\mu_{\theta}})(\varphi) & = &
\bigg[\frac{\widetilde{\mu_{\theta}}^+R_{n,\theta}(1)}{\mu_{\theta} R_{n,\theta}(1)} - 
\frac{\widetilde{\mu_{\theta}}^+R_{n,\theta,\epsilon}(1)}{\mu_{\theta} R_{n,\theta,\epsilon}(1)}\bigg]
[F_{\theta}^{(n)}(\widetilde{\bar{\mu_{\theta}}}^+)(\varphi) - F_{\theta}^{(n)}(\mu_{\theta})(\varphi)]\nonumber\\
& & + \frac{\widetilde{\mu_{\theta}}^+R_{n,\theta,\epsilon}(1)}{\mu_{\theta} R_{n,\theta,\epsilon}(1)}
[F_{\theta}^{(n)}(\widetilde{\bar{\mu_{\theta}}}^+)(\varphi) - F_{\theta}^{(n)}(\mu_{\theta})(\varphi) - F_{\theta,\epsilon}^{(n)}(\widetilde{\bar{\mu_{\theta}}}^+)(\varphi)
+ F_{\theta,\epsilon}^{(n)}(\mu_{\theta})(\varphi)] \nonumber\\
& &+ 
\bigg[\frac{\widetilde{\mu_{\theta}}^-R_{n,\theta}(1)}{\mu_{\theta} R_{n,\theta}(1)} - 
\frac{\widetilde{\mu_{\theta}}^-R_{n,\theta,\epsilon}(1)}{\mu_{\theta} R_{n,\theta,\epsilon}(1)}\bigg]
[F_{\theta}^{(n)}(\widetilde{\bar{\mu_{\theta}}}^-)(\varphi) - F_{\theta}^{(n)}(\mu_{\theta})(\varphi)]\nonumber\\
& & + \frac{\widetilde{\mu_{\theta}}^-R_{n,\theta,\epsilon}(1)}{\mu_{\theta} R_{n,\theta,\epsilon}(1)}
[F_{\theta}^{(n)}(\widetilde{\bar{\mu_{\theta}}}^-)(\varphi) - F_{\theta}^{(n)}(\mu_{\theta})(\varphi) - F_{\theta,\epsilon}^{(n)}(\widetilde{\bar{\mu_{\theta}}}^-)(\varphi)
+ F_{\theta,\epsilon}^{(n)}(\mu_{\theta})(\varphi)]\label{eq:f-mu-epsilon-main}.
\end{eqnarray}
By symmetry, we need only consider the terms including $\widetilde{\mu_{\theta}}^+$; one can treat those with $\widetilde{\mu_{\theta}}^-$ by using similar arguments. First dealing with term on the first line of the  R.H.S.~of
\eqref{eq:f-mu-epsilon-main}. We have that
$$
\bigg[\frac{\widetilde{\mu_{\theta}}^+R_{n,\theta}(1)}{\mu_{\theta} R_{n,\theta}(1)} - 
\frac{\widetilde{\mu_{\theta}}^+R_{n,\theta,\epsilon}(1)}{\mu_{\theta} R_{n,\theta,\epsilon}(1)}\bigg]
[F_{\theta}^{(n)}(\widetilde{\bar{\mu_{\theta}}}^+)(\varphi) - F_{\theta}^{(n)}(\mu_{\theta})(\varphi)]
 = 
$$
$$
\bigg[\frac{\widetilde{\mu_{\theta}}^+R_{n,\theta}(1)-\widetilde{\mu_{\theta}}^+R_{n,\theta,\epsilon}(1)}{\mu_{\theta} R_{n,\theta}(1)} + 
\widetilde{\mu_{\theta}}^+R_{n,\theta,\epsilon}(1)\frac{\mu_{\theta} R_{n,\theta,\epsilon}(1) - \mu_{\theta} R_{n,\theta}(1)}{\mu_{\theta} R_{n,\theta,\epsilon}(1)\mu_{\theta} R_{n,\theta}(1)}
\bigg]
[F_{\theta}^{(n)}(\widetilde{\bar{\mu_{\theta}}}^+)(\varphi) - F_{\theta}^{(n)}(\mu_{\theta})(\varphi)]
$$
Now by (A\ref{hyp:like_cont}), for any $n$
\begin{equation}
\sup_{x\in\mathsf{X}}|R_{n,\theta}(1)(x) - 
R_{n,\theta,\epsilon}(1)(x)| \leq
C\epsilon
\label{eq:r_ineq}
\end{equation}
thus
$$
\bigg[\frac{\widetilde{\mu_{\theta}}^+R_{n,\theta}(1)-\widetilde{\mu_{\theta}}^+R_{n,\theta,\epsilon}(1)}{\mu_{\theta} R_{n,\theta}(1)} + 
\widetilde{\mu_{\theta}}^+R_{n,\theta,\epsilon}(1)\frac{\mu_{\theta} R_{n,\theta,\epsilon}(1) - \mu_{\theta} R_{n,\theta}(1)}{\mu_{\theta} R_{n,\theta,\epsilon}(1)\mu_{\theta} R_{n,\theta}(1)}
\bigg] \leq \frac{C\epsilon\widetilde{\mu_{\theta}}^+(1)}{\mu_{\theta} R_{n,\theta}(1)} + C\epsilon\frac{\widetilde{\mu_{\theta}}^+R_{n,\theta,\epsilon}(1)}{\mu_{\theta} R_{n,\theta,\epsilon}(1)\mu_{\theta} R_{n,\theta}(1)}.
$$
Now one can show that there exist a $C<+\infty$ such that for any $x,y\in\mathsf{X}$
\begin{equation}
R_{n,\theta}(1)(x) \geq CR_{n,\theta}(1)(y)\quad\quad R_{n,\theta,\epsilon}(1)(x) \geq C R_{n,\theta,\epsilon}(1)(y).
\label{eq:r_minor}
\end{equation}
Then it follows that
$$
\frac{C\epsilon\widetilde{\mu_{\theta}}^+(1)}{\mu_{\theta} R_{n,\theta}(1)} + C\epsilon\frac{\widetilde{\mu_{\theta}}^+R_{n,\theta,\epsilon}(1)}{\mu_{\theta} R_{n,\theta,\epsilon}(1)\mu_{\theta} R_{n,\theta}(1)}
\leq 
C\epsilon\widetilde{\mu_{\theta}}^+(1).
$$
Hence we have shown that
$$
\bigg[\frac{\widetilde{\mu_{\theta}}^+R_{n,\theta}(1)}{\mu_{\theta} R_{n,\theta}(1)} - 
\frac{\widetilde{\bar{\mu_{\theta}}}^+R_{n,\theta,\epsilon}(1)}{\mu_{\theta} R_{n,\theta,\epsilon}(1)}\bigg]
[F_{\theta}^{(n)}(\widetilde{\mu_{\theta}}^+)(\varphi) - F_{\theta}^{(n)}(\mu_{\theta})(\varphi)]
\leq
C\|\varphi\|_{\infty}\epsilon\widetilde{\mu_{\theta}}^+(1).
$$
Second, the second line of the R.H.S.~of
\eqref{eq:f-mu-epsilon-main}. By Lemma \ref{lem:abc_perturbation_filter}, for any $\mu_{\theta}\in\mathcal{P}(\mathsf{X})$, $\|F_{\theta}^{(n)}(\mu_{\theta})-F_{\theta,\epsilon}^{(n)}(\mu_{\theta})\|\leq C\epsilon$, with
$C$ independent of $\mu_{\theta}$, and in addition using 
\eqref{eq:r_minor} we have
$$
\frac{\widetilde{\mu_{\theta}}^+R_{n,\theta,\epsilon}(1)}{\mu_{\theta} R_{n,\theta,\epsilon}(1)}
[F_{\theta}^{(n)}(\widetilde{\bar{\mu_{\theta}}}^+)(\varphi) - F_{\theta}^{(n)}(\mu_{\theta})(\varphi) - F_{\theta,\epsilon}^{(n)}(\widetilde{\bar{\mu_{\theta}}}^+)(\varphi)
+ F_{\theta,\epsilon}^{(n)}(\mu_{\theta})(\varphi)] \leq C\|\varphi\|_{\infty}\epsilon \widetilde{\mu_{\theta}}^+(1).
$$
Thus we have shown:
\begin{equation}
\|\widetilde{G}_{\theta}^{(n)}(\mu_{\theta},\widetilde{\mu_{\theta}})(\varphi) - \widetilde{G}_{\theta,\epsilon}^{(n)}(\mu_{\theta},\widetilde{\mu_{\theta}})(\varphi)\|\leq
C\epsilon[\widetilde{\mu_{\theta}}^+(1) + \widetilde{\mu_{\theta}}^-(1)] = C\epsilon\|\widetilde{\mu_{\theta}}\|.\label{eq:g_cont_ineq}
\end{equation}

\textbf{Control of} $\widetilde{H}_{\theta}^{(n)}(\mu_{\theta})(\varphi)-\widetilde{H}_{\theta,\epsilon}^{(n)}(\mu_{\theta})(\varphi)$. We have
\begin{equation}
\widetilde{H}_{\theta}^{(n)}(\mu_{\theta})(\varphi)-\widetilde{H}_{\theta,\epsilon}^{(n)}(\mu_{\theta})(\varphi) = \bigg[\frac{\mu_{\theta} \widetilde{R}_{n,\theta}(\varphi)}{\mu_{\theta} R_{n,\theta}(1)} - 
\frac{\mu_{\theta} \widetilde{R}_{n,\theta,\epsilon}(\varphi)}{\mu_{\theta} R_{n,\theta,\epsilon}(1)}
\bigg] + \bigg[
\frac{\mu_{\theta} \widetilde{R}_{n,\theta,\epsilon}(1) F_{\theta,\epsilon}^{(n)}(\mu_{\theta})(\varphi)}{\mu_{\theta} R_{n,\theta,\epsilon}(1)} - 
\frac{\mu_{\theta} \widetilde{R}_{n,\theta}(1) F_{\theta}^{(n)}(\mu_{\theta})(\varphi)}{\mu_{\theta} R_{n,\theta}(1)}
\bigg]\label{eq:h_main_decomp}.
\end{equation}
We start with the first bracket on the R.H.S.~of \eqref{eq:h_main_decomp}. We first note that
\begin{equation}
\widetilde{R}_{n,\theta}(\varphi)(x)-\widetilde{R}_{n,\theta,\epsilon}(\varphi)(x) = 
\int f_{\theta}(x'|x)\varphi(x')[\nabla g_{\theta}(y_n|x')-\nabla g_{\theta,\epsilon}(y_n|x')] dx' \leq C \|\varphi\|_{\infty}\epsilon
\label{eq:rtilde_ineq}
\end{equation}
where we have applied \eqref{eq:grad_g_cont}. Then we have
$$
\frac{\mu_{\theta} \widetilde{R}_{n,\theta}(\varphi)}{\mu_{\theta} R_{n,\theta}(1)} - 
\frac{\mu_{\theta} \widetilde{R}_{n,\theta,\epsilon}(\varphi)}{\mu_{\theta} R_{n,\theta,\epsilon}(1)} = 
\frac{\mu_{\theta} \widetilde{R}_{n,\theta}(\varphi)-\mu_{\theta}\widetilde{R}_{n,\theta,\epsilon}(\varphi)}{\mu_{\theta} R_{n,\theta}(1)}
+ \mu_{\theta} \widetilde{R}_{n,\theta,\epsilon}(\varphi)
\frac{\mu_{\theta} R_{n,\theta,\epsilon}(1)-\mu_{\theta} R_{n,\theta}(1)}{\mu_{\theta} R_{n,\theta,\epsilon}(1)\mu_{\theta} R_{n,\theta}(1)}.
$$
By using \eqref{eq:rtilde_ineq} on the first term on the R.H.S.~of the above equation and by using \eqref{eq:r_ineq} in the numerator
for the second, along with \eqref{eq:r_minor} in the denominator, we have
$$
\bigg|\frac{\mu_{\theta} \widetilde{R}_{n,\theta}(\varphi)}{\mu_{\theta} R_{n,\theta}(1)} - 
\frac{\mu_{\theta} \widetilde{R}_{n,\theta,\epsilon}(\varphi)}{\mu_{\theta} R_{n,\theta,\epsilon}(1)}\bigg| \leq
C\epsilon [\|\varphi\|_{\infty} + |\mu_{\theta} \widetilde{R}_{n,\theta,\epsilon}(\varphi)|].
$$
Then as
\begin{equation}
\widetilde{R}_{n,\theta,\epsilon}(\varphi)(x) = \int \varphi(x')[\nabla g_{\theta,\epsilon}(y_n|x') f_{\theta}(x'|x)
-
g_{\theta,\epsilon}(y_n|x) \nabla f_{\theta}(x'|x)]dx' \leq C \|\varphi\|_{\infty}\int_{\mathsf{X}}dx' \leq C \|\varphi\|_{\infty}
\label{eq:bound_r_tilde}
\end{equation}
where the compactness of $\mathsf{X}$ and (A\ref{hyp:like_grad_bound}) have been used, we have the upper-bound
\begin{equation}
\bigg|\frac{\mu_{\theta} \widetilde{R}_{n,\theta}(\varphi)}{\mu_{\theta} R_{n,\theta}(1)} - 
\frac{\mu_{\theta} \widetilde{R}_{n,\theta,\epsilon}(\varphi)}{\mu_{\theta} R_{n,\theta,\epsilon}(1)}\bigg|
\leq C\|\varphi\|_{\infty}\epsilon.
\label{eq:h_first_part_ineq}
\end{equation}
Moving onto the second bracket on the R.H.S.~of \eqref{eq:h_main_decomp}, this is equal to
$$
\bigg[\frac{\mu_{\theta} \widetilde{R}_{n,\theta,\epsilon}(1) }{\mu_{\theta} R_{n,\theta,\epsilon}(1)}
-
\frac{\mu_{\theta} \widetilde{R}_{n,\theta}(1) }{\mu_{\theta} R_{n,\theta}(1)}
\bigg]
F_{\theta,\epsilon}^{(n)}(\mu_{\theta})(\varphi) +
\frac{\mu_{\theta} \widetilde{R}_{n,\theta}(1)}{\mu_{\theta} R_{n,\theta}(1)}
[F_{\theta,\epsilon}^{(n)}(\mu_{\theta})(\varphi)-F_{\theta}^{(n)}(\mu_{\theta})(\varphi)]
$$
By using the inequality \eqref{eq:h_first_part_ineq}, we have
$$
\bigg[\frac{\mu_{\theta} \widetilde{R}_{n,\theta,\epsilon}(1) }{\mu_{\theta} R_{n,\theta,\epsilon}(1)}
-
\frac{\mu_{\theta} \widetilde{R}_{n,\theta}(1) }{\mu_{\theta} R_{n,\theta}(1)}
\bigg]
F_{\theta,\epsilon}^{(n)}(\mu_{\theta})(\varphi)
\leq C\epsilon |F_{\theta,\epsilon}^{(n)}(\mu_{\theta})(\varphi)| \leq C\|\varphi\|_{\infty}\epsilon.
$$
Using Lemma \ref{lem:abc_perturbation_filter} and in addition using 
\eqref{eq:r_minor} in the denominator and \eqref{eq:bound_r_tilde} in the numerator we have
$$
\frac{\mu_{\theta} \widetilde{R}_{n,\theta}(1)}{\mu_{\theta} R_{n,\theta}(1)}
[F_{\theta,\epsilon}^{(n)}(\mu_{\theta})(\varphi)-F_{\theta}^{(n)}(\mu_{\theta})(\varphi)]
\leq C\|\varphi\|_{\infty}\epsilon
$$
where $C$ does not depend upon $\mu_{\theta}$ and $\epsilon$. Thus we have established that
\begin{equation}
\frac{\mu_{\theta} \widetilde{R}_{n,\theta,\epsilon}(1) F_{\theta,\epsilon}^{(n)}(\mu_{\theta})(\varphi)}{\mu_{\theta} R_{n,\theta,\epsilon}(1)} - 
\frac{\mu_{\theta} \widetilde{R}_{n,\theta}(1) F_{\theta}^{(n)}(\mu_{\theta})(\varphi)}{\mu_{\theta} R_{n,\theta}(1)}
\leq C\|\varphi\|_{\infty}\epsilon
\label{eq:h_second_part_ineq}.
\end{equation}
One can put together the results of \eqref{eq:h_first_part_ineq} and
\eqref{eq:h_second_part_ineq} and establish that
\begin{equation}
|\widetilde{H}_{\theta}^{(n)}(\mu_{\theta})(\varphi)-\widetilde{H}_{\theta,\epsilon}^{(n)}(\mu_{\theta})(\varphi)|
\leq C\|\varphi\|_{\infty}\epsilon
\label{eq:h_control}.
\end{equation}
On combining the results \eqref{eq:g_cont_ineq} and \eqref{eq:h_control} and noting \eqref{eq:h_main_decomp} we conclude the proof.
\end{proof}

\begin{lem}\label{lem:abc_perturbation_filter}
Assume (A1-3). Then there exist a $C<+\infty$ such that for any $n\geq 1$, $\mu_{\theta}\in\mathcal{P}(\mathsf{X})$, $\epsilon>0$,
$\theta\in\Theta$:
$$
\|F_{\theta}^{(n)}(\mu_{\theta})  - F_{\theta,\epsilon}^{(n)}(\mu_{\theta})\|\leq C\epsilon.
$$
\end{lem}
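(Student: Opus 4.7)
The plan is to prove this by a direct comparison of the numerator and denominator of the ratio defining $F_{\theta}^{(n)}$ and $F_{\theta,\epsilon}^{(n)}$, using the standard decomposition
\[
\frac{a}{b}-\frac{a'}{b'} \;=\; \frac{a-a'}{b} \;+\; \frac{a'(b'-b)}{b\,b'}
\]
with $a=\mu_{\theta}R_{n,\theta}(\varphi)$, $a'=\mu_{\theta}R_{n,\theta,\epsilon}(\varphi)$, $b=\mu_{\theta}R_{n,\theta}(1)$, $b'=\mu_{\theta}R_{n,\theta,\epsilon}(1)$, applied to an arbitrary $\varphi\in\mathcal{B}_b(\mathsf{X})$.

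The first step is to establish the uniform pointwise bound
\[
\sup_{x\in\mathsf{X}} \bigl| R_{n,\theta}(\varphi)(x) - R_{n,\theta,\epsilon}(\varphi)(x) \bigr| \;\leq\; C\epsilon\,\|\varphi\|_{\infty}.
\]
This follows because the integrand $\varphi(x') f_{\theta}(x'|x)\bigl[g_{\theta}(y_n|x')-g_{\theta,\epsilon}(y_n|x')\bigr]$ is controlled by combining: (i) the Lipschitz continuity of $g_{\theta}$ in $y$ from (A\ref{hyp:like_cont}), which yields $\sup_{x}|g_{\theta}(y_n|x)-g_{\theta,\epsilon}(y_n|x)|\leq C\epsilon$ (this is exactly \eqref{eq:g_continuity_epsilon}); (ii) boundedness of $f_{\theta}$ from (A\ref{hyp:like_bound}); and (iii) compactness of $\mathsf{X}$, so that the $x'$-integral contributes a finite constant. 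Specialising to $\varphi\equiv 1$ gives $|b-b'|\leq C\epsilon$, uniformly in $\mu_{\theta}$.

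Next, I would bound the denominators away from zero. By (A\ref{hyp:like_bound}) we have $g_{\theta}\geq\underline{C}$ and $f_{\theta}\geq\underline{C}$, and an identical lower bound propagates to $g_{\theta,\epsilon}$ since it is a weighted average of $g_{\theta}$-values. Hence $b,b'\geq \underline{C}^{\,2}$. Similarly $|a'|\leq C\|\varphi\|_{\infty}$ by boundedness. Substituting these ingredients into the decomposition yields
\[
\bigl|F_{\theta}^{(n)}(\mu_{\theta})(\varphi)-F_{\theta,\epsilon}^{(n)}(\mu_{\theta})(\varphi)\bigr| \;\leq\; \frac{C\epsilon\|\varphi\|_{\infty}}{\underline{C}^{\,2}} + \frac{C\|\varphi\|_{\infty}\cdot C\epsilon}{\underline{C}^{\,4}} \;\leq\; C\epsilon\|\varphi\|_{\infty},
\]
with a constant independent of $\mu_{\theta}$, $n$ and $\epsilon$. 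Taking the supremum over $\varphi$ with $\|\varphi\|_{\infty}\leq 1$ (or equivalently $\|\varphi\|_{\infty}\leq 1/2$ and multiplying by $2$ for the total variation convention) delivers the claimed bound $\|F_{\theta}^{(n)}(\mu_{\theta})-F_{\theta,\epsilon}^{(n)}(\mu_{\theta})\|\leq C\epsilon$.

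This proof is essentially mechanical; there is no real obstacle because the result concerns only a single step $n$ (not an iterated filter), so no stability/forgetting machinery is needed. The only point requiring slight care is ensuring that the constants in the numerator bound and denominator lower bound are genuinely independent of $\mu_{\theta}\in\mathcal{P}(\mathsf{X})$ and of $n$, which is why the uniform-in-$x$ bound \eqref{eq:g_continuity_epsilon} (rather than an $\mu_\theta$-dependent bound) is critical at the first step.
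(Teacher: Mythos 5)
Your proof is correct and follows essentially the same route as the paper: the identity $\frac{a}{b}-\frac{a'}{b'}=\frac{a-a'}{b}+\frac{a'(b'-b)}{bb'}$ applied to $\mu_{\theta}R_{n,\theta}(\varphi)/\mu_{\theta}R_{n,\theta}(1)$, the uniform bound $\sup_x|R_{n,\theta}(\varphi)(x)-R_{n,\theta,\epsilon}(\varphi)(x)|\leq C\epsilon\|\varphi\|_{\infty}$ derived from (A1) via \eqref{eq:g_continuity_epsilon}, and a lower bound on the normalising constants from (A3). The only cosmetic difference is that you bound the denominators directly by $\underline{C}^{\,2}$ (times the Lebesgue measure of the compact set $\mathsf{X}$), whereas the paper invokes the minorisation \eqref{eq:r_minor}; the two are equivalent here.
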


\begin{proof}
For $\varphi\in\mathcal{B}_b(\varphi)$
$$
F_{\theta}^{(n)}(\mu_{\theta})(\varphi)  - F_{\theta,\epsilon}^{(n)}(\mu_{\theta})(\varphi) = 
\frac{\mu_{\theta} R_{n,\theta}(\varphi) - \mu_{\theta} R_{n,\theta,\epsilon}(\varphi)}{\mu_{\theta} R_{n,\theta}(1)} 
+ \mu_{\theta} R_{n,\theta,\epsilon}(\varphi)\bigg[
\frac{\mu_{\theta} R_{n,\theta,\epsilon}(1)-\mu_{\theta} R_{n,\theta}(1)}{\mu_{\theta} R_{n,\theta,\epsilon}(1)
\mu_{\theta} R_{n,\theta}(1)}
\bigg].
$$
Then by applying \eqref{eq:r_ineq} on both terms on the R.H.S.~we have the upper-bound
$$
\frac{C\|\varphi\|_{\infty}\epsilon}{\mu_{\theta} R_{n,\theta}(1)}.
$$
One can conclude by using the inequality \eqref{eq:r_minor} for $R_{n,\theta}(1)(\cdot)$.
\end{proof}

\begin{lem}\label{lem:filter_deriv_upper}
Assume (A1-5). Then there exist a $C<+\infty$ such that for any $n\geq 1$, $\mu_{\theta}\in\mathcal{P}(\mathsf{X})$, $\widetilde{\mu_{\theta}}\in\mathcal{M}(\mathsf{X})$, $\epsilon>0$,
$\theta\in\Theta$:
$$
\|\widetilde{F}_{\theta}^{n}(\mu_{\theta},\widetilde{\mu_{\theta}})\|\vee
\|\widetilde{F}_{\theta,\epsilon}^{n}(\mu_{\theta},\widetilde{\mu_{\theta}})\|
\leq C(1 + \|\widetilde{\mu_{\theta}}\|).
$$
\end{lem}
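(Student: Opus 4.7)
The plan is to exploit the telescoping representation from \cite[Lemma 6.4]{vlad} which is already recalled in the proof of Theorem \ref{theo:filt_deriv_bias}, namely
\[
\widetilde{F}_{\theta}^n(\mu_{\theta},\widetilde{\mu_{\theta}})(\varphi) = \widetilde{G}_{\theta}^n(\mu_{\theta},\widetilde{\mu_{\theta}})(\varphi) + \sum_{q=1}^n \widetilde{G}_{\theta}^{q+1,n}\bigl(F_{\theta}^q(\mu_{\theta}),\widetilde{H}_{\theta}^q(\mu_{\theta})\bigr)(\varphi),
\]
for $\varphi\in\mathcal{B}_b(\mathsf{X})$. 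The idea is to bound each piece using the geometric stability of the $\widetilde{G}$ operator together with the uniform boundedness of $\|\widetilde{H}_{\theta}^q(\mu_{\theta})\|$.

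First I would handle the leading term $\widetilde{G}_{\theta}^n(\mu_{\theta},\widetilde{\mu_{\theta}})$. Using the Hahn--Jordan decomposition $\widetilde{\mu_{\theta}}=\widetilde{\mu_{\theta}}^+-\widetilde{\mu_{\theta}}^-$ (as in the proof of Lemma \ref{lem:abc_filter_deriv_perturb}), one rewrites
\[
\widetilde{G}_{\theta}^n(\mu_{\theta},\widetilde{\mu_{\theta}})(\varphi)
= \frac{\widetilde{\mu_{\theta}}^+R_{1,n,\theta}(1)}{\mu_{\theta} R_{1,n,\theta}(1)}\bigl[F_{\theta}^n(\widetilde{\bar\mu}_{\theta}^+)(\varphi)-F_{\theta}^n(\mu_{\theta})(\varphi)\bigr]
-\frac{\widetilde{\mu_{\theta}}^-R_{1,n,\theta}(1)}{\mu_{\theta} R_{1,n,\theta}(1)}\bigl[F_{\theta}^n(\widetilde{\bar\mu}_{\theta}^-)(\varphi)-F_{\theta}^n(\mu_{\theta})(\varphi)\bigr].
\]
The scalar prefactors are bounded by $C\widetilde{\mu_{\theta}}^{\pm}(1)$ thanks to the minorization \eqref{eq:r_minor}, and by the (time-uniform) filter stability of \cite[Theorem 3.1]{vlad} each bracket is at most $C\rho^n\|\varphi\|_{\infty}$ for some $\rho\in(0,1)$. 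This yields $\|\widetilde{G}_{\theta}^n(\mu_{\theta},\widetilde{\mu_{\theta}})\| \le C\rho^n\|\widetilde{\mu_{\theta}}\|$, which is exactly \cite[Lemma 6.7]{vlad} eq.~(43) applied with the initial signed measure $\widetilde{\mu_{\theta}}$.

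Next, for each summand $\widetilde{G}_{\theta}^{q+1,n}(F_{\theta}^q(\mu_{\theta}),\widetilde{H}_{\theta}^q(\mu_{\theta}))$, I would apply \cite[Lemma 6.7]{vlad} eq.~(43) again to obtain
\[
\bigl|\widetilde{G}_{\theta}^{q+1,n}(F_{\theta}^q(\mu_{\theta}),\widetilde{H}_{\theta}^q(\mu_{\theta}))(\varphi)\bigr|
\leq C\|\varphi\|_{\infty}\rho^{n-q}\|\widetilde{H}_{\theta}^q(\mu_{\theta})\|,
\]
and then invoke \cite[Lemma 6.8]{vlad} to get $\|\widetilde{H}_{\theta}^q(\mu_{\theta})\|\leq C$ uniformly in $q$ and $\mu_{\theta}$ (using (A3) and (A5) to upper/lower-bound $R_{n,\theta}(1)$ and $\widetilde R_{n,\theta}(1)$ pointwise in $\mathsf{X}$, via the same manipulations as \eqref{eq:bound_r_tilde} and \eqref{eq:r_minor}). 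Putting the two estimates together,
\[
\|\widetilde{F}_{\theta}^n(\mu_{\theta},\widetilde{\mu_{\theta}})\|
\leq C\rho^n\|\widetilde{\mu_{\theta}}\| + C\sum_{q=1}^n \rho^{n-q}
\leq C\bigl(1+\|\widetilde{\mu_{\theta}}\|\bigr),
\]
which is the desired bound.

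For the ABC filter derivative $\widetilde{F}_{\theta,\epsilon}^n$, the argument is identical with $R_{n,\theta}$ replaced by $R_{n,\theta,\epsilon}$. The only point needing verification is that the stability constants in \cite[Lemmata 6.7, 6.8]{vlad} and \cite[Theorem 3.1]{vlad} for the ABC model are uniform in $\epsilon$. This follows because (A1--3), (A4--5) imply that $g_{\theta,\epsilon}$ inherits uniform upper/lower bounds (it is a normalised average of $g_{\theta}$ over $A_{\epsilon,y}$) and $\nabla g_{\theta,\epsilon}$ is likewise uniformly bounded, so the ABC kernel $R_{n,\theta,\epsilon}$ satisfies the same minorization/majorization inequalities as $R_{n,\theta}$. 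The main (mild) obstacle in executing the plan is this uniformity check: all the constants that appear through \cite{vlad} must be shown to depend only on $\underline C, \overline C$ and not on $\epsilon$, which follows directly from our assumptions.
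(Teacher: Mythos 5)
Your proposal is correct and follows essentially the same route as the paper's proof: the decomposition of \cite[Lemma 6.4]{vlad}, a geometric bound $C\rho^{n-q}$ on each $\widetilde{G}$ term (the paper cites \cite[Lemma 6.6]{vlad} where you cite Lemma 6.7 eq.~(43), but the content is the same), the uniform bound $\|\widetilde{H}_{\theta}^q(\mu_{\theta})\|\leq C$ from \cite[Lemma 6.8]{vlad}, and summation of the geometric series. Your extra detail on the Hahn--Jordan treatment of the leading term and the $\epsilon$-uniformity of the constants is consistent with, though more explicit than, the paper's one-line remark that the ABC case follows by the same calculations.
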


\begin{proof}
We will consider only $F_{\theta}^{n}(\mu_{\theta},\widetilde{\mu_{\theta}})$ as the ABC filter derivative will follow similar calculations, for any $\epsilon>0$ (with upper-bounds that are independent of $\epsilon$).
By \cite[Lemma 6.4]{vlad} we have for $\varphi\in\mathcal{B}_b(\mathsf{X})$
$$
\widetilde{F}^n_{\theta}(\mu_{\theta},\widetilde{\mu_{\theta}})(\varphi) = \widetilde{G}^n_{\theta}(\mu_{\theta},\widetilde{\mu_{\theta}})(\varphi) + 
\sum_{p=1}^n \widetilde{G}^{n-p}_{\theta}(F_{\theta}^p(\mu_{\theta}),\widetilde{H}_{\theta}^p(\mu_{\theta}))(\varphi).
$$
By \cite[Lemma 6.6]{vlad} we have the upper-bound
$$
\|\widetilde{F}^n_{\theta}(\mu_{\theta},\widetilde{\mu_{\theta}})\| \leq C \Big(\rho^n\|\widetilde{\mu_{\theta}}\|
+\sum_{p=1}^n \rho^{n-p}\|\widetilde{H}_{\theta}^p(\mu_{\theta})\|
\Big)
$$
with $\rho\in(0,1)$.
Then by \cite[Lemma 6.8]{vlad}, it follows that
$$
\|\widetilde{F}^n_{\theta}(\mu_{\theta},\widetilde{\mu_{\theta}})\| \leq C \Big(\rho^n\|\widetilde{\mu_{\theta}}\|
+\sum_{p=1}^n \rho^{n-p}
\Big)
$$
from which one concludes.
\end{proof}

\begin{rem}\label{rem:filt_deriv_contr}
Using the proof above, one can also show that there exist a $C<+\infty$ such that for
any $n\geq 1$, $\mu_{\theta}\in\mathcal{P}(\mathsf{X})$, $\widetilde{\mu_{\theta}}\in\mathcal{M}(\mathsf{X})$, $\epsilon>0$,
$\theta\in\Theta$
$$
\|\widetilde{F}_{\theta}^{(n)}(\mu_{\theta},\widetilde{\mu_{\theta}})\|\vee
\|\widetilde{F}_{\theta,\epsilon}^{(n)}(\mu_{\theta},\widetilde{\mu_{\theta}})\|
\leq C(1 + \|\widetilde{\mu_{\theta}}\|).
$$
\end{rem}

\end{document}